\renewcommand{\baselinestretch}{2}
\renewcommand{\d}{\mathrm{d}}
\newcommand{\ket}[1]{\left|#1\right>}
\newtheorem{theorem}{Theorem}
\DeclareMathOperator{\Tr}{Tr}
\renewcommand{\Re}{\mathrm{Re}\,}
\renewcommand{\Im}{\mathrm{Im}\,}
\begin{document}
\pagestyle{empty}
%auto-ignore
%Abstract Page 

\hbox{\ }

\renewcommand{\baselinestretch}{1}
\small \normalsize

\begin{center}
\large{{ABSTRACT}} 

\vspace{3em} 

\end{center}
\hspace{-.15in}
\begin{tabular}{ll}
Title of dissertation:    & {\large SIGN PROBLEMS IN QUANTUM}\\ &{\large FIELD THEORY: CLASSICAL}\\ &{\large AND QUANTUM APPROACHES }\\
\ \\
&                          {\large  Scott Lawrence}\\
&                       {\large Doctor of Philosophy, 2020} \\
\ \\
Dissertation directed by: & {\large  Professor Paulo F.\ Bedaque} \\
&  				{\large	 Department of Physics } \\
\end{tabular}

\vspace{3em}

\renewcommand{\baselinestretch}{2}
\large \normalsize

Monte Carlo calculations in the framework of lattice field theory provide nonperturbative access to the equilibrium physics of quantum fields. When applied to certain fermionic systems, or to the calculation of out-of-equilibrium physics, these methods encounter the so-called sign problem, and computational resource requirements become impractically large. These difficulties prevent the calculation from first principles of the equation of state of quantum chromodynamics, as well as the computation of transport coefficients in quantum field theories, among other things.

This thesis details two methods for mitigating or avoiding the sign problem. First, via the complexification of the field variables and the application of Cauchy's integral theorem, the difficulty of the sign problem can be changed. This requires searching for a suitable contour of integration. Several methods of finding such a contour are discussed, as well as the procedure for integrating on it. Two notable examples are highlighted: in one case, a contour exists which entirely removes the sign problem, and in another, there is provably no contour available to improve the sign problem by more than a (parametrically) small amount.

As an alternative, physical simulations can be performed with the aid of a quantum computer. The formal elements underlying a quantum computation --- that is, a Hilbert space, unitary operators acting on it, and Hermitian observables to be measured --- can be matched to those of a quantum field theory. In this way an error-corrected quantum computer may be made to serve as a well controlled laboratory. Precise algorithms for this task are presented, specifically in the context of quantum chromodynamics.

 %(must be first, required, non-numbered)
%auto-ignore
%Titlepage

\thispagestyle{empty}
\hbox{\ }
\vspace{1in}
\renewcommand{\baselinestretch}{1}
\small\normalsize
\begin{center}

\large{SIGN PROBLEMS IN QUANTUM FIELD THEORY:}\\\large{CLASSICAL AND QUANTUM APPROACHES}\\
\ \\
\ \\
\large{by} \\
\ \\
\large{Scott Lawrence}
\ \\
\ \\
\ \\
\ \\
\normalsize
Dissertation submitted to the Faculty of the Graduate School of the \\
University of Maryland, College Park in partial fulfillment \\
of the requirements for the degree of \\
Doctor of Philosophy \\
2020
\end{center}

\vspace{7.5em}

\noindent Advisory Committee: \\
Professor Andrew Baden\\
Professor Paulo F.\ Bedaque, Chair/Advisor \\
Professor Zackaria Chacko\\
Professor Massimo Ricotti\\
Professor Raman Sundrum
 %(must follow Abstract, required, non-numbered)
%\include{Copyright} %(highly recommended, non-numbered)

%Pages from this point start at lower-case Roman number ii)
\pagestyle{plain} \pagenumbering{roman} \setcounter{page}{2}
%\addcontentsline{toc}{chapter}{Preface}
%\include{Preface}  %(if present, start at lower-case Roman number ii)
%\addcontentsline{toc}{chapter}{Foreword}
%\include{Foreword} %(if present, lower-case Roman)
%\addcontentsline{toc}{chapter}{Dedication}
%\include{Dedication} %(if present, lower-case Roman)
\addcontentsline{toc}{chapter}{Acknowledgements}
%auto-ignore
%Acknowledgments

\renewcommand{\baselinestretch}{2}
\small\normalsize
\hbox{\ }
 
\vspace{-.65in}

\begin{center}
\large{Acknowledgments} 
\end{center} 

\vspace{1ex}

The work described in this thesis was performed in collaboration with others at the University of Maryland and the George Washington University: Andrei Alexandru, Paulo Bedaque, Siddhartha Harmalkar, Neill Warrington, and Yukari Yamauchi. Most especially, credit is due to Henry Lamm, who aside from being an excellent collaborator, is unfailing generous with his time.

That this thesis exists at all is down to the aid and encouragement of a long series of mentors, who helped me along throughout undergraduate and graduate school. In chronological order, I am deeply indebted to: Drew Baden, Raman Sundrum, Victor Yakovenko, Cole Miller, Tom Cohen, and of course my Ph.D.\ advisor Paulo Bedaque.

Lastly, for their ceaseless encouragement throughout this decade, I am grateful to Brian McPeak, Ophir Lifshitz, and Yukari Yamauchi.

 %(if present, lower-case Roman)
    \cleardoublepage
    \phantomsection
    \addcontentsline{toc}{chapter}{Table of Contents}
    \renewcommand{\contentsname}{Table of Contents}
\renewcommand{\baselinestretch}{0.98}
\small\normalsize
\tableofcontents %(required, lower-case Roman)
\renewcommand{\baselinestretch}{1}
\newpage
%\listoftables %(if present, lower-case Roman)
%\newpage
\listoffigures %(if present, lower-case Roman)
\newpage
% LIST OF ABBREVIATIONS
%\addcontentsline{toc}{chapter}{List of Abbreviations}
%\include{Abbreviations-supertabular}

\newpage
\setlength{\parskip}{0em}
\renewcommand{\baselinestretch}{2}
\small\normalsize

%Pages from this point start at Arabic numeral 1
\setcounter{page}{1}
\pagenumbering{arabic}
%auto-ignore
\chapter{Introduction}

The majority of observed non-gravitational phenomena in laboratories and the universe are believed to be described by the quantum field theory of the standard model. The standard model can be crudely divided into two sectors, one governed by the electroweak force and the other by the strong force. The electroweak force posesses a small expansion parameter. As a result, the physics of electroweak phenomena can be computed in perturbation theory. In the other sector, governed by the strong force, there is no small expansion parameter available.

The physics of the strong force governs phenomena ranging from heavy ion collisions to the structure of neutron stars. Furthermore, various extensions of the standard model (of relevance, for instance, in the search for dark matter~\cite{degrand2018composite}) also involve no small expansion parameter; indeed there is no \emph{a priori} reason why we should expand physics beyond the standard model to be perturbative. Computational methods capable of probing non-perturbative physics are therefore critical not only for understanding the physics of the standard model, but also for designing tests for beyond-standard-model physics. 

The main nonperturbative tool for field theory is the lattice. Whereas a quantum field theory, as usually understood, involves arbitrarily small distance scales and therefore arbitrarily high momenta (and with them, a host of technical complications), the lattice regularization removes momenta above a certain cutoff, reducing the physics of field theory to the task of evaluating a finite-dimensional (albeit the number of dimensions is large) integral. The lattice regularization, in addition to being a central subject of the formal mathematical study of field theories~\cite{douglas2004report}, is therefore a prime candidate for \emph{numerical} study, the subject of this thesis.

The problems of lattice field theory are problems of evaluating high-dimensional integrals. Monte Carlo simulations of lattice field theory provide the main tool by which nonperturbative physics may be accessed in a practical way~\cite{gattringer2009quantum,montvay1997quantum}. In certain regimes, known algorithms for simulating lattice field theory encounter severe computation obstacles. The most prominent of these is the \emph{fermion sign problem}: when simulating field theories with a finite density of fermions, evaluating the integral by usual methods requires resolving fine cancellations of large terms. This prevents calculations, for example, of the equation of state of a neutron star. Another sign problem occurs when applying lattice methods to determine transport coefficients of quantum fluids, restricting the ability to compare the experimentally observed behavior of heavy ion collisions to theoretical calculations.

This thesis concerns the task of performing nonperturbative computations in regimes where Monte Carlo methods encounter a sign problem. One approach is to re-structure the integral in such a way as to alleviate the sign problem. We will see in Chapter~\ref{ch:complexification} that the methods of complex analysis --- in particular, an application of a multidimensional generalization of Cauchy's integral theorem --- may be used to construct such algorithms. The rapid progress being made on the construction of practical quantum computers~\cite{monroe2014large,preskill2018quantum} suggests another promising avenue of attack. As a quantum computer is, in fact, a quantum-mechanical system, it is an ideal device to simulate other quantum-mechanical systems~\cite{lloyd1996universal,feynman1999simulating}. This suggests the possibility of using a quantum computer as a tool for simulating lattice quantum field theory~\cite{Jordan:2011ci}.

This thesis is structured in three broad parts. Chapters~\ref{ch:lattice} and \ref{ch:difficulties} introduce lattice field theory, the standard context in which nonperturbative quantum field theory computations are performed, and describe the ways in which sign problems obstruct calculations performed at finite fermion density or in real time. Chapter~\ref{ch:complexification} describes several closely related methods for thwarting the sign problem on classical computers, based on the complexification of the path integral. These methods are applicable to a wide range of theories, but the Thirring model (in both $1+1$ and $2+1$ dimensions) is investigated in detail. Finally in Chapter~\ref{ch:quantum}, quantum computing is introduced as a tool for the simulation of field theories, which entirely circumvents the sign problem. The necessary algorithms for simulating QCD as a lattice gauge theory are detailed, and the costs (in terms of quantum gates and qubits) are estimated.

%auto-ignore
\chapter{Two Views of Lattice Field Theory}\label{ch:lattice}

\begin{comment}
OUTLINE

Introduce a continuum hamiltonian

lattice regularize

Introduce Hamiltonian field theory.

There are many Hamiltonian theories!
 - fermions
 - gauge theories

Introduce (QM) path integral

Discuss how to get path integral for field theory
 (cite peskin, but say it's ill-defined for interacting case)

Show how to get path integral for lattice field theory

Blabber a bit about lorentz invariance

Move to gauge theories

Write the QCD Hamiltonian both ways

Transfer matrix stuff for gauge theories
\end{comment}

In this chapter we describe lattice-regularized field theory from two perspectives: first the Hamiltonian formulation, in which time is continuous and the field theory is a regular quantum mechanical system, and second from a lattice action, where space and time are both discretized, as is appropriate for a relativistic theory.

\section{Hamiltonian Lattice Field Theory}

We begin our overview of the Hamiltonian formulation of lattice field theory with the example of a noninteracting real scalar field~\cite{Peskin:1995ev}. The Hamiltonian of a noninteracting scalar field in the continuum may be written in momentum space as
\begin{equation}\label{eq:scalar-continuum-momentum}
H = \sum_k \frac 1 2 \Pi_k^2 + \frac{\sqrt{m^2 + k^2}}{2}\Phi_k^2
\text,
\end{equation}
where $(\Phi_k, \Pi_k)$ are the position and conjugate momentum coordinates of harmonic oscillators indexed by momentum $k$.

This form makes clear the noninteracting structure of the theory. Each momentum mode can contain any non-negative number of particles, and in a mode with momentum $k$, each particle contributes an energy of $\sqrt{m^2 + k^2}$, where $m$ is the mass of the field. The possible momenta in a box of side length $L$ are
$k = \frac{2 \pi n}{L}$ where $n = 0,1,2,\ldots$. The same theory may also be rewritten in position space, yielding the Hamiltonian
\begin{equation}\label{eq:scalar-continuum}
H = \int \d^d x\;
\frac 1 2 \pi(x)^2 + \frac 1 2 (\partial_i \phi(x))(\partial_i \phi(x))+ \frac{m^2}{2} \phi(x)^2
\text,
\end{equation}
where the spatial index is summed over $i=1,2,3$.

This field theory is `free' in the sense that the partition function factorizes into a product of factors, each involving only one momentum mode. Introducing an interaction directly in the continuum field theory is technically difficult. In the example discussed here of a real scalar field, it is believed that no interacting theory can be constructed in three spatial dimensions~\cite{Wolff:2009ke}. Moreover, the continuum formulation, with an infinite number of degrees of freedom even in a finite-sized box, isn't amenable to numerical simulation.

For these reasons, we introduce the lattice regularization of the field theory. In the case of scalar field theory, this regularization is obtained by placing a cutoff on the momenta $k$ included in the sum of (\ref{eq:scalar-continuum-momentum}). Equivalently, and more conveniently, the integral in (\ref{eq:scalar-continuum}) is changed to a discrete sum over lattice sites, and the derivative is changed to a finite difference. The resulting lattice hamiltonian is
\begin{equation}\label{eq:scalar-lattice}
H = \sum_x
\left[
\frac 1 2 \pi_x^2
+
\frac{m^2}{2} \phi_x^2
\right]
+ 
\sum_{\left<xy\right>}
\frac 1 2 \left(\phi_x - \phi_y\right)^2
\text,
\end{equation}
where the second sum denotes the sum over all pairs of neighboring lattice sites.
It is this object (and others like it) that will serve as the starting point for numerical work. At this point, an interaction is introduced by adding a term proportional to $\phi(x)^4$ to the Hamiltonian\footnote{Any polynomial of even degree greater than $2$ will do.}.

The original field theory consisted of non-interacting harmonic oscillators, each associated to a separate momentum mode. Before introducing an interaction, the same decomposition could be done for the Hamiltonian (\ref{eq:scalar-lattice}), now with a finite number of momentum modes. However, it will be useful to take an alternate view of the system, where each \emph{position} is associated to a harmonic oscillator. Even in the absence of an interaction, these oscillators are coupled by the finite difference term. When the interaction is introduced, the oscillators become anharmonic. Allowing for an arbitrary potential $V(\phi)$, the final lattice hamiltonian is
\begin{equation}\label{eq:h-scalar-interacting}
H = \sum_x
\left[
\frac 1 2 \pi_x^2
+
V(\phi_x)
\right]
+ 
\sum_{\left<xy\right>}
\frac 1 2 \left(\phi_x - \phi_y\right)^2
\text.
\end{equation}
On a lattice with $N$ sites, the Hilbert space is the tensor product of $N$ copies of that of the harmonic oscillator. The operators $\phi_x$ and $\pi_x$ act on the portion of the Hilbert space associated to site $x$, and commute with $\phi_y$ and $\pi_y$ at any site $y \ne x$. Moreover, these operators can be written in terms of the raising and lowering operators associated to a site:
\begin{equation}
\phi_x = \sqrt{\frac{1}{2m}}\left(a^\dagger_x + a_x\right)
\;\text{ and }\;
\pi_x = i\sqrt{2m}\left(a^\dagger_x - a_x\right)
\text.
\end{equation}

\subsection{Fermionic Theories}

For a theory of lattice fermions, due to Pauli exclusion, at most one fermion will be permitted per degree of freedom. (There may be multiple degrees of freedom per lattice site; for example, a lattice site may be occupied by both a spin-up and a spin-down fermion.) The Hilbert space of one degree of freedom, therefore, is two-dimensional, and the Hilbert space of the full lattice is again the tensor product of $N$ copies of that local Hilbert space.

The two-dimensional fermionic Hilbert space associated to degree of freedom $i$ is acted on by creation and annihilation operators $a^\dagger_i$ and $a_i$, respectively. These operators obey the anticommutation relations
\begin{equation}
\{a_i, a_j\} = 0
\,\text{ and }\,
\{a^\dagger_i, a_i\} = \delta_{ij}
\text.
\end{equation}

Most theories of physical interest have multiple fermionic degrees of freedom per lattice site. Typically these degrees of freedom are related by some global symmetry.

A typical theory of lattice fermions is given by the Hamiltonian
\begin{equation}\label{eq:h-hubbard}
H =
- t \sum_{\left<xy\right>,s}
\left(
a^\dagger_{xs} a_{ys} + a^\dagger_{ys} a_{xs}
\right)
- U
\sum_x
a^\dagger_{x\downarrow} a_{x\downarrow}
a^\dagger_{x\uparrow} a_{x\uparrow}
\text,
\end{equation}
where each lattice site (indexed by $x$ or $y$) contains a spin-up and spin-down degree of freedom, and the first sum is taken over all pairs of neighboring lattice sites, and $s=\downarrow,\uparrow$.
This Hamiltonian describes the nonrelativistic fermions of the Hubbard model~\cite{hubbard1963electron}, a common target of numerical work~\cite{scalapino2007numerical}. The Hubbard model is often used as a test problem for sign-problem-mitigating techniques~\cite{Ulybyshev:2019hfm,Ulybyshev:2019fte}, although we will not discuss it in that capacity here.

Due to the nature of the dispersion relation of lattice fermions, the number of low-energy modes of such a Hamiltonian will generically be larger than naive counting would suggest~\cite{nielsen1981no}. Various methods for removing these modes exist~\cite{gattringer2009quantum,montvay1997quantum}; alternatively, one may simply accept that the theory being simulated has more particles than originally intended.

\section{Lattice Actions}

Quantum field theories can be described by a path integral; in this section we derive the lattice path integral. We begin by considering the thermal properties of a lattice field theory at temperature $T \equiv 1/\beta$, defined by the hermitian operator $\rho = e^{-\beta H}$, termed the \emph{density matrix}. The thermal partition function is given by $Z = \Tr\rho$, and thermal expectation values are given by various derivatives of $\log Z$. 

To obtain the path integral, we expand the trace by summing over all possible intermediate states:
\begin{equation}
Z = \Tr \left(e^{-\Delta t H}\right)^{N_t}
=
\sum_{\Psi_1,\Psi_2,\cdots}
\left<\Psi_1\right|e^{-\Delta t H}\left|\Psi_2\right>
\cdots
\left<\Psi_N\right|e^{-\Delta t H}\left|\Psi_1\right>
\text,
\end{equation}
where $\Delta_t N_t = \beta$, and use has been made of the completeness relation
$I = \sum_\Psi \left|\Psi\right>$. The operator $e^{-\Delta t H}$ is teremd the \emph{transfer matrix}.

In the case of the bosonic theory defined by Hamiltonian (\ref{eq:h-scalar-interacting}), an appropriate set of states is given by the simultaneous eigenstates of the field operators $\psi(x)$, denotes $\left|\psi\right>$. The resulting resolution of the identity is
\begin{equation}
I = \int \d^V\phi\; \left|\phi\right>\left<\phi\right|\text.
\end{equation}
It is natural to take $\Delta_t$ to be equal to $1$, and after approximation by the Suzuki-Trotter decomposition~\cite{suzuki1976generalized,trotter1959product}, the resulting lattice path integral is
\begin{equation}\label{eq:scalar-path-integral}
Z
=
\int \d^{\beta V}\phi\;
\exp 
\left\{
- \sum_{\left<xy\right>}
\frac{\left(\phi_x - \phi_y\right)^2}{2}
-
\sum_x
V(\phi_x)
\right\}
\text.
\end{equation}

The continuum limit of (\ref{eq:scalar-path-integral}) is approached by tuning the potential $V(\phi)$ such that correlation functions (e.g. $\left<\phi(x)\phi(0)\right>$) decay slowly with $\left|x\right|$. As mentioned previously, the process of taking a continuum limit will not be of much interest here; however it must be noted that the continuum limit obtained in this way need not be the same as that obtained by working with the Hamiltonian (\ref{eq:h-scalar-interacting}) directly. In order for the two methods to be equivalent, we must first take $\Delta t\rightarrow 0$ in the path integral (termed the Hamiltonian limit) and only then performing the tuning of the potential.

There is always some ambiguity in constructing the path integral: for instance, the Suzuki-Trotter decomposition is not unique, and there are many possible resolutions of the idenitty to be used. If we are interested in the Hamiltonian limit of the lattice theory, then this is not an issue at all. In fact, in that limit, the multiple lattice path integrals simply correspond to distinct classical algorithms for computing thermodynamic quantities. Difficulties can potentially arise when taking a continuum limit directly from the lattice theory, as there is generally no guarantee that all of the possible lattice path integrals will yield the same continuum limit.

\subsection{Fermionic Path Integral}
The fermionic path integral is derived in the same way as (\ref{eq:scalar-path-integral}) above, but with a particular choice of completeness relation. Consider first a single fermionic degree of freedom. Define the state $\left|0\right>$ to be the eigenstate of $a^\dagger a$ with eigenvalue $0$ --- this is the occupation number basis. We introduce the coherent state $\left|\psi\right>$ and its dual $\left<\psi\right|$, defined by
\begin{equation}
\left|\psi\right>\equiv
e^{-\psi a^\dagger} \left|0\right>
\,\text{ and }\,
\left<\psi\right|\equiv
\left<0\right| e^{-a \psi^\dagger}
\text,
\end{equation}
where $\psi$ and $\psi^\dagger$ are independent Grassmann variables (see Appendix~\ref{ap:grassmann}). With these definitions, a new completeness relation is available:
\begin{equation}\label{eq:grassmann-completeness}
I =
\int \d \psi^\dagger \d \psi\;
e^{-\psi^\dagger \psi}
\left|\psi\right>\left<\psi\right|
\text.
\end{equation}
For a lattice with multiple fermionic degrees of freedom, we introduce a pair of grassman variables $\psi_i$ and $\psi_i^\dagger$ for each. The completeness relation for the full lattice system is
\begin{equation}\label{eq:grassmann-completeness-2}
I = \int
\prod_i \left(\d \psi_i^\dagger
\d \psi_i
\right)
e^{- \sum_i \psi_i^\dagger \psi_i}
\left|\psi_1\cdots\psi_N\right>\left<\psi_1\cdots\psi_N\right|
\end{equation}
where the products and sums run over all $N$ fermionic degrees of freedom $i$.
At this point the derivation of the path integral proceeds just as for a scalar field theory, using the completeness relation of (\ref{eq:grassmann-completeness-2}).

\section{Gauge Theories}\label{sec:hamiltonian-gauge}

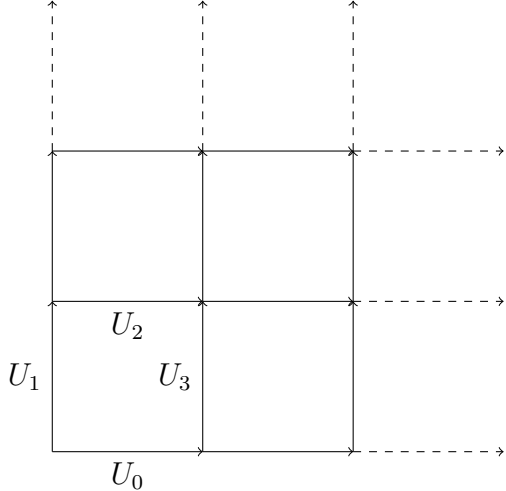
\begin{figure}
\centering
\begin{tikzpicture}
\draw[->] (0,0) -- node[below] {$U_0$} (2,0);
\draw[->] (0,0) -- node[left] {$U_1$} (0,2);
\draw[->] (0,2) -- node[below] {$U_2$} (2,2);
\draw[->] (2,0) -- node[left] {$U_3$} (2,2);
\draw[->] (2,0) -- (4,0);
\draw[dashed,->] (4,0) -- (6,0);
\draw[->] (2,2) -- (4,2);
\draw[dashed,->] (4,2) -- (6,2);
\draw[->] (4,0) -- (4,2);
\draw[->] (0,2) -- (0,4);
\draw[dashed,->] (0,4) -- (0,6);
\draw[->] (2,2) -- (2,4);
\draw[dashed,->] (2,4) -- (2,6);
\draw[->] (0,4) -- (2,4);
\draw[->] (2,4) -- (4,4);
\draw[->] (4,2) -- (4,4);
\draw[dashed,->] (4,4) -- (4,6);
\draw[dashed,->] (4,4) -- (6,4);
\end{tikzpicture}
\caption{\label{fig:gauge-lattice}The lattice associated to a gauge theory in two dimensions. Each link is an independent (commuting) degree of freedom. The lower-leftmost plaquette is given by $\Re \Tr U_3^\dagger U_2 U_1 U_0^\dagger$.}
\end{figure}

We now return to the Hamiltonian picture. Some Hamiltonians posess locally conserved charges. Perhaps the simplest example is $\mathbb Z_2$ gauge theory introduced by Wegner~\cite{Wegner:1984qt}. This theory (like other lattice gauge theories we will discuss) lives on the lattice shown in Figure~\ref{fig:gauge-lattice}; to each link $\ell$ is associated a two-dimensional Hilbert space acted on by local Pauli operators $\sigma_x(\ell)$,$\sigma_y(\ell)$,$\sigma_z(\ell)$. The operators at separate links are mutually commuting. The Hamiltonian of the theory is
\begin{equation}\label{eq:z2-hamiltonian}
H = \sum_{\ell} \sigma_x(\ell)+ \sum_{P} \sigma_z(P_1) \sigma_z(P_2) \sigma_z(P_3)\sigma_z(P_4)
\text,
\end{equation}
where the first sum is taken over all links in the lattice, and the second is taken over all `plaquettes' consisting of four links arranged in a square.

For each site $r$ of the lattice, this Hamiltonian commutes with the operator
\begin{equation}
G(r) = \prod_{\ell\text{ at }r}
\sigma_x(r)
\text,
\end{equation}
where the product is taken over all links which have an endpoint at $r$.
The action of this operator is called a gauge transformation; in the $Z$ basis, it has the effect of flipping any link in contact with $r$ and leaving all other links invariant.

The group of symmetries associated to gauge transformations, for this theory, is $(\mathbb Z_2)^{V}$, where $V$ is the number of sites on the lattice.

A local conservation law is in some sense more restrictive than a global symmetry. Consider a scattering experiment: we begin with a vacuum, introduce particles `at infinity' (at the boundary of a large box), and then observe what particles are measured at later times, again `at infinity'. This is typical of physical experiments, in which the experimentalist can only act on the boundary of the laboratory. Such experiments can introduce charged particles into the theory, and thus explore the sectors of Hilbert space labelled by different global charges; however, locally conserved charges cannot be introduced this way. Thus, only one sector of Hilbert space can be considered physically relevant.

The `physical sector' of Hilbert space is taken to be the space of vectors invariant under gauge transformations; that is, the set of vectors that are simultaneous eigenvectors of all $G(r)$, with eigenvalue $1$. The remainder of the full Hilbert space is physically irrelevant. We consider it to exist only for convenience in writing down the Hamiltonian (\ref{eq:z2-hamiltonian}).

\subsection{General Gauge Groups}
In general, a gauge theory can be defined for any group $G$ chosen to be the gauge group. Lattice gauge theories for continuous groups were initially formulated from the path integral; here we begin with the Hamiltonian formulation~\cite{kogut1975hamiltonian} and obtain the path integral as a consequence.

As with the $\mathbb Z_2$ gauge theory, we define the gauge theory for gauge group $G$ on a rectangular (or cubic) lattice, such as shown in Figure~\ref{fig:gauge-lattice}. To each link $\ell$is associated a local Hilbert space $\mathcal H_\ell = \mathbb C G$, the space of square-integrable complex-valued functions on $G$. A large Hilbert space $\mathcal H = \mathcal H_\ell^{\otimes N}$ is constructed from the tensor product of $N$ copies of $\mathcal H_\ell$, where there are $N$ total links on the lattice.

A local Hilbert space $\mathbb C G$ has a basis\footnote{Strictly speaking, for a continuous group $G$, this set of vectors is overcomplete and does not lie in the Hilbert space, as they are not normalizable. Much as for ordinary quantum mechanics, where eigenstates of the position operator are referred to as $\left|x\right>$, we can disregard this issue without affecting any of the results. It is most visible in the fact that the dimension of the Hilbert space of a lattice is countably infinite, whereas the set of basis vectors we use is uncountably infinite.} consisting of one state $\left|U\right>$ for every element $U \in G$. The basis is the eigenbasis of a $G$-valued operator $\hat U$ defined by $\hat U \left|U\right> = U \left|U\right>$. On a lattice with many links, the generalization of this basis is one state $\left|U(\ell)\right>$ for every function $U$ from the set of links to the gauge group $G$. This set is a basis for $\mathcal H$. To every link is associated an operator $\hat U_\ell$, defined as $\hat U$ above acting on the Hilbert space of that link.

The Hilbert space of the physical theory is in fact a subspace of $\mathcal H$. To construct the physical Hilbert space, we need to consider the group of gauge symmetries. For any site $x$ and $V \in G$, we define an operator $\phi_x(V)$ on $\mathcal H$ which performs a gauge transformation by $V$ at $x$. The operator acts on each link independently. Links going out of site $x$ transform in the right regular representation; links going into site $x$ transform in the adjoint of the left regular representation (other links do not transform, or rather, transform in the trivial representation):
\begin{equation}
\phi_x(V) \left|U_{xy}\right> = \left|U V\right>
\text{ and }
\phi_x(V) \left|U_{yx}\right> = \left|V^\dagger U\right>
\text.
\end{equation}
Here we have introduced the subscript notation $xy$ to denote a link from y to x.
A general gauge transformation is obtained by specifying an element $V$ at every site of the lattice\footnote{Performing a `constant' gauge transformation, where the group element $V$ is the same at every site, has no effect if $g$ is in the center $Z$ of $G$. Therefore, the full group of gauge symmetries is not $G^V$, but rather $G^V / Z$.}.
Under such a gauge transformation, the transformation law for the Hilbert space is
\begin{equation}
\phi(V) \left|\cdots U_{xy}\cdots\right>
=
\left| \cdots \left(V^\dagger_x U_{xy} V_y\right) \cdots\right>
\text.
\end{equation}

The physical Hilbert space $\mathcal H_P$ is the subspace of $\mathcal H$ consisting only of states invariant under gauge transformations. This can be defined constructively with the aid of the gauge projection operator
\begin{equation}
P = \int \d V \phi(V)
\end{equation}
which has the effect of integrating over all possible gauge transformations.

The lattice Hamiltonian for gauge group $G$ is
\begin{equation}\label{eq:gauge-lattice-hamiltonian}
H = \sum_\ell \nabla^2_{\ell} + \sum_P \Re \Tr U_P
\text.
\end{equation}
where, just as for the $\mathbb Z_2$ gauge theory, the first sum is over all links and the second sum goes over all plaquettes $P$. The operator $U_P$ is defined as the product of the $U_\ell$ operators for links going around the plaquette: for the plaquette shown in Figure~\ref{fig:gauge-lattice} we have $U_P$ = $U_0 U_1^\dagger U_2^\dagger U_3$. (Under the trace, the choice of starting link does not matter.)

For a continuous gauge group $G$, the operator $\nabla_\ell^2$ is the Laplace-Beltrami operator; i.e. the kinetic energy of the wavefunction on the curved surface $G$. This is the generalization of the Laplacian for a curved manifold. For discrete gauge groups, an appropriate generalization of this is obtained by noting that the Laplace-Beltrami operator is diagonal in Fourier space, and is proportional to the identity when restricted to any irreducible representation. It remains to pick one real number for each irreducible representation of $G$, and \emph{any} such choice will yield a gauge-invariant Hamiltonian.

Viewing the lattice theory as a regular quantum mechanical system, the first term is a kinetic term (momentum squared on a curved manifold), and the second is a potential term, defining how the degrees of freedom are coupled. In the language of field theory, and specifically by analogy with the $U(1)$ gauge theory of electromagnetism, the first term is the electric term and the second the magnetic term, so that the Hamiltonian can be re-written as $H = \frac 1 2 E^2 + \frac 1 2 B^2$.

Each term in the lattice Hamiltonian (\ref{eq:gauge-lattice-hamiltonian}) is individually gauge-invariant. The kinetic term at link $\ell$ is invariant under any rotation of the group $G$, and therefore under all gauge transformations. The plaquette term in the Hamiltonian is invariant under gauge transformation because each closed path (termed a Wilson loop) must go into and out of every link, and therefore the gauge transformation is immediately undone.

The gauge theories of greatest physical interest are those with continuous Lie groups. The gauge group of quantum chromodynamics is $SU(3)$; that of the standard model is $U(1)\times SU(2) \times SU(3)$. Much of the focus of numerical simulation work is on the calculation of quantities in the $SU(3)$ gauge theory.

\subsection{Path Integral}
We now derive lattice path integral for a gauge theory (following~\cite{creutz1977gauge,Lamm:2019bik}), and discuss the matter of gauge fixing.

The construction of the transfer matrix $T$ presented here, differs slightly from the usual one~\cite{Luscher:1976ms,Mitrjushkin:2002nk,Creutz:1999zy,Caracciolo:2012qw,Creutz:1976ch} in that $T$ is defined on the entire space $\mathcal H$. The usual transfer matrix is defined only on the physical Hilbert space, and may be obtained by projection with $P$.

Fixing a timestep $\Delta t$, the transfer matrix is an approximation to imaginary-time evolution: $T \approx e^{- \Delta t H}$.
We will work in the fiducial basis of $\mathcal H$ of eigenstates $\left|U\right>$ of the field operators. We construct the transfer matrix in this basis from separate kinetic $T_K$ and potential $T_V$ contributions via Suzuki-Trotter approximation~\cite{suzuki1976generalized,trotter1959product}
\begin{equation}\label{eq:transfer}
T = T_V^{1/2} T_K T_V^{1/2}
\end{equation}
where the potential evolution resembles the product of spatial plaquettes that appears in the Wilson action
\begin{equation}
\langle\tilde U_{12}\hdots|T_V|U_{12}\hdots\rangle =
\delta_{U_{12}\cdots}^{\tilde U_{12}\cdots}
\exp\left(\frac {\Delta t} {a g^2}\sum_x W_{\mu\nu}(x)\right)
\text.
\end{equation}
Here $a$ is the spatial lattice spacing, not necessarily equal to $\Delta t$. We have borrowed from lattice field theory the Wilson plaquette
\begin{equation}W_{\mu\nu}(x) = \Re \Tr [U^\dagger_{x,x+\hat\nu} U^\dagger_{x+\hat\nu,x+\hat\mu+\hat\nu} U_{x+\hat\mu,x+\hat\mu+\hat\nu} U_{x,x+\hat\mu}],
\end{equation}
and $\mu,\nu$ are restricted to space-like directions. The kinetic evolution acts on each link independently.
\begin{equation}
\langle\tilde U_{12}\hdots|T_K|U_{12}\hdots\rangle =
\prod_{\left<ij\right>} e^{\frac a {g^2 a_0}\Re \Tr \left[\tilde U_{ij}^\dagger U_{ij}\right]}
\end{equation}
Note that $[T,P]=0$ due to the fact that $T_V$ and $T_K$ individually commute with $P$.

At this point a path integral may be obtained from the approximate partition function $\Tr T^{-\beta/\Delta t}$; however, this partition function includes contributions from nonphysical gauge-variant states. The correct partition function is projected onto the ground state. There is some freedom in how we do this. The most straight-forward approach is to insert a single projection operator, writing $Z = \Tr e^{-\beta H} P$. This yields the correct physics, but the resulting lattice path integral is awkward to work with, because the coupling between the first and last time-slices will be different than the coupling between any other pair. (A calculation performed with this path integral is partly gauge-fixed to the $A_0=0$ gauge.) It is instead conventional to insert multiple projection operators, writing
\begin{equation}
Z = \Tr_P T^\beta = \Tr (P T P)^\beta
\end{equation}
This is possible because the projection operator commutes with the transfer matrix $T$, and in fact with $T_K$ and $T_V$ individually. Projecting the kinetic part of the transfer matrix yields
\begin{equation}
\langle\tilde U_{12}\hdots|P T_K P|U_{12}\hdots\rangle =
\int \d V\;
\prod_{\left<ij\right>} e^{\frac a {g^2 a_0}\Re \Tr \left[\tilde U_{ij}^\dagger V_i^\dagger U_{ij} V_j\right]}
\text.
\end{equation}
The $V_i$ are group elements that live at a single site in the Hamiltonian picture. Visualizing a Euclidean lattice (with a separate time-like dimension), the $V_i$ connect a lattice site on one time-slice to the same site on the next time-slice. They constitute a gauge-transformation performed in going from one time-slice to the next. This is usually visualized in the form of timelike links connecting one spatial slice to the next.

The resulting partition function is
\begin{align}\label{eq:transfer-partition}
Z &=
\left(\int_G \d U_{12}\cdots\right) \exp\left[\frac 1 {g^2} \sum \Re \Tr \prod U_{ij}\right]
\text.
\end{align}
Here $\Tr_P$ denotes the trace over only the physical subspace $\mathcal H_P$, and the sum is taken over both spatial and temporal plaquettes on a $d+1$ lattice.

It remains to show that, for vanishing temporal lattice spacing $\Delta t \rightarrow 0$, the transfer matrix corresponds exactly to imaginary time under the gauge Hamiltonian. This is done in detail by Creutz \cite{Creutz:1976ch} for the gauge-fixed transfer matrix. The result for the gauge-free transfer matrix used here is the same. The potential transfer matrix $T_V$ is exactly $e^{-\Delta t H_V}$. The correspondence between $H_K$ and $T_K$ is not exact, but indeed $\lim_{\Delta t \rightarrow 0} T_K = e^{-\Delta t H_K}$.

%auto-ignore
\chapter{Computational Difficulties}\label{ch:difficulties}

The most straightforward method for studying the physics of a lattice Hamiltonian is to construct the Hamiltonian as an explicit matrix, and perform computatinal linear algebra on that matrix. For instance, the matrix may be diagonalized to reveal the masses of particles and bound states. Such methods are in practice useless for three-dimensional field theories: linear algebra algorithms scale polynomially with the dimension of the vector space in question, and the Hilbert space of a lattice theory is exponential in the volume of the lattice\footnote{Or, for theories with continuous degrees of freedom, the Hilbert space is infinite.}.

The most widely-used nonperturbative tool for studying lattice QCD in practice is the Markov chain Monte Carlo (MCMC) method. Much of the time, this algorithm scales polynomially with the volume of the system being studied, rather than exponentially. At finite density of relativistic fermions, or when studying real-time evolution, the MCMC method reverts to exponential scaling. In these cases, we are left without a general nonperturbative method.

\section{Monte Carlo Methods}

Monte Carlo methods for lattice field theory are based on the path integral representation of the partition function. For concreteness, we consider here a scalar field theory, but the same ideas generalize to gauge theories and theories of interacting fermions (discussed in more detail in Section~\ref{sec:fermion-density} below).

The lattice partition function for a theory of one real scalar field, as described by the Hamiltonian (\ref{eq:h-scalar-interacting}), is
\begin{align}\label{eq:scalar-partition}
Z[J]
&=
\int \d^{\beta V}\phi\;
e^{-S[\phi]}
e^{\sum_x J_x\phi_x}\nonumber\\
\text{where }&
S[\phi] = 
\sum_{\left<xy\right>}
\frac{\left(\phi_x - \phi_y\right)^2}{2}
+
\sum_x
\frac {m^2}{2} \phi^2
+
\sum_x
\lambda \phi_x^4
\text.
\end{align}
The functional $S$ is referred to as the action.
This expression is only an approximation (\`a la Suzuki-Trotter) to the true partition function of the Hamiltonian, which is itself only an approximation to the continuum theory. After performing a calculation, one must extrapolate to the continuum and infinite volume limits to obtain physically meaningful results. The process of extrapolation is largely independent from the rest of the calculation, and we will ignore it for the remainder of this chapter.

The partition function (\ref{eq:scalar-partition}) couples the fields linearly to a spacetime-dependent source field $J$. Expectation values are obtained by differentiating $\log Z$ with respect to $J$. For an arbitrary observable $\mathcal O$ (typically some polynomial of the fields), the expectation value is given by
\begin{equation}
\left<\mathcal O\right>
=
\frac{\int \d^{\beta V}\phi\;e^{-S[\phi]} \mathcal O[\phi]}{\int \d^{\beta V}\phi\; e^{-S[\phi]}}\text.
\end{equation}

When the action $S[\phi]$ is guaranteed to be real, the ``Boltzmann factor" $e^{-S}$ is non-negative, and this expectation value may be viewed as an expectation value over the probability distribution proportional to $e^{-S}$. It follows that, to calculate arbitrary expectation values, one need only sample from the distribution given by the Boltzmann factor.

\subsection{Markov-Chain Monte Carlo Methods}
A Markov chain is a discrete-time stochastic process in which the state at time $t+1$ depends only on the state at time $t$. The chain is defined by a matrix $P_{ij}$ giving the probability of transitioning to state $i$ at step $t+1$, given that the state was $j$ at step $t$. The matrix $P$ should be thought of as a linear operation on the space of probability distributions.

The long-time behavior of a Markov chain is determined by the eigenvector of $P$ with largest eigenvalue. Markov-Chain Monte Carlo (MCMC) algorithms sample from a distribution by setting up a Markov chain with the target distribution as this eigenvalue. For many distributions encountered in practice, including many lattice field theories, the Markov chain mixes in time polynomial in the volume (and all other physical parameters), making these methods viable for simulations.

The particular method most widely used in lattice field theory is the Metropolis algorithm~\cite{Metropolis:1953am}.

\section{Finite Fermion Density}\label{sec:fermion-density}

When simulating a field theory with fermions, the fermions are typically integrated out analytically before performing the numerical integral~\cite{gattringer2009quantum,montvay1997quantum}. An example of this is a simulation of $SU(3)$ gauge theory with fermions, i.e.\ QCD. The lattice action is
\begin{equation}
S = \sum_P \Re \Tr P
+ \sum_i m \bar\psi_i \psi_i
+ \frac 1 2 \sum_{i,\mu}
\left[
\bar\psi_i U_{i,i+\hat\mu} \gamma^\mu \psi_{i+\hat\mu}
+ \mathrm{h.c.}
\right]
\text.
\end{equation}
Here the first sum is over all plaquettes, the second over all sites, and the third over all sites and spacetime directions $\mu$.
Because this action is only quadratic in the fermion fields, they can be integrated out analytically for any fixed configuration of the gauge fields. This yields the lattice partition function
\begin{equation}
Z = \int \d U e^{-S[U]} \det D[U]
\equiv
\int \d U e^{-S_{\mathrm{eff}}[U]}
\end{equation}
where $S_G$ is the pure-gauge piece of the original action, and $D[U]$ is the gauge-field-dependent matrix that gave the quadratic part of the action.

Although this saves us from having to work directly with anticommuting numbers, it introduces a new complication: the determinant of $D$ may not be a non-negative real number, in which case $e^{-S} \det D$ cannot be interpreted as a probability distribution. This situation is termed a \emph{fermion sign problem}. In the context of relativistic field theories, $\det D$ is usually guaranteed to be positive and real at zero density, but at finite density picks up a complex phase.

A standard approach at this point is to define the \emph{quenched} Boltzmann factor as the absolute value $e^{-S_G} |\det D|$ of the original Boltzmann factor. Expectation values of the physical system can be rewritten as ratios of expectation values of the quenched system:
\begin{equation}\label{eq:reweighting}
\left<\mathcal O\right>
=
\frac{\left<\mathcal O e^{-i \Im S_{\mathrm{eff}}}\right>_Q}
{\left<e^{-i \Im S_{\mathrm{eff}}}\right>_Q}
\end{equation}
where $\left<\cdot\right>_Q$ denotes a quenched expectation value.

The denominator of (\ref{eq:reweighting}) is difficult to evaluate. Each sample will be a number with magnitude $1$; once averaged, these number must cancel out to yield a value of considerably smaller magnitude. In fact, a simple argument shows that the denominator $\left<e^{-i\Im S_{\mathrm{eff}}}\right>_Q$, termed the ``average sign'' $\left<\sigma\right>$, is characteristically exponentially small in the volume. The partition function of a field theory in a large volume should be approximately equal to a product of two partition functions with half the volume: the contribution of the boundary to the free energy is negligible in this limit. The same statement is true of the quenched partition function $Z_Q = \int |e^{-S_{\mathrm{eff}}}|$. The average sign is just the ratio of the physical partition function to the quenched partition function, $\left<\sigma\right> = Z / Z_Q$. It therefore follows that
\begin{equation}
\left<\sigma\right>(V)
=
\frac{Z(V)}{Z_Q(V)}
\approx
\frac{Z(V/2)^2}{Z_Q(V/2)^2}
=
\left[\left<\sigma\right>(V/2)\right]^2
\end{equation}
and therefore the average sign shrinks exponentially with the volume. It follows that an exponentially large number of samples are needed to even resolve the denominator of (\ref{eq:reweighting}) from $0$. This exponential scaling is characteristic of methods that encounter a fermion sign problem.

The fermion sign problem is a major obstacle to nonperturbative calculations in several physical regimes. Prominent in nuclear physics is the problem of determining the low-temperature limit nuclear equation of state~\cite{lattimer2012nuclear}. To a reasonable approximation, the interior of a neutron star is at zero temperature, and so the energy density as a function of number density (or equivalently, pressure as a function of energy density) of zero-temperature nuclear matter determines the mass-radius curve of neutron stars. This function is beginning to be constrained through astronomical observations~\cite{miller2019psr}, but remains largely out of the realm of first-principles calculations.

It has been shown that the most general case of a fermion sign problem is NP-hard~\cite{Troyer:2004ge}. Under standard assumptions of computational complexity, this implies that classical (or even quantum) simulations of such systems cannot be achieved in polynomial time~\cite{arora2009computational}. It is important to bear in mind, however, that this result does not exclude (even heuristically) the polynomial-time simulation of specific instances of systems that suffer from a fermion sign problem. In particular, the system used to prove the NP-hardness of the general case has an inhomogeneous Hamiltonian, and indeed the proof relies heavily on that fact by encoding particular combinatorial problems into the Hamiltonian. The Hamiltonians associated to field theory are homogenous and have little information content.

\section{Real-Time Linear Response}

So far we have discussed the difficulties encountered when trying to determine the equilibrium properties of quantum matter. Out-of-equilibrium physics is also difficult to access with nonperturbative techniques.

A general class of experiments we might perform involve preparing a thermal state of some Hamiltonian $H_0$, and then changing the Hamiltonian to some (possibly time-dependent) $H(t)$, and measuring an expectation value $\left<\mathcal O(T)\right>$ at some later time. The Schwinger-Keldysh formalism~\cite{keldysh1965diagram} presents us with the possibility of performing such calculations with lattice methods~\cite{Alexandru:2016gsd,Alexandru:2017lqr}. The time-dependent expectation value, at inverse temperature $\beta$, is given by
\begin{equation}
\left<\mathcal O(T)\right>
=
\frac{\Tr e^{-\beta H_0} e^{i H T} \mathcal O e^{-i H T}}
{\Tr e^{-\beta H_0} e^{i H T} e^{-i H T}}
\text.
\end{equation}
Note that the denominator is just the partition function. This expression can be transformed into a path integral in the same manner as the pure-imaginary time partition function. The result, for scalar field theory, is a path integral in both imaginary and real time, with lattice action
\begin{equation}
S
=
\sum_{x,\left<tt'\right>}
\frac{\left(\phi_{x,t} - \phi_{x,t'}\right)^2}{2 a_0}
+
\sum_{\left<xy\right>,t}
a_0 \frac {\left(\phi_{x,t} - \phi_{y,t}\right)^2}{2}
+
\sum_{x,t}
a_0 \left(\frac {m^2}{2} \phi_{x,t}^2
+
\lambda \phi_{x,t}^4\right)
\text,
\end{equation}
where $a_0(t)$ depends on the time-slice being considered, being $1$ for the first $\beta$ slices (corresonding to the thermodynamic part of the lattice), $i$ for the next $T$ slices (yielding forward time evolution), and $-i$ for the rest (backward time evolution). Here we have assumed that the lattice spacing is $1$.

A less ambitious version of this task comes from considering the case where $H(t)$ is equal to $H_0$, except for a small, delta-like term added at $t=0$:
\begin{equation}
H_\epsilon(t) = H_0 + \epsilon \delta(t) H'
\text.
\end{equation}
The response of $\mathcal O$ at some later time, to leading order in $\epsilon$, is termed linear response, and is given by a time-separated correlator evaluated in equilibrium.
\begin{equation}
\left<\mathcal O(T)\right>
=
\left<\mathcal O(0)\right>
+
\epsilon
\left<\left[H', \mathcal O(t)\right]\right>
\end{equation}
Transport coefficients, such as bulk and shear viscosity, fall into the category of linear response. The calculation of these time-separated correlators still suffers from a sign problem, and will be our main focus.

An efficient algorithm for classically computing the real-time \emph{non}-linear response of a quantum system, with an arbitrary time-varying Hamiltonian, would imply the ability to efficiently simulate a quantum computer with a classical computer~\cite{jordan2018bqp}. Thus, under common computational complexity assumptions, it is expected that no such algorithm exists. However, just as Troyer and Wiese's result~\cite{Troyer:2004ge} on the hardness of the inhomogeneous fermion sign problem does not forbid a solution to the homogeneous problem, the hardness of nonlinear simulation with a time-varying Hamiltonian does not seem to forbid the efficient computation of two-point correlators.

Unlike in the fermion case, however, more directly relevant results have been recently developed. Two developments are worth highlighting here. First, under stronger (but still widely believed) assumptions about computational complexity\footnote{Specifically, the fact that the polynomial hierarchy does not collapse.}, the simulation of a sequence of \emph{commuting} quantum gates is inaccessible by any polynomial-time classical algorithm~\cite{bremner2011classical}. This problem corresponds to the physical task of computing the nonlinear response of an arbitrary homogeneous state under a time-constant, but spatially inhomogeneous, Hamiltonian. Separately, again under standard assumptions, it was shown in~\cite{aaronson2017computational} that the task of simulating quantum scattering, beginning from an arbitrary initial state, is also inacessible by efficient classical algorithms.

These barriers do not provide evidence that time-separated two-point functions are inaccessible classically. The lattice Schwinger-Keldysh method discussed above is not the only approach to computing these functions on the lattice. A common approach, applied for example to the shear viscosity of lattice Yang-Mills~\cite{astrakhantsev2017temperature} is to compute the two-point function at Euclidean separation, and attempt to analytically continue to timelike Minkowski separation. This approach ultimately suffers from the fact that the analytic continuation is ill-posed, and some modeling assumptions are needed.

\section{Noisy Correlators}

Particle masses may be measured in a lattice calculation by considering the long-time behavior of a correlator separated in imaginary time. Let $\mathcal O$ be an operator that, when applied to the vacuum state $|\Omega\rangle$, has some overlap with the ground state $|P\rangle$ of a single particle whose mass we would like to know: $\langle P|\mathcal O|\Omega\rangle \ne 0$. The Euclidean time-separated correlator has an exponential decay characterized by the mass of the particle:
\begin{equation}
C(\tau)
=
\langle \Omega|e^{\tau H} \mathcal O e^{-\tau H} \mathcal O|\Omega\rangle
=
\sum_i e^{\tau(E_\Omega - E_i)}|\langle i|\mathcal O|\Omega\rangle|^2
\end{equation}
where the sum is taken over all eigenstates of the Hamiltonian, of which $|P\rangle$ is one. When $|P\rangle$ is the lowest-lying eigenstate with nonvanishing overlap, the asymptotic behavior of $C(\tau)$ reveals the mass.

The measurement of $C(\tau)$ on the lattice has some noise, characterized by the variance $\langle(\mathcal O(\tau) \mathcal O(0))^2\rangle - \langle\mathcal O(\tau) \mathcal O(0)\rangle^2$. The difficulty of obtaining an accurate measurement of $C(\tau)$ is measured by the signal-to-noise ratio; that is, the ratio the expectation value to the standard deviation of the estimator. An argument due to Parisi and Lepage~\cite{Parisi:1983ae,Lepage:1989hd} shows that, for the proton, the signal-to-noise ratio falls off exponentially with $\tau$. The correlator that yields the proton mass is $\langle\bar q\bar q \bar q(\tau) q q q (0)\rangle$, and asymptotically decays with $e^{-\tau m_p}$, where $m_p$ is the proton mass. The varianceis given by $\langle\bar q \bar q \bar q(\tau) q q q (\tau) \bar q \bar q \bar q(0) q q q (0)\rangle$. The operator in that correlator at $\tau=0$ has overlap with a three-pion state, and so the asymptotic behavior of the variance is $e^{-3 \tau m_\pi}$. The noise thus decays less quickly than the signal. As a result, the signal $C(\tau)$ is exponentially difficult to measure at large separations $\tau$.

This signal-to-noise problem is not as severe in practice as the sign problems associated to finite fermion density and real-time correlators. In particular, it has not prevented the accurate measurements of hadronic masses on the lattice~\cite{majumder2010hadron}. Although it does not outright prevent these calculations, it does make them more expensive. The Parisi-Lepage signal-to-noise problem can be reduced by complexification~\cite{Detmold:2020ncp} and, as we will see in Section~\ref{sec:measuring-mass}, can be evaded entirely on a quantum computer.

%auto-ignore
\chapter{Complexification}\label{ch:complexification}

Motivated by the previous chapters, we would like to compute via Monte Carlo sampling with reweighting, the expectation value of a function $\mathcal O(A)$, defined as
\begin{equation}
\left<\mathcal O\right>
=
\frac
{\int \mathcal D A\; e^{-S(A)}\; \mathcal O(A)}
{\int \mathcal D A\; e^{-S(A)}}
\text,
\end{equation}
where $S(A)$ denotes the action, $\mathcal O(A)$ comes from some Hermitian observable, and the integral is taken over all Euclidean lattice field configurations. For the moment, we will abstract the problem somewhat, allowing $S$ and $\mathcal O$ to be arbitrary functions, with the sole restriction that they be holomorphic\footnote{We will see in Section \ref{sec:cryptoholomorphicity} that even for fermionic theories, functions $\mathcal O$ coming from arbitrary correlators are in fact holomorphic.}.

To evaluate the expectation value via reweighting more efficiently, we will explore a method for alleviating the sign problem present in the denominator. A sign problem is present whenever $S$ fails to be real, and is characterized by the average sign
\begin{equation}
\left<\sigma\right> = \frac
{\int \mathcal D A\; e^{-S(A)}}
{\int \mathcal D A\; e^{-\Re S(A)}}
\equiv
\frac Z {Z_Q}
\text;
\end{equation}
smaller $\left<\sigma\right>$ correspond to worse sign problems. Note that despite the notation $\left<\sigma\right>$ is not an expectaton value with respect to $e^{-S}$. In fact, it is an expectation value with respect to the \emph{phase-quenched} action $\Re S$, and we have introduced the quenched partition function $Z_Q$, defined as the integral of the quenched Boltzmann factor $|e^{-S}|$.

In this chapter we will consider the integral over fields $A$ as a contour integral in the sense of complex analysis. As written, the integral is taken over $\mathbb R^N \subset \mathbb C^N$, but we will deform this contour to a different $N$-manifold $\mathcal M \subset \mathbb C^N$, and integrate over $\mathcal M$ instead. This procedure is motivated by two observations: first, that the action and observables are holomorphic functions of $A$, and therefore the expectation value $\left<\mathcal O\right>$ will have no dependence on $\mathcal M$; second, that the quenched partition function is the integral of a non-holomorphic function, and therefore the average sign will generically depend upon the choice of $\mathcal M$.

This chapter proceeds as follows. After a one-dimensional motivating example, the general procedure is rigorous described in the $N$-dimensional case, with a proof of the theorem that prevents expectation values from depending on the choice of manifold. Next we discuss two methods for selecting a manifold of integration, and apply each to the previously-discussed Thirring model. Finally, we discuss one case in which these methods completely remove the sign problem, and one case in which these methods provably have no effect.

\section{A One-Dimensional Example}

Our motivating example is the sign problem that comes from considering an action of one variable, $S(x) = x^2 + 2 i \alpha x$. In this case, the partition function and quenched partition function can both be evaluated exactly, and the average sign is
\begin{equation}
\left<\sigma\right>(\alpha)
=
\frac
{\int e^{-x^2 - 2 i \alpha x}}
{\int e^{-x^2}}
= e^{-\alpha^2}
\text.
\end{equation}

As an aside, note that one can increase the `volume' of the system by adding more dimensions to the integral. The partition function of the volume-$V$ system is then $Z_{V} = Z_1^V$, where $Z_1$ is the partition function given by the single-dimensional integral. The resulting sign problem is
\begin{equation}
\left<\sigma\right>(\alpha, V) =
\left<\sigma\right>(\alpha, 1)^V
= 
e^{-\alpha^2 V}
\text,
\end{equation}
a simple demonstration of the general fact that sign problems scale exponentially in the volume. (By contrast, the parameter $\alpha$ doesn't correspond to any parameter in a physical system, and so the scaling with $\alpha$ shouldn't be taken seriously.)

\begin{figure}
\centering
\includegraphics[width=2.8in]{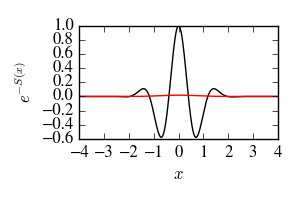}
\hspace{0.2in}
\includegraphics[width=2.8in]{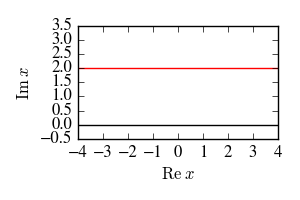}
\caption{\label{fig:complexify-gaussian}The sign problem, and its removal, of a gaussian integral. On the left is the Boltzmann factor $e^{-S}$ without any contour deformation performed (black) and with the optimal contour (red). The original manifold of integration (black) and the deformed contour (red) are on the right.}
\end{figure}

This sign problem can be removed entirely by deforming the contour of integration from the real plane to the line defined by $\Im x = - \alpha$. The situation is shown in Figure~\ref{fig:complexify-gaussian}: after the contour deformation, the partition function is written
\begin{equation}
Z
=
\int_{\infty}^\infty  \d x\; e^{-(x+i\alpha)^2}
=
\int_{\infty-i\alpha}^{\infty-i\alpha}  \d x\; e^{-x^2}
=
\int_{\infty}^{\infty}  \d x\; e^{-x^2}
\end{equation}
which is of course a sign-problem free integral. The first step is merely a change-of-variable; in the second step, Cauchy's theorem must be invoked to show that the two contour integrals are the same.

This motivates the broad strategy of `complexification' for attacking sign problems. In general, the partition function $Z$ (from which physical quantities are obtained) is invariant under contour deformations, as it is an integral of the holomorphic function $e^{-S(A)}$. The quenched partition function $Z_Q$, by contrast, is the integral of the non-holomorphic $e^{-\Re S(A)}$, so the value of $Z_Q$, and therefore the average sign, are not invariant under contour deformations.  Thus one may improve the sign problem by deforming the contour of integration, without changing the observables measured. The precise requirements on the contour deformation are imposed by Cauchy's theorem, which we discuss next.

\section{Cauchy's Integral Theorem}

Cauchy's integral theorem is the tool that allows us to deform a contour integral without changing the value of the integral. A function $f : \mathbb C \rightarrow \mathbb C$ is termed \emph{holomorphic} where it obeys the Cauchy-Riemann equations:
\begin{equation}
\frac{\partial u}{\partial x}
=
\frac{\partial v}{\partial y}
\hspace{1em}\text{ and }\hspace{1em}
\frac{\partial v}{\partial x}
=
-\frac{\partial u}{\partial y}
\text,
\end{equation}
where we have decomposed $f(x+iy) = u(x,y) + i v(x,y)$ into its real and imaginary parts. By introducing the holomorphic (Wirtinger~\cite{wirtinger1927formalen}) and antiholomorphic derivatives
\begin{equation}
\partial = \frac{\partial}{\partial z} \equiv
\frac 1 2 \left(\frac\partial{\partial x} - i \frac{\partial}{\partial y}\right)
\;\text{ and }\;
\bar\partial =
\frac{\partial}{\partial \bar z} \equiv
\frac 1 2 \left(\frac\partial{\partial x} + i \frac{\partial}{\partial y}\right)
\text,
\end{equation}
the Cauchy-Riemann equations can be rewritten as $\bar\partial f = 0$. Note that the holomorphic and antiholomorphic derivatives, just like the ordinary derivatives $\partial_x$ and $\partial_y$, are derivatives taken along orthogonal vectors.

Cauchy's integral theorem states that for holomorphic functions, integrals around closed contours vanish.

\begin{theorem}[Cauchy's Integral Theorem]
For a closed region $\Omega \subset \mathbb C$ and a function $f$ holomorphic on $\Omega$, the integral of $f$ around the boundary of $\Omega$ must vanish:
\begin{equation}
\oint_{\partial\Omega} f(z)\;\d z = 0
\end{equation}
\end{theorem}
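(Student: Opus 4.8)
The plan is to prove Cauchy's integral theorem by applying Green's theorem (equivalently, the two-dimensional Stokes' theorem) to the line integral over the boundary $\partial\Omega$, and then to show that the resulting area integral vanishes identically as a consequence of the Cauchy--Riemann equations, which (as established just above) are equivalent to $\bar\partial f = 0$. This route is natural here because the Wirtinger derivatives $\partial$ and $\bar\partial$ have already been introduced, and the vanishing of the area integrand will be exactly the statement $\bar\partial f = 0$.

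First I would write $f(z) = u(x,y) + i v(x,y)$ and $\d z = \d x + i\,\d y$, and expand the contour integral into its real and imaginary parts:
\begin{equation}
\oint_{\partial\Omega} f\,\d z
=
\oint_{\partial\Omega} \left(u\,\d x - v\,\d y\right)
+
i \oint_{\partial\Omega} \left(v\,\d x + u\,\d y\right)
\text.
\end{equation}
Next I would apply Green's theorem to each of the two real line integrals separately, converting them into area integrals over the enclosed region $\Omega$:
\begin{equation}
\oint_{\partial\Omega} \left(u\,\d x - v\,\d y\right)
=
-\iint_\Omega \left(\frac{\partial v}{\partial x} + \frac{\partial u}{\partial y}\right)\d x\,\d y
\text,
\end{equation}
and similarly the imaginary part produces the integrand $\frac{\partial u}{\partial x} - \frac{\partial v}{\partial y}$. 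At this stage I would invoke the Cauchy--Riemann equations: the first integrand vanishes by $\frac{\partial v}{\partial x} = -\frac{\partial u}{\partial y}$, and the second by $\frac{\partial u}{\partial x} = \frac{\partial v}{\partial y}$. Since both area integrands are identically zero wherever $f$ is holomorphic, both contour integrals vanish, giving $\oint_{\partial\Omega} f\,\d z = 0$.

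The main obstacle is not the algebra, which is routine, but the analytic hypotheses needed to invoke Green's theorem rigorously: Green's theorem in its elementary form requires the partial derivatives of $u$ and $v$ to be \emph{continuous} on $\Omega$, whereas holomorphicity as defined by the Cauchy--Riemann equations only presupposes differentiability. A fully rigorous treatment (the Goursat argument) avoids this gap by proving the theorem first for triangles via a bisection/subdivision argument that uses only complex differentiability, and then extending to general regions by triangulation. For the purposes of this thesis, where $f = e^{-S}$ with $S$ a polynomial-type holomorphic action, the derivatives are manifestly continuous, so the Green's theorem route is legitimate and I would simply note the smoothness of the integrands rather than reproduce Goursat's subdivision argument. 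I would also remark that $\Omega$ should be taken simply connected (or the theorem applied to each simply connected component) so that $\partial\Omega$ genuinely bounds the region, since the whole application to contour deformation relies on interpreting $\partial\Omega$ as the difference between the original real contour and the deformed manifold $\mathcal M$.
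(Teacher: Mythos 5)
Your proof is correct and is essentially the same argument as the paper's: both convert the boundary integral to an area integral via Stokes'/Green's theorem and then kill the integrand using the Cauchy--Riemann equations, the only difference being that you work in real coordinates $(u,v,x,y)$ while the paper phrases the identical computation with Wirtinger derivatives and the wedge product $\d f \wedge \d z$. Your added remarks on the regularity needed for Green's theorem (versus the Goursat route) and on $\partial\Omega$ genuinely bounding $\Omega$ are sensible caveats that the paper omits, but they do not change the substance of the argument.
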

\begin{proof}
By Stokes' theorem,
\begin{equation}
\oint_{\partial\Omega} f(z)\;\d z
=
\int_{\Omega} \d (f(z) \d z)
=
\int_{\Omega} \d f \wedge \d z
\text.
\end{equation}
The differential of $f$ is given by $\d f = \partial f \d z + \bar\partial f \d \bar z$. 
As the anti-holomorphic derivative of $f$ vanishes, the differential of $f$ is simply $\d f = \partial f \d z$. However, the wedge product $\d z \wedge \d z$ vanishes, and so must the integral.
\end{proof}

The usual form of Cauchy's theorem --- the one just given --- applies to functions of one complex variable. The theorem has a natural generalization to functions of many complex variables. A $f : \mathbb C^N \rightarrow \mathbb C$ of $N$ complex variables is termed holomorphic where it obeys the Cauchy-Riemann equations in each complex dimension independently; that is, where
\begin{equation}
\frac{\partial}{\partial \bar z_i} f(z_1,\ldots,z_i,\ldots,z_N) = 0
\end{equation}
for all $i$. A multidimensional generalization of Cauchy's theorem states that the integral around the boundary of any $(N+1)$-real-dimensional region, in which $f$ is holomorphic, must vanish.
\begin{theorem}[Multidimensional Cauchy's Theorem]
For a closed region $\Omega \subset \mathbb C^N$ of real dimension $N+1$, and a function $f$ holomorphic on $\Omega$, the integral of $f$ around the boundary of $\Omega$ vanishes:
\begin{equation}
\oint_{\partial\Omega} f(z)\;\bigwedge_{i=1}^N \d z_i = 0
\end{equation}
\end{theorem}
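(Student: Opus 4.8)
The plan is to imitate the one-dimensional proof almost verbatim, since the two ingredients that made it work---Stokes' theorem and the vanishing of redundant wedge products---are both available in the multidimensional setting. First I would recognize the integrand as the smooth $N$-form $\omega = f(z)\,\bigwedge_{i=1}^N \d z_i$ defined on a neighborhood of the compact oriented $(N+1)$-dimensional region $\Omega$, and apply the general Stokes' theorem to convert the boundary integral into a bulk integral:
\begin{equation}
\oint_{\partial\Omega} f(z)\,\bigwedge_{i=1}^N \d z_i = \int_\Omega \d\omega
\text.
\end{equation}
The problem then reduces entirely to showing that the exterior derivative $\d\omega$ vanishes identically on $\Omega$.

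To compute $\d\omega = \d f \wedge \d z_1 \wedge \cdots \wedge \d z_N$, I would expand the differential of $f$ in terms of the Wirtinger derivatives along all $2N$ real directions,
\begin{equation}
\d f = \sum_{i=1}^N \frac{\partial f}{\partial z_i}\,\d z_i + \sum_{i=1}^N \frac{\partial f}{\partial \bar z_i}\,\d \bar z_i
\text.
\end{equation}
Wedging this against $\d z_1 \wedge \cdots \wedge \d z_N$ produces two families of terms, each of which dies for a different reason. Every holomorphic term contributes a factor $\d z_i \wedge \d z_1 \wedge \cdots \wedge \d z_N$ in which the one-form $\d z_i$ is already present in the product, so it vanishes by antisymmetry of the wedge product---exactly the multidimensional analogue of $\d z \wedge \d z = 0$. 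Every antiholomorphic term carries the coefficient $\partial f/\partial \bar z_i$, which vanishes because $f$ is holomorphic in each variable separately. Thus both families vanish, $\d\omega = 0$, and the integral is zero.

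The key conceptual point, and the thing worth stating carefully, is that the argument needs \emph{both} mechanisms simultaneously: holomorphicity alone kills only the $\d\bar z_i$ terms, while antisymmetry alone kills only the $\d z_i$ terms, and it is precisely because $f$ is holomorphic in all $N$ variables that every surviving candidate term is eliminated. I do not expect any deep obstacle in the calculation itself. The only place demanding care is the invocation of Stokes' theorem: one should confirm that $\Omega$ is a compact, oriented $(N+1)$-manifold with boundary (or can be decomposed into such pieces) and that $f$ is holomorphic---hence smooth---on a neighborhood of $\Omega$, so that the hypotheses of the general Stokes' theorem are genuinely met.
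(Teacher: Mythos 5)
Your proposal is correct and follows essentially the same route as the paper's own proof: apply Stokes' theorem, expand $\d f$ into its $2N$ Wirtinger terms, and note that the antiholomorphic terms vanish by the Cauchy--Riemann equations while the holomorphic terms are annihilated by antisymmetry of the wedge product. Your added care about the hypotheses of Stokes' theorem is a reasonable refinement but does not change the argument.
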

\begin{proof}
As before, Stokes' theorem yields
\begin{equation}
\oint_{\partial\Omega} f(z)\wedge \bigwedge_{i=1}^N \d z_i
=
\int_{\Omega} \d f \wedge \bigwedge_{i=1}^N \d z
\text.
\end{equation}
The differential $\d f$ now has $2N$ terms, of which $N$ vanish by the Cauchy-Riemann equations $\bar\partial_i f = 0$. Each of the remaining terms has the form $\partial_j f \d z_j$ for some $j$, and therefore is annihilated when the wedge product is taken with $\bigwedge_i \d z_i$.
\end{proof}

\begin{figure}
\centering
\includegraphics[width=3in]{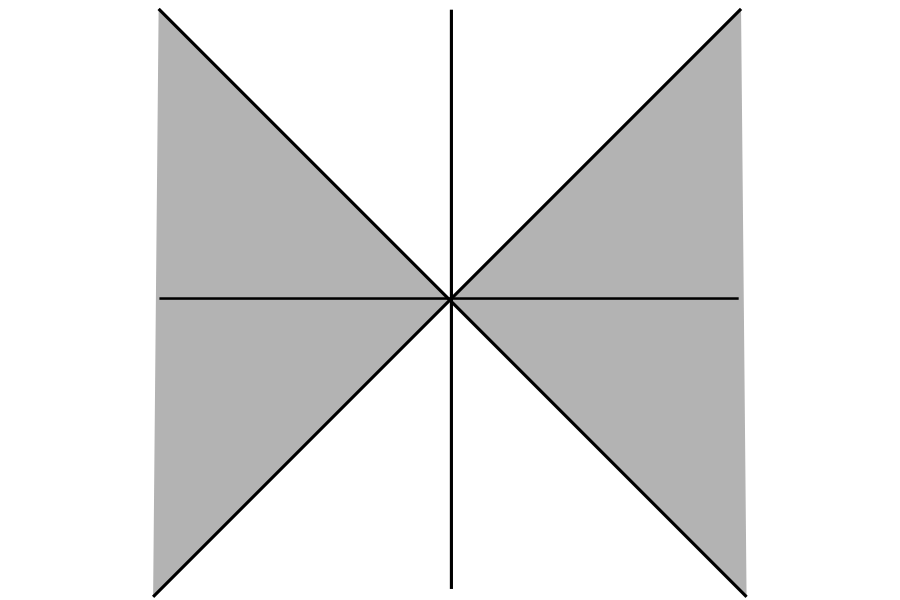}
\caption{The asymptotically safe regions for the Gaussian integral.\label{fig:asymptotically-safe}}
\end{figure}

A compication arises when deforming integration contours which extend to infinity. Application of Cauchy's integral theorem alone does not allow the asymptotic behavior of such a contour to be changed. However, in cases where the function being integrated decays rapidly\footnote{An exponential decay is always sufficient.} at infinity, the contribution of the part of the contour at infinity vanishes, and the asymptotic behavior can be changed without affecting the result. Figure~\ref{fig:asymptotically-safe} shows the asymptotically safe regions for the Gaussian model. The shaded regions mark the regions that decay exponentially at infinity, and as long as a contour's asymptotics remain in a shaded region, the integral will be unchanged by any deformation.

Tracking the asymptotically safe regions and ensuring the manifold deformation never leaves them can be technically challenging. It is often possible to arrange the physical model so that the domain of integration is compact\footnote{The standard formulation of lattice gauge theories accomplishes this.}. This ensures that the original integration manifold does not go near any asymptotic region, and any finite deformation will be permissible. This is the approach we will take when applying the method to fermionic models.

\subsection{Holomorphic Boltzmann Factors and Observables}\label{sec:cryptoholomorphicity}

The utility of a Cauchy's-theorem-based procedure comes from the observation that the integrands of interest are holomorphic functions of the integration variables. Here we discuss two major cases in which the holomorphicity of the integrands is not obvious: a theory of complex scalar fields, and a theory of fermions in which the fermions have been integrated out.

\subsubsection{Complex Scalar Fields}

The (Euclidean) lagrangian of a complex scalar field is given by
\begin{equation}
L = \left(\partial_\mu \phi^\dagger\right) \left(\partial_\mu \phi\right) + V(|\phi|)\text,
\end{equation}
and it is immediately obvious that the lagrangian is not, and therefore the lattice action is not, a holomorphic function of the field variable $\phi$. The resolution to this dilemma comes upon writing out the full lattice path integral:
\begin{equation}
Z =
\int_{\mathbb C^V} \d \phi\;
\exp\left\{
- \sum_{\left<ij\right>} \frac {\left|\phi_i - \phi_j\right|^2}{2}
- \sum_i V(\phi_i)
\right\}
\end{equation}
Here we see that the domain of integration (for a theory with $V$ sites) has $V$ \emph{complex} dimensions, rather than $V$ real dimensions. Expressing the complex scalar field at site $i$ as a sum $\phi_i = u_i + i v_i$ of two real scalar fields at the same site, we note that both $\phi$ and $\phi^\dagger$ are in fact analytic functions of the new field variables $u$ and $v$. The partition function may now be written
\begin{equation}
Z =
\int \d^V u \;  \d^V v \;
\exp\left\{
- \sum_{\left<ij\right>}
\frac {\left(u_i - u_j\right)^2}{2}
- \sum_{\left<ij\right>}
\frac {\left(v_i - v_j\right)^2}{2}
- \sum_i V(u, v)
\right\}
\text,
\end{equation}
in which the integrand is manifestly a holomorphic function of $u$ and $v$. The methods of the previous sections may now be applied, and the contour integral deformed into the space of imaginary $u$ and $v$.

\subsubsection{Fermionic Determinant and Correlators}

Thus far we have been concerned with ensuring the lattice action is a holomorphic function. This guarantees that the integrand of the partition function is holomorphic, and that the partition function is unchanged by the deformation. Holomorphicity of the action, however, is in general neither necessary nor sufficient. In fact it is required is that $e^{-S[\phi]}$ and $\mathcal O[\phi] e^{-S[\phi]}$ both be holomorphic. We will now see that this does not imply that either $S$ or $\mathcal O$ must themselves be holomorphic~\cite{Alexandru:2018ngw}.

Consider a theory of fermions $\psi$, with interactions mediated by a gauge field $A_\mu$. After the fermions have been integrated out, the lattice partition function is
\begin{equation}\label{eq:fermion-partition}
Z = \int \d^{dV} A \; e^{-S_{\mathrm{gauge}}[A]} \det D[A]
\text,
\end{equation}
where the lattice has $d$ dimensions (making $A_\mu$ a $d$-vector), $S_{\mathrm{gauge}}$ gives the terms in the lattice action involving only the gauge field, and $D[A]$ is the fermion inverse propagator in the presence of a fixed background field $A$.

A typical observable of interest is a meson propagator
\begin{equation}
\left<\bar\psi_i \bar\psi_j \psi_j \psi_i\right> = \frac 1 Z \int \mathcal D\phi\; e^{-S[\phi]} \left[D^{-1}_{ij} D^{-1}_{ji} - D^{-1}_{ii} D^{-1}_{jj}\right]
\text,
\end{equation}
so we require this integrand to be holomorphic as well.

Note that the effective action on the gauge fields --- i.e. the logarithm of the integrand in the partition function --- contains a term $\log \det D$. This term has logarithmic singularities where $\det D = 0$. Also at these points, $D^{-1}$ is not well-defined, so the meson propagator given above (along with many others) involves a singular $\mathcal O[A]$.

Despite this, the integrands
$e^{-S[\phi]}$ and $\mathcal O[\phi] e^{-S[\phi]}$ are always holomorphic (with lattice regularized actions). The holomorphicity of $e^{-S}$ is easiest to understand. The gauge action is of course holomorphic in the fields $A$, and for typical gauge-fermion interactions each element $D[A]_{ij}$ of the fermion matrix is also a holomorphic function of $A$. As the determinant is merely a polynomial of the elements of the matrix, it follows that $\det D$ itself is holomorphic. This establishes that the partition function (\ref{eq:fermion-partition}) will remain unchanged under appropriate deformations.

We now discuss fermionic observables. In the case of a fermion propagator $\langle \bar\psi_i \psi_j\rangle$, there is only a single $D^{-1}$ in the integrand, and it is easy to see that the singularity of this factor is cancelled by the zero of the fermion determinant coming from the action.
To see that integrands involving fermionic observables are holomorphic in general, we write an expectation value in terms of the original, fermionic path integral.
\[
\left<\bar\psi_a\psi_b\right> = \frac 1 Z \int \mathcal D A \;e^{-S_\mathrm{gauge}[A]} \int \mathcal D\bar\psi \;\mathcal D \psi\; \bar\psi_a \psi_b e^{-\bar\psi_i D_{ij}[A] \psi_j}
\]
With $V$ sites, the fermionic exponential $e^{-\bar\psi D\psi}$ may be expanded
in $2^{2V}$ terms, identified by what subset of the $2V$ Grassmann variables is included in each term. The $\mathbb C$-number part of each term is a product of finitely many
components of $D[A]$, and therefore is a holomorphic function of the gauge field 
$A$. Multiplying by any combination of $\bar\psi\psi$ and integrating over
$\mathcal D\bar\psi\mathcal D\psi$ has the effect of selecting one of these
coefficients. Therefore, the integral over fermionic fields yields a
holomorphic function of $A$.

The story remains the same no matter how many fermionic fields are inserted in the expectation value, as long as no Grassmanns are repeated. If a Grassmann is repeated (if the expectation value $\langle \bar\psi_i \psi_i\rangle$ is requested), then the expectation value is simply zero.

\section{Lefschetz Thimbles}

At this point we have motivated the use of deformed contour integrals, and shown that physical observables will remain unchanged, but we have no general principles for selecting integration manifolds on which the average sign is likely to be improved. The Lefschtz thimbles~\cite{witten2011analytic} provide an attractive choice of manifold for improving the sign problem~\cite{Cristoforetti:2012su,cristoforetti2013monte}. Each thimble is an $N$-dimensional manifold extending from a critical point $z_c$ of the action obeying $\left.\partial S\right|_{z=z_c} = 0$. The thimble extends from the critical point along the paths of steepest descent of the real part of the action $\Re S$. The thimble terminates either at infinite, or at a point where the real part of the action diverges (zeros of the fermion determinant have this effect).

The union of all thimbles is not necessarily obtainable as a smooth deformation of the real plane; however, some linear combination of the thimbles always is~\cite{witten2011analytic}. In other words, there is some linear combination of the thimbles that, when integrated, gives a result equal to the integral along the real plane (for any holomorphic integrand). Determining what linear combination is needed may be computationally difficult, as indeed may be the task of enumerating all critical points of the action. Section~\ref{sec:flow} below provides a closely related algorithm which circumvents this problem.

The usefulness of the thimbles is related to the fact that $\Im S$ is constant on each thimble. Note first that the path of steepest descent, defined by
\begin{equation}
\frac{\d \Re A}{\d t} = \frac{\partial S}{\partial \Re A}
\;\text{ and }\;
\frac{\d \Im A}{\d t} = \frac{\partial S}{\partial \Im A}
\text,
\end{equation}
can also be written in the form $\dot A = \bar\partial S$. It follows that the change of $S$ with flow time is given by
\begin{equation}
\frac{\d S}{\d t} = \frac{\d A}{\d t} \frac{\partial S}{\partial A}
=
\left|\frac{\partial S}{\partial A}\right|^2
\text.
\end{equation}
As the change in the action is real, the imaginary part is constant along each path of steepest descent, and therefore all over the thimble. The fact that $\Im S$ is constant along a thimble means that, within one thimble and neglecting the Jacobian, there can be no phase cancellations in the integral.

It is not the case, however, that Lefschetz thimbles completely remove the sign problem. Lattice theories often have multiple thimbles contributing to the integral~\cite{tanizaki2016lefschetz}, with different phases, creating a sign problem. Additionally, although $\Im S$ is constant, the Jacobian (i.e.\ the curvature of the thimble) introduces its own contribution to the phase~\cite{Lawrence:2018mve}.

\section{Holomorphic Gradient Flow}\label{sec:flow}

A thimble may be defined as the union of all solutions $A(t)$ to the holomorphic gradient flow equation
\begin{equation}\label{eq:flow}
\frac{\d A}{\d t} = \frac{\partial S}{\partial \bar A}
\end{equation}
that approach a certain critical point $A_c$ in the early-time limit: $\lim_{t\rightarrow -\infty} A(t) = A_c$. This definition makes apparently the fact that any thimble is left invariant under the action of the gradient flow. In fact, thimbles are not only fixed points of the flow, but attractive fixed points: any nearby manifold will evolve to become closer to the thimble, under the flow.

It follows that the holomorphic gradient flow can be used to construct an algorithm for integrating along the thimbles~\cite{Alexandru:2015xva}. Define a function $\tilde A_T(A)$ to be the result of evolving the point $A$ under (\ref{eq:flow}) for time $T$. Under mild conditions on the action (holomorphicity of $e^{-S}$ is sufficient), this is a continuous function of $A$ and $T$, and therefore defines a contour of integration homotopic to the real plane. Moreover, in the limit of large $T$, this integration contour approaches some linear combination of the Lefschetz thimbles, and is therefore expected to have an improved sign problem. The Monte Carlo integration is performed by sampling $A$ according to a modified action
\begin{equation}
S_{\mathrm{eff}}(A)
=
S[\tilde A_T(A)]
-
\log \det J
\text,
\end{equation}
where $J$ is the Jacobian of $\tilde A$, i.e.\ the matrix of complex first derivatives.

Although the thimbles are only obtained in the long-time limit of the gradient flow, any manifold created by flowing for a finite amount of time can be used as an integration contour. In practice, it is found that flowing for a short amount of time can dramatically improve the average sign at a relatively small computational cost~\cite{Alexandru:2015sua}. This choice of integration manifold, defined by flowing the real plane for some fixed time $T$, is both useful in practice and provides a guide for the search for other manifolds.

\subsection{Algorithmic Costs}
Methods based directly on the holomorphic flow have a substantial drawback: the computation of $\det J$ is expensive. The evolution of the Jacobian induced by (\ref{eq:flow}) is given by
\begin{equation}
\frac{\d J}{\d t} = \bar H \bar J
\text,
\end{equation}
where $H$, the Hessian, is the matrix of holomorphic second derivatives of the action. The matrix multiplication is unavoidable when computing $\det J$, and requires about $O(n^3)$ steps in practice for an $n\times n$ matrix. The asymptotic time complexity of one step of the flow, then, is approximately cubic in the volume of the lattice. Much of the technical effort around flow is motivated by the desire to avoid this cost, including by computing an approximation to the determinant and reweighting~\cite{Alexandru:2016lsn}, or modifying the Monte Carlo sampling to automatically include the Jacobian~\cite{Alexandru:2017lqr}.

Although flow-based methods can improve a wide variety of sign problems without much need for model-specific tweaks, the expense of the procedure restricts the method to small lattices in practice. The parameterization of the integration manifold by the real plane is also not particularly convenient: in the limit of long flow times, an entire thimble is mapped to by a single point, creating large potential barriers (in parameterization space) between different thimbles. Finally, we will see in Section~\ref{sec:suboptimal} that the Lefschetz thimbles, the ``ultimate goal'' of flow-based methods, are not the best possible manifold, and that for large lattices it may be possible to dramatically improve the sign problem with a different integration contour. This motivates us to continue the search for other manifolds.

\section{Machine Learning Manifolds}

Instead of directly integrating on the flowed manifold $\tilde A_T(A)$, we can use machine learning methods to create a computationally efficient approximation, and perform the integration on the approximated manifold instead~\cite{Alexandru:2017czx}.

\subsection{Feed-Forward Networks}

\begin{figure}
\centering
\includegraphics[width=3.5in]{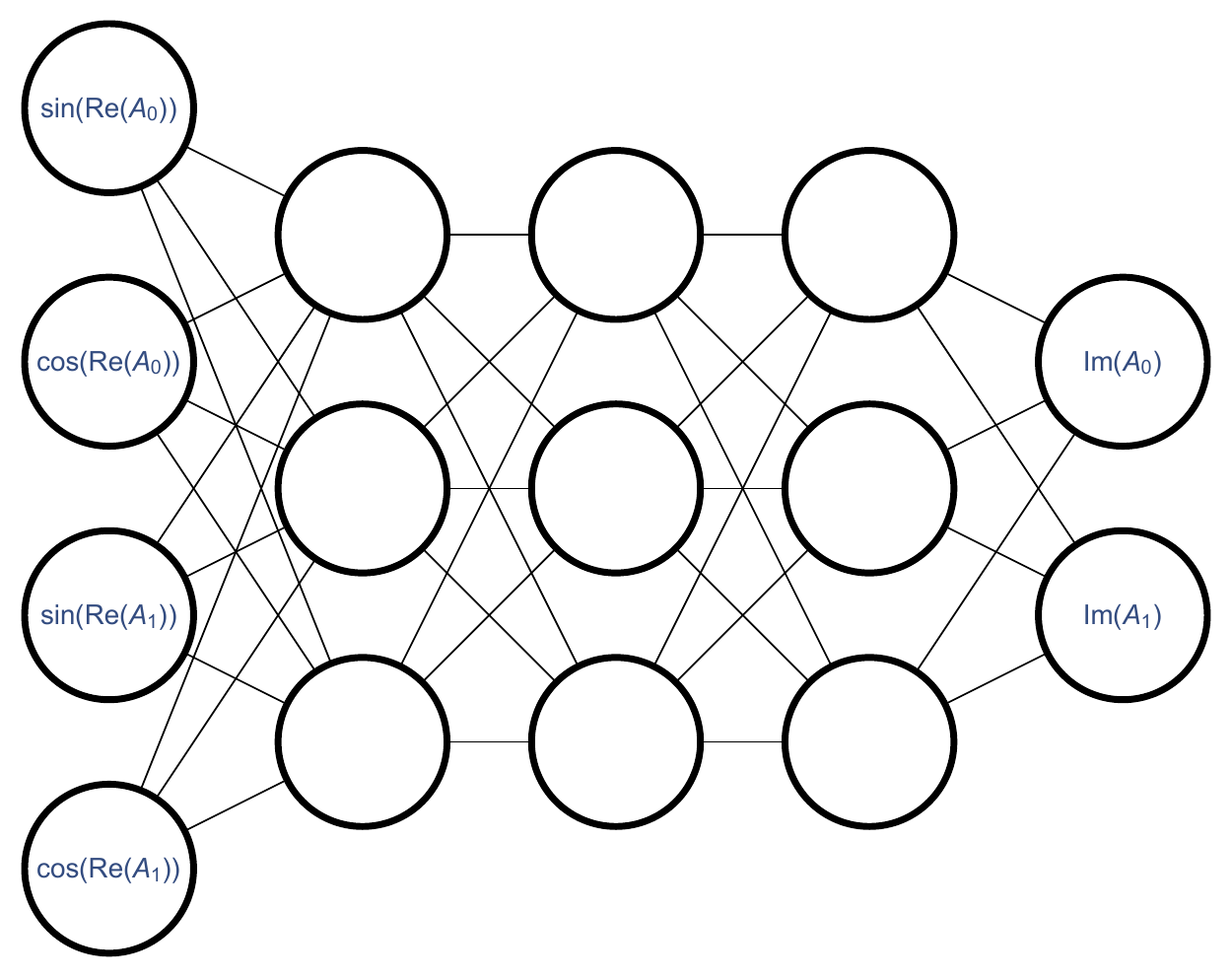}
\caption{Figure from~\cite{Alexandru:2017czx} depicting a feed-forward network, used to define a manifold in the complexified field space.\label{fig:feed-forward}}
\end{figure}

Feed-forward networks are a particular class of nonlinear functions $f(\vec x)$ of many variables which are particularly fast to compute. A feed forward network, depicted in Figure~\ref{fig:feed-forward}, consists of several layers, each containing a set of nodes. The initial layer (shown on the left) is the input layer, and to each node is associated one input variable $x_i$. Values are propagated through the network from left to right. The values $y_j$ in the second layer are determined from the values of the first layer by first performing some linear transformation on the $x_i$ (usually thought of as a set of weights associated to the edges between the first and second layers), and then performing a nonlinear transformation on each node in the second layer separately. Letting $w$ be the matrix of weights, $b_j$ be a linear bias associated to node $j$ of the second layer, and $\sigma(\cdot)$ be the nonlinear transformation, we have $y_j = \sigma(b_j + w_{ji} x_i)$. This process is then repeated for every subsequent layer. The output $f(\vec x)$ is extracted from the output node, or output nodes in the case of a vector-valued function.

A feed-forward network $f(\vec x)$ can be made to define a manifold by taking it to yield the imaginary part of a point on the manifold when given the real part as input:
\begin{equation}
\tilde A(A) = A + i f(A)
\text.
\end{equation}
Here, because $A$ has many components, there may in general be many functions $f(A)$, which must be trained separately. The number of separate functions required can be reduced by imposing translational invariance and other symmetries found in the target action. This construction of a manifold is not completely general, because it requires that there be only one point on the manifold with any given set of real coordinates. Nevertheless, this suffices to describe any manifold obtained by a sufficiently small amount of flow, and no example of a flowed manifold that folds back on itself (that is, for which the imaginary coordinate isn't a single-valued function of the real coordinate) has been found in practice.

\subsection{Training}

In order to find a suitable function $f(\vec x)$ for our manifold, a cost function must be defined, which we will seek to minimize. We may use the holomorphic gradient flow as a guide. After flowing some set of points from the real manifold, we obtain a training set of $\mathcal N$ points $(x,y)$ located on the flowed manifold. The cost function
\begin{equation}
C(w,b) = \frac 1 {\mathcal N}
\sum_{h= 1}^{\mathcal N}
\left|
\vec f_{w,b}(\vec x^(h)) - \vec y^(h)
\right|
\end{equation}
attempts to compute the imaginary part of each of these points by looking at the real part, and takes the average error.

It remains only to perform the minimization of the cost function; this minimization is usually done with some form of gradient descent algorithm. The space of biases and weights is of quite large dimension, and the cost function has many local minima, so some experimentation with different algorithms is advisable. An extensive review of gradient descent algorithms is given in~\cite{2016arXiv160904747R}; the Adaptive Moment Estimate algorithm~\cite{2014arXiv1412.6980K} (dubbed \textsc{Adam}) was used in~\cite{Alexandru:2017czx} for the purposes of training the manifold.

\section{Manifold Optimization}\label{sec:manifold-optimization}

The manifold learning method used a cost function defined by taking the distance between an ansatz manifold (e.g.\ a feed-forward network) and a set of training data constructed using the holomorphic flow. We can, however, use a more directly relevant cost function, constructed from the sign problem itself~\cite{Alexandru:2018fqp,mori2018application}. An immediate obstacle is that, if we chose the cost function to be the average sign evaluated on the manifold $\mathcal M$, we find that the task of evaluating the cost function on a given manifold is as difficult as measuring any other observable on that manifold. The average sign is a noisy observable, and it is difficult to measure precisely when there is a bad sign problem.

It turns out, though, that an inability to evaluate the cost function is no obstacle to its optimization. We select as our cost function the log of the average sign:
\begin{equation}
C(\mathcal M) = - \log \langle\sigma\rangle_{\mathcal M}
\text.
\end{equation}
Here we have written the cost function directly as a function of the manifold $\mathcal M$, and the average sign on that manifold is denotes $\langle\sigma\rangle_{\mathcal M}$. Where a family of manifolds parameterized by some $\lambda$ is used as an ansatz, this induces a cost function of the space of $\lambda$.

This cost function is no easier to evaluate. However, the derivative with respect to some manifold parameter $\lambda$ is quite simple, as a result of the fact that the physical partition function $Z$ cannot depend on the choice of manifold.
\begin{equation}
\frac{\partial}{\partial\lambda} C(\mathcal M_\lambda)
= - \frac{\partial}{\partial\lambda} \log Z_Q(\mathcal M_\lambda)
\end{equation}
We see that $\partial_\lambda C$ is a derivative of the log of the \emph{quenched} partition function. This is an expectation value of the quenched system, which can be evaluated without encountering a sign problem.At this point, we may again apply any minimization method to optimize the cost function, just as was done for the manifold learning procedure above.

This method has two substantial advantages over manifold learning. First, it does not require the potentially expensive step of preparing a library of flowed points to use as training data. Second, while the manifold learning procedure can at best be expected to perform (as measured by $\langle\sigma\rangle$) as well as the holomorphic flow, manifold optimization makes no reference to the flow and can in principle find manifolds with milder sign problems than any reached by flowing. We will see later that this is in fact the case for the Thirring model, even with a relatively simple ansatz.

\subsection{Another View of the Flow}

The flow was originally motivated by the observation that, in the limit of long flow times, the manifold would approach the Lefschetz thimbles, which generally have a substantially milder sign problem than the real plane. However, the flow has been found in practice to greatly improve the sign problem even for quite small flow times, without coming particularly close to the thimbles. This should be surprising: why does the flow perform so well, away from the regime where it is a well-motivated procedure?

The picture of manifold optimization above provides us with an answer. Take as an ansatz the family of manifolds $\tilde z(x)$ defined by interpolating from a fine mesh. The real plane itself is in this ansatz: the value of $y$ associated each $x$ in the mesh is $y_i(x) = 0$. Taking this as our starting point, we perform gradient descent on the cost function $C[y] = -\log \langle\sigma\rangle$. The gradient is
\begin{equation}
- \frac{\partial}{\partial y(x)}\log Z_Q
=
\frac 1 {Z_Q} \int e^{-\Re S} \frac{\partial \Re S}{\partial y(x)}
\text.
\end{equation}
We see that, starting from the real plane, the holomorphic gradient flow is in fact moving (in manifold space) in the direction which most quickly improves the average sign. Unfortunately, after the first infinitesimal step of flow has been performed, there is no longer a simple expression for the behavior of the manifold optimizing flow.

\section{Application to the Thirring Model}\label{sec:thirring}

The Thirring model~\cite{thirring1958soluble} is a common target for methods designed to alleviate or remove a fermionic sign problem. In $1+1$ dimensions, it is defined in the continuum by the Euclidean action
\begin{equation}\label{eq:thirring-action}
S=\int \d^{2}x\ \left[\bar\psi^\alpha (\slashed{\partial}+\mu \gamma_0 +m)\psi^\alpha 
+ \frac{g^2}{2N_F}\bar\psi^\alpha\gamma_\mu\psi^\alpha \bar\psi^\beta\gamma_\mu\psi^\beta\right],
\end{equation}
where the flavor indices take values $\alpha,\beta=1,\ldots,N_F$, $\mu$ is the chemical potential, and $\psi$ is a two-component spinor.

The four-fermi interaction is removed by introducing an auxilliary field $A_\mu$, which we take to be periodic with period $2\pi$. The resulting lattice action is 
\begin{equation}\label{eq:thirring-lattice-action}
S = \frac {N_F}{g^2} \sum_{x,\nu} (1 - \cos A_\nu(x)) 
+ 
\sum_{x,y}
\bar\psi^a(x) D_{xy}(A) \psi^a(y)
\end{equation}
where the spin index $a$ is implicitly summed over $a=1,2$. For Kogut-Susskind staggered fermions~\cite{kogut1975hamiltonian}, the matrix $D$ is defined by
\begin{equation}\label{eq:thirring-fermion-matrix}
D_{xy} = m\delta_{xy} + \frac{1}{2}\sum_{\nu=0}^2  
\Big[ 
 \eta_{\nu}(x) e^{i A_\nu(x)+\mu \delta_{\nu 0}} \delta_{x+\hat\nu, y}
 -\eta^\dag_{\nu}(y)e^{-i A_\nu(y)-\mu \delta_{\nu 0}}  \delta_{x, y+\hat\nu}
\Big]\text.
\end{equation}
This is of course not the only discretization possible. Another, with Wilson fermions~\cite{Wilson:1974sk}, yields the fermion matrix
\begin{equation}\label{eq:wilson-fermion-matrix}
D^W_{xy} = \delta_{xy} - \kappa \sum_{\nu=0,1}  
\Big[ 
 (1-\gamma_\nu) e^{i A_\nu(x)+\mu \delta_{\nu 0}} \delta_{x+\nu, y}
 + (1+\gamma_\nu) e^{-i A_\nu(x)-\mu \delta_{\nu 0}}  \delta_{x, y+\nu}
\Big]\text.
\end{equation}
Except where otherwise noted, statements in this chapter are applicable to both discretizations. As usual, because the lattice action is quadratic in the fermion fields, they can be integrated out of (\ref{eq:thirring-lattice-action}), yielding
\begin{equation}\label{eq:thirring-effective-action}
S = \frac{N_F}{g^2} \sum_{x,\nu} ( 1 - \cos A_\nu(x))
- \frac {N_F}{2} \log \det D(A)
\text.
\end{equation}
We will work in the case $N_F=2$.

The Thirring model has no sign problem at vanishing chemical potential. At finite chemical potential there is, as usual, a sign problem exponentially bad in the volume. The sign problem is made worse at larger couplings and larger chemical potentials. The sign problem of the Thirring model has been extensively investigated with flow-based methods, in both $0+1$ dimensions~\cite{Alexandru:2015xva,Alexandru:2015sua,Alexandru:2017oyw} and $1+1$ dimensions~\cite{Alexandru:2016ejd}. Attempts have also been made to approximate the Thirring model by integrating on a single thimble in isolation~\cite{Fujii:2015vha,Fujii:2015bua,DiRenzo:2020vou}.

\subsection{Field Complexification}
The path integral for the lattice Thirring model defined in this way is an integral over the manifold $S^{d V}$, that is, one copy of the unit circle for each variable $A$, of which there are $dV$, where $V$ is the volume and $d$ the dimension of the lattice. In order to apply the methods of complexification, we need to construct a space with complex structure which includes $S^{dV}$.

The complexification of $S^1$ is a cylinder: the set of points $(x,y)$ such that $x \in [0,2\pi)$ and $y$ is an unbounded real number. Under the exponential map, the original domain of $A_\nu$ maps to the unit circle in the complex plane. The full complexified space maps to the complex plane with one point removed, $\mathbb C\setminus\{0\}$. As the integration space is just the product of many copies of $S^1$, the complexification is the product of many cylinders. Topologically the space is $\left(S^1 \times \mathbb R\right)^{d V}$.

The real plane, which we will refer to as $\mathcal M_0$, may be defomed to another manifold $\mathcal M_1$ without changing the value of the path integral as long as the two manifolds together form the boundary of a closed region in $(S^1\times\mathbb R)^{dV}$. A sufficient condition is that there exists a homotopy between $\mathcal M_0$ and $\mathcal M_1$, that is, a continuous family of functions $g_t : \mathcal M_0 \rightarrow (S^1 \times \mathbb R)^{dV}$ from the real plane to the complex plane, such that $g$ is continuous in both $t$ and its argument, $g_0$ is the identity function, and the range of $g_1$ is $\mathcal M_1$. This sort of construction is also convenient from a computational point of view, as the function $g_1$ already provides a parameterization of the manifold to be integrated on.

For the purposes of the Thirring model, we can be even less general. Any manifold of the form
\begin{equation}
\tilde A_\nu(x) = A_\nu(x) + i f_\nu^{(x)}(\vec A)
\end{equation}
such that $f_\nu^{(x)}$ is a continuous function in its $dV$ arguments, is homotopically connected to the real plane. The homotopy is constructed by scaling the function $f_\nu^{(x)}$ by $t$.

\subsection{An Ansatz}
This is the manifold ansatz we will consider~\cite{Alexandru:2018fqp}:
\begin{equation}
\tilde A_0(x) = A_0(x) + i f(A_0(x)) \;\text{ and }\; \tilde A_i(x) = A_i(x)
\text.
\end{equation}
This is an enormously constrained ansatz: we have left all dimensions other than $\nu=0$ undeformed, and the deformation of the integral over $A_0(x)$ does not depend on the value of $A$ at any other link. Additionally, we have used the fact that the action is translationally invariant to infer that the ansatz should be as well. Nevertheless, the ansatz still has an infinite number of parameters. The requirement that $f$ be a continuous function suggests that we expand it in a fourier series:
\begin{equation}\label{eq:thirring-ansatz}
f(z) = a_0 + a_1 \cos(x) + a_2 \cos(2x) + \cdots + b_1 \sin(x) + \cdots
\text.
\end{equation}
The action is symmetric under $A_0 \rightarrow -A_0$, which suggests that the chosen manifold ought to be as well, so we can set $b_i = 0$. Finally, to have a finite number of manifold parameters, we truncate the fourier series to the first $3$ even terms, with coefficients $a_0$, $a_1$, and $a_2$.

\subsection{Phase Diagram}

Now that a plausible ansatz is constructed, it can be optimized with the methods of Section~\ref{sec:manifold-optimization}. The manifold parameters depend on the model parameters, so the optimization must be performed separately for every set of model parameters.

We simulate the $2+1$-dimensional Thirring model~\cite{Alexandru:2018ddf} with bare lattice parameters $g = 1.08$ and $m = 0.01$. All results are quoted in lattice units; physical quantities may be recovered by multiplication with the appropriate power of the lattice spacing. This choice of $g$ and $m$ puts the lattice model in the strong coupling regime: in a box of size $10^2$, we measure a fermion mass $m_f = 0.46(1)$ and a boson mass $m_b = 0.21(1)$. 

We focus on the chiral condensate, defined by the expectation value $\left<\bar\psi\psi\right>$. At low temperatures and low chemical potentials, the chiral condensate has a non-zero value, indicating the breaking of chiral symmetry. At either high temperature or larger chemical potential, chiral symmetry is nearly (because $m > 0$) restored, and the chiral condensate drops to near zero.

\begin{figure}
	\centering
	\includegraphics[width=0.48\textwidth]{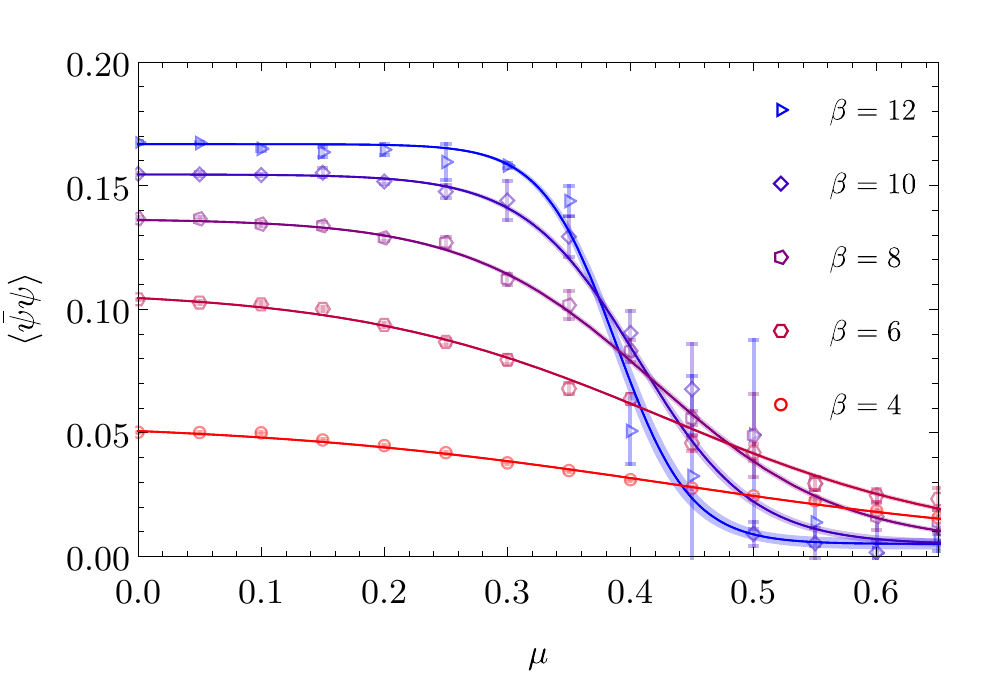}
	\includegraphics[width=0.48\textwidth]{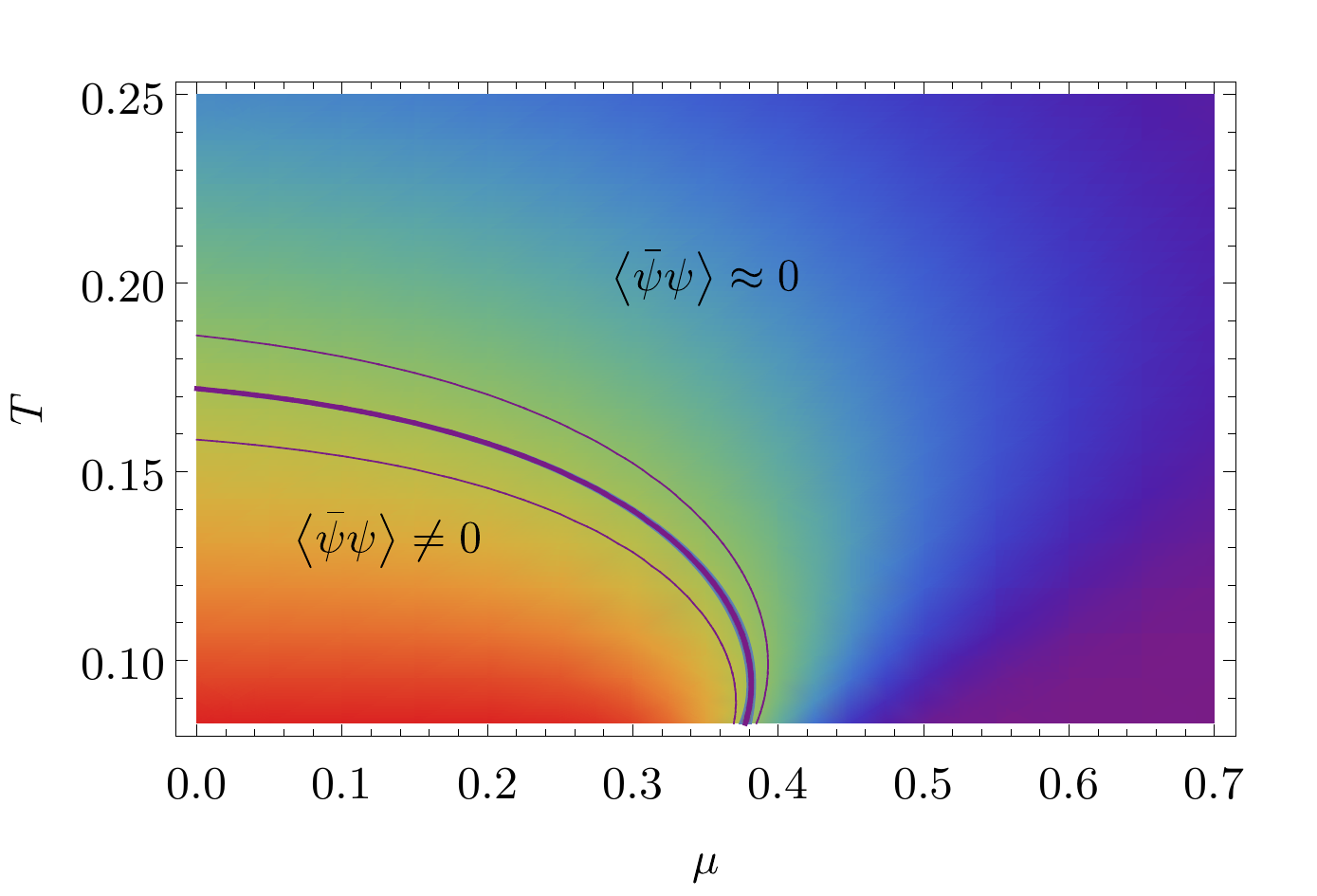}
	\caption{Figure from~\cite{Alexandru:2018ddf} of the chiral symmetry breaking phase transition of the Thirring model. On the left, the condensate $\left<\bar\psi\psi\right>$, as a function of chemical potential $\mu$ on $\beta\times 6^2$ lattices. On the right, the full $T$-$\mu$ plane for the same spatial volume. The central band indicates the location of $\left<\bar\psi\psi\right>_{\mu,T} = 0.5 \left<\bar\psi\psi\right>_{0}$; the thinner lines on either side indicate the width of the crossover\label{fig:transition}}
\end{figure}

Figure~\ref{fig:transition} shows measurements of the chiral condensate on a $6^2$ spatial lattice; the size of the time dimension is temperature-dependent. At low temperatures, a relatively sharp transition between the broken and unbroken phases is seen near $\mu\sim 0.4$. The crossover broadens at higher temperatures, and moves to lower chemical potentials.

\section{Optimal Manifolds}

The general method of complexification may fail for two different reasons. For any given model, the complexification method may fail because no manifold that removes the sign problem exists, or it may fail because the manifold is just too computationally expensive to integrate on (for instance, it may be difficult to find in the first place!). In practice it is difficult to distinguish these two failure modes unless we can prove no satisfactory manifold exists. In this section we consider general questions about the ``best possible'' manifolds, that is, those that minimize the quenched partition function, and look in particular at one case where a ``perfect'' manifold can be found, and at another where we can prove none exists.

\subsection{Lefschetz Thimbles Are Not Optimal}\label{sec:suboptimal}

\begin{figure}
\centering
\includegraphics[width=3in]{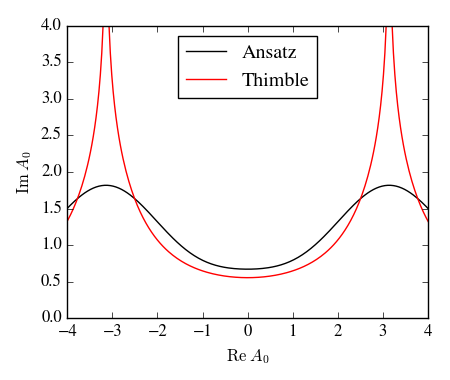}
\caption{Thimbles and an optimized manifold from the ansatz \ref{eq:thirring-ansatz} for the heavy-dense limit of the Thirring model, or equivalently the one-site Thirring model.\label{fig:thirring-thimble-picture}}
\end{figure}

The lattice Thirring model (\ref{eq:thirring-lattice-action}) becomes trivial in the heavy-dense limit of large chemical potential $\mu$. Physically, the large chemical potential pushes the Fermi momentum up to the lattice cutoff, so that every site of the lattice is filled by a fermion. To see this effect algebraically, we can expand the fermion determinant in powers of $e^{-\mu}$. The leading-order term is the one in which only time-like links are included. The physical `saturation' effect of the lattice manifests in the partition function factorizing to leading order $e^{-\mu}$, so that each link is now independent and uncoupled from all other links:
\begin{equation}
	Z = \left[\int \d A_0 \;e^{\frac {N_F} {g^2} \cos A_0 + \mu + i A_0}\right]^{\beta L} \left[\int \d A_1\; e^{\frac {N_F} {g^2} \cos A_1}\right]^{\beta L}\text.
\end{equation}
With exponents appropriately modified, this factorization holds independent of the number of spacetime dimensions. This observation provides a convenient \emph{post-hoc} rationalization for the ansatz of Section~\ref{sec:thirring}: that ansatz defines the most general manifold which maintains all the symmetries of the heavy-dense limit of the Thirring model\footnote{Strictly speaking, we have also imposed that $\Im A_0$ be a single-valued function of $\Re A_0$.}.

This trivial limit also allows us to study how the Lefschetz thimbles compare to a ``best-possible'' manifold~\cite{Lawrence:2018mve}. Because the partition function factorizes, the average sign does as well:
\begin{equation}
\langle\sigma\rangle
=
\left(\frac{Z_1}{Z_{1,Q}}\right)^{\beta L}
\end{equation}
where $Z_1$ and $Z_{1,Q}$ denote the partition function and quenched partition function, respectively, of the one-site model. This allows us to accurately compute the average sign at large volumes, where a direct calculation of $\left<\sigma\right>$ would be impractical. The thimbles and the ansatz of that model are shown in Figure~\ref{fig:thirring-thimble-picture}.

Numerical evidence indicates that the ansatz (with all Fourier coefficients maintained) achieves an average phase of exactly $1$, regardless of coupling. The average phase on the thimbles, meanwhile, is less than $1$ for any nonvanishing coupling. In the case of $g = 1.08$, the average phase obtained on the thimbles is $0.985$ for the one-link model, and therefore $(0.985)^{\beta L}$ for the full lattice.

This heavy-dense limit of the Thirring model is one of the few physically-inspired models in which both the Lefschetz thimbles and a provably optimal manifold can be understood. In this case, a manifold that completely solves the sign problem does exist, and the thimbles do not.

The fact that the sign problem of this model can be solved exactly may be very special --- in fact, in the next section, we will see a (less physical) model in which no manifold can solve the sign problem. However, the fact that the Lefschetz thimbles are non-optimal is probably less special.  Looking at Figure~\ref{fig:thirring-thimble-picture}, note that the thimbles contain a sharp `cusp', where the contour folds back along itself. At stronger couplings, the cusp becomes sharper, and the thimbles come closer to each other. The action on one side of the cusp does not differ much from the action on the other, but the sign of the integration element $\d z$ flips. Therefore, the contributions to the integral from the two sides of the cusp almost exactly cancel. The ansatz manifold cuts off the cusp, removing the considerable residual sign problem in this region.

\subsection{A Complexification-Immune Sign Problem}

Not every sign problem can be removed with complexification. A simple example serves to prove the point:
\begin{equation}
Z(\epsilon) = \int_0^{2\pi} \d \theta\; \left(\epsilon + \cos \theta\right)
\text.
\end{equation}
This partition function should be thought of as a `lattice' with a single degree of freedom $\theta$ on a single site, and an action of $S_\epsilon(\theta) = - \log\left(\epsilon + \cos\theta\right)$. We are interested in the regime of small $\epsilon$, where we will be able to establish upper bounds on the best possible average phase $\langle\sigma\rangle$.

We begin by taking $\epsilon = 0$, where the partition function vanishes. Here there are two thimbles, constituting the two halves of the real line, and the antisymmetry of the Boltzmann factor under $\theta \rightarrow \theta+\pi$ causes them to exactly cancel. The partition function will vanish no matter what manifold is chosen. The quenched partition function depends strongly on the manifold, but can be rigorously bounded from below. For any $x \in [0,2\pi)$, the chosen manifold must have a point $\theta$ with $\Re \theta = x$. Because $\cos\theta$ is minimized, for any fixed $\Re\theta$, by $\Im\theta = 0$, the minimum possible value of $|e^{-S(x)}|$ is achieved at $\theta = x$. Therefore we can do no better in minimizing $Z_Q$ than to integrate over the real line, resulting in the bound
$Z_Q \ge 4$.
Because the partition function itself vanishes, the average sign will always be zero. This is a pathological example.

The pathology is lifted by introducing $\epsilon$. At small $\epsilon>0$, the thimbles remain on the real line, but the cancellation is no longer exact, and the partition function no longer vanishes, but is instead given by $Z = 2\pi \epsilon + O(\epsilon^2)$. The average sign, then, is forced to be of order $\epsilon$ as well:
\begin{equation}
\langle\sigma\rangle \le \epsilon \frac{\pi}{2} + O(\epsilon^2)
\text.
\end{equation}
We conclude that, at small but nonvanishing $\epsilon$, there is no manifold that can improve the sign problem beyond what is achieved on the real line. Moreover, even with only one degree of freedom, the sign problem on the real line can be made arbitrarily bad.

A key feature of this example is the presence of multiple cancelling thimbles. As mentioned earlier, the other way in which the Lefschetz thimbles fail to completely remove the sign problem is via the residual phase introduced by the Jacobian. Whether this residual phase can always be counteracted by deforming away from the thimbles (at in the heavy-dense limit of the Thirring model) remains an open question.

%auto-ignore
\chapter{Quantum Simulations}\label{ch:quantum}

In this chapter we discuss the use of a quantum computer in studying the time-evolution of physical quantum systems. Sections \ref{sec:quantum-computer} and \ref{sec:quantum-simulation} provide introductions to quantum computing and quantum simulations, respectively; however, a cursory overview of quantum computers suffices to show that they are a powerful tool for studying quantum systems.

After abstracting away implementation details\footnote{Crucially, this also requires abstracting away the fact that, as of this writing, no quantum computer exists at the scale necessary to perform any field theory simulation discussed in this chapter.}, a quantum computer consists of a set of qubits, and the ability to apply arbitrary unitary operations on pairs of qubits. The state of a single qubit is described by a two-dimensional Hilbert space $\mathcal H_2 = \mathrm{span} \{\ket{0},\ket{1}\}$. In a computer with $N$ qubits, the full Hilbert space is given by the tensor product of $N$ copies of the single-qubit Hilbert space, $\mathcal H_{QC} = \mathcal H_1^{\otimes N}$. This Hilbert space describes the set of possible states of the quantum computer; in addition, there is a set of unitary operations (termed `gates') on this Hilbert space, which may be performed in any order in order to manipulate the qubits.

We are interested in simulating some physical system, described by a Hilbert space $\mathcal H$, and a time-evolution operator $U(t) = e^{-i H t}$. Here it becomes clear how a quantum computer might be useful. Two Hilbert spaces of equal dimension are necessarily isomorphic, so it is possible to establish a mapping between the physical Hilbert space $\mathcal H$ and (some linear subspace of) the quantum computer's $\mathcal H_{QC}$. If, after this mapping is established, the time-evolution operator $U(t)$ can be efficiently implemented in terms of the available quantum gates, then it will be possible to simulate time-evolution of the physical system with the quantum computer. In Section \ref{sec:quantum-simulation} we will see that, as shown in \cite{Lloyd1073}, this is true for a large class of physically relevant systems.

\section{Digital Quantum Computers}\label{sec:quantum-computer}

In this section we give an expedited overview of quantum computation, tailored to those aspects which will be important in designing quantum simulations.

\subsection{A Single Qubit}
For physical intuition, one may think of a single qubit as being implemented by a quantum spin-$1/2$ system, although any two-state system will suffice and many are used in practice. The state of a single qubit is a vector in the Hilbert space $\mathcal H_1 = \mathrm{span}\{\ket{0},\ket{1}\} \approx \mathbb C^2$. There are two types of manipulations we perform on a qubit: quantum gates acting on 1 or 2 qubits, which correspond to unitary $2\times 2$  or $4\times 4$ matrices, and measurements, which yield classical information while collapsing the state of the qubit.

The set of unitary operators on $\mathbb C^2$ is denoted $U(2)$. An overall phase on a quantum state cannot be measured and is treated as physically irrelevant. For this reason, the set of physically distinct quantum operations on one qubit is actually $U(2) / U(1) \approx SU(2)/\mathbb Z_2$.

Implementing the uncountable set of operations directly is often inconvenient, particularly when constructing an error-correct quantum commputer. Instead, one implements a small discrete subset of these operations (fundamental gates), such that any unitary operator can be arbitrarily well approximated by a sequence of fundamental gates. A common set of fundamental gates are
\begin{equation}
H = \frac{1}{\sqrt{2}}\left(\begin{matrix}
1 & 1\\
1 & -1
\end{matrix}
\right)
\hspace{2em}
\text{and}
\hspace{2em}
T = \left(
\begin{matrix}
e^{i \pi / 8} & 0\\
0 & e^{-i \pi / 8}
\end{matrix}
\right)
\text,
\end{equation}
and as shown by Solovay and Kitaev, any operation in $U(2) / U(1)$ can be approximated to within $\epsilon$ with $O(1/\epsilon)$ gates chosen from this set \cite{1997RuMaS..52.1191K,solovay1999lie,dawson2005solovaykitaev}.

The gate $T$, often referred to as the $\frac \pi 8$-gate, is the exponential of the Pauli matrix $\sigma_z$. Often $T^\dagger$ is included in the set of fundamental gates, but it can of course be obtained as $T^\dagger = T^3$ (note that this equality is true in $U(2)/U(1)$, and not in $U(2)$). The Hadamard gate $H$, which is its own inverse, corresponds to a change-of-basis between the $x$ and $z$ bases: $H \sigma_z H = \sigma_x$.

In addition to gates, we may also perform measurements. For our purposes, we will consider all measurements to be performed in the $z$-basis of $\ket{0}$ and $\ket{1}$. If a qubit is in state $\alpha \ket{0} + \beta \ket{1}$, measurement changes the state to $\ket{0}$ (resp.\ $\ket{1}$) with probability $|\alpha|^2$ (resp.\ $|\beta|^2$), and yields the classical bit $0$ (resp.\ $1$).

\subsection{Coupling Many Qubits}

A quantum computer with only one qubit is of no use --- after all, it can be efficiently simulated by a classical computer by multiplying $SU(2)$ matrices.  Quantum computers become interesting when we add the ability to operate on multiple qubits simultaneously. The resulting operations, with $N$ qubits, are unitaries on the Hilbert space $\mathcal H_1^{\otimes N}$, modulo an irrelevant overall phase. This is the group $U(2^N)/U(1)$.

It is sufficient to add to our set of primitive gates only a single extra gate\footnote{In fact, universal quantum computation can be achieved with almost any multi-qubit gate~\cite{deutsch1995universality}.}, which couples two qubits~\cite{nielsen2001quantum}. A common choice is the controlled-not gate, defined by
\begin{equation}
CX = \left(
\begin{matrix}
1 & 0 & 0 & 0\\
0 & 1 & 0 & 0\\
0 & 0 & 0 & 1\\
0 & 0 & 1 & 0
\end{matrix}
\right)
\text.
\end{equation}
This gate may be thought of as acting on a `control' and a `target' qubit: the target qubit is flipped when the control qubit is $1$. The matrix above is written in the basis $\{\left|00\right>, \left|01\right>, \left|10\right>, \left|11\right>\}$, so that the first (``high-order'') qubit is the control.

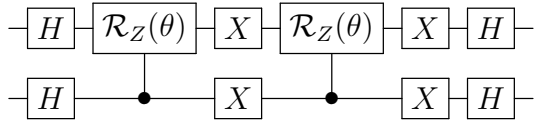
\begin{figure}
\centerline{
\Qcircuit @C=0.6em @R=0.8em {
& \gate{H} & \gate{\mathcal R_Z(\theta)} & \gate{X} & \gate{\mathcal R_Z(\theta)} & \gate{X} & \gate{H} & \qw \\
& \gate{H} & \ctrl{-1} & \gate{X} & \ctrl{-1} & \gate{X} & \gate{H} & \qw \\
}}
\caption{A two-qubit circuit, implementing time-evolution under the Hamiltonian $H = \sigma_x \otimes \sigma_x$.\label{fig:ex-circuit}}
\end{figure}
A quantum circuit is a composition of primitive gates. A simple example is shown in Figure~\ref{fig:ex-circuit}, in which a change of basis is achieved by Hadamard gates, followed by a phase rotation of the $\left|11\right>$ state, followed by a \texttt{NOT} gate on each qubit and another phase rotation. The result is equivalent to time-evolution under a Hamiltonian $H = \sigma_x \otimes \sigma_x$.

\subsection{Some Simple Algorithms}

A small number of quantum algorithms will be directly relevant to the task of creating a simulation, and are introduced here.

The first algorithm to discuss is really a meta-algorithm --- that is, a procedure for producing a quantum algorithm: any reversible classical circuit may be re-interpreted as a quantum circuit. If a reversible quantum circuit starts with some bitstring $x$ and yields $x + f(x)$ (that is, the concatenation of $x$ and $f(x)$), then the corresponding quantum circuit yields the transformation $\left|x\right>\left|0\right> \rightarrow \left|x\right>\left|f(x)\right>$. (As $\left|x\right>$ and $\left|f(x)\right>$ are both computational basis states, this is a unitary  transformation.)

As any classical circuit can be easily transformed to be reversible~\cite{Toffoli:1980}, this implies that any classical algorithm yields a quantum algorithm. This procedure requires polynomially many (in the memory size of the original, non-reversible classical circuit) ancillary qubits.

\subsubsection{Controlled Nots}

The controlled-not operation is usually considered to be a primitive gate, implemented directly by the quantum computing hardware. It is usually convenient, however, to make use of many-controlled nots, such as the Toffoli gate (here called \texttt{CCX}) defined by
\begin{equation}
U_{\texttt{CCX}} \left|110\right> = \left|111\right>
\>\text{ and }\>
U_{\texttt{CCX}} \left|111\right> = \left|110\right>
\text,
\end{equation}
and acting as the identity on all other basis states. These can be implemented from the single-controlled not \texttt{CX} and general one-qubit gates. Furthermore, arbitrarily-controlled nots \texttt{C$^n$X} may be efficiently constructed (with $poly(n)$ gates) by introducing ancillary qubits. In fact this can be improved to remove the need for ancilla~\cite{saeedi2013linear,barenco1995elementary}, but these constructions will not be discussed here.

\begin{figure}
\[
\Qcircuit @C=0.6em @R=0.8em {
& \ctrl{1} & \qw & & & &
& \qw & \qw & \qw & \ctrl{2} & \qw & \qw & \qw & \ctrl{2} & \qw & \ctrl{1} & \gate{T} & \ctrl{1} & \qw \\
& \ctrl{1} & \qw & & = & &
& \qw & \ctrl{1} & \qw & \qw & \qw & \ctrl{1} & \qw & \qw & \gate{T} & \gate{X} & \gate{T^\dagger} & \gate{X} & \qw\\
& \gate{X} & \qw & & & &
& \gate{H} & \gate{X} & \gate{T^\dagger} & \gate{X} & \gate{T} & \gate{X} & \gate{T^\dagger} & \gate{X} & \gate{T} & \gate{H} & \qw & \qw & \qw \\
}
\]
\caption{Implementation of the Toffoli gate from 1- and 2-qubit gates.\label{fig:toffoli-circuit}}
\end{figure}
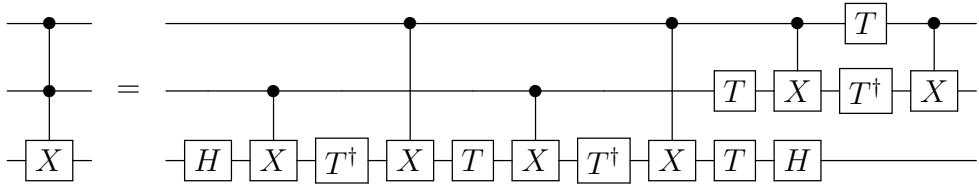

A common construction~\cite{nielsen2001quantum} of the twice-controlled not gate is shown in Figure~\ref{fig:toffoli-circuit}. This is in fact the construction that minimizes the number of \texttt{CX} gates required~\cite{shende2008cnot}. From this building block, $n$-controlled not gates may be constructed with the introduction of $n-2$ ancillary qubits, as demonstrated in Figure~\ref{fig:cnnot-circuit}. Intuitively, each Toffoli may be thought of as adding two bits (base 2), and storing the result in the target bit. Eventually, the sum of all bits is accumulated.

\begin{figure}
\centerline{
\Qcircuit @C=0.6em @R=0.8em{
\left|0\right> & & \gate{X} & \qw & \ctrl{6} & \qw & \gate{X} & \qw & \left|0\right>\\
\left|0\right> & & \qw & \gate{X} & \ctrl{5} & \gate{X} & \qw & \qw & \left|0\right>\\
& & \ctrl{-2} & \qw & \qw & \qw & \ctrl{-2} & \qw &\\
& & \ctrl{-3} & \qw & \qw & \qw & \ctrl{-3} & \qw &\\
& & \qw & \ctrl{-3} & \qw & \ctrl{-3} & \qw & \qw &\\
& & \qw & \ctrl{-4} & \qw & \ctrl{-4} & \qw & \qw &\\
& & \qw & \qw & \gate{X} & \qw & \qw & \qw &\\
}
}
\caption{Construction of a $4$-controlled not gate from the Toffoli gate, with the aid of ancilla. The ancillary qubits are on the top; the target is the bottom qubit.\label{fig:cnnot-circuit}}
\end{figure}
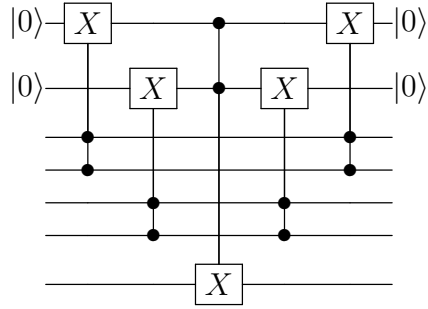

\subsubsection{Circuits From Circuits}\label{sec:cfromc}

Given a quantum circuit implementing a unitary $U$, certain related unitaries can be easily obtained by modification of the quantum circuit. For concreteness, we will assume that the circuit is implemented from the Hadamard gate, the $\frac \pi 8$ gate, and the controlled-not. This constraint can be substantially relaxed without materially changing the methods discussed in this section.

Our first task, given a circuit implementing $U$, is to obtain a circuit implementing the inverse $U^\dagger$. Note first that, for every gate in our gateset, the inverse gate is already known: the Hadamard and \texttt{CX} gates are their own inverses, and $T^\dagger = T^3$. This allows us to construct an inverse circuit simply by inverting each gate and reversing the order of application.
\begin{equation}
U^\dagger
=
\left[
\prod_{i=1}^K V_i
\right]^\dagger
= 
\prod_{i=K}^1 V_i^\dagger
\end{equation}

Our second task, again given a quantum circuit implementing $U$, is to implement the controlled-$U$ operation $U_C$ defined by
\begin{equation}
U_C\left( \left|0\right>\otimes\left|\Psi\right>\right) = \left|0\right>\otimes\left|\Psi\right>
\>\text{ and }\>
U_C \left(\left|1\right>\otimes\left|\Psi\right>\right) = \left|0\right>\otimes U\left|\Psi\right>
\end{equation}
General constructions of controlled circuits are given in~\cite{barenco1995elementary}; for our purposes, we will specialize to the case where the available gates are $H$, $T$, $T^\dagger$, and $CX$. The technique stems from the observation that if $U = V V'$ can be decomposed as a product of (potentially simpler) unitaries, then the controlled unitary is given by the product $U_C = V_C V'_C$ of the controlled versions of the simpler unitaries. By assumption we have $U$ expressed as a product of the fundamental gates, and so we need only construct controlled versions of those four gates.

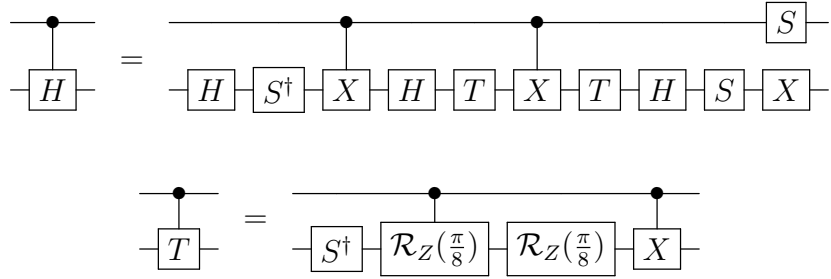
\begin{figure}
\[
\Qcircuit @C=0.6em @R=0.8em {
& \ctrl{1} & \qw & & & & & \qw & \qw & \ctrl{1} & \qw & \qw & \ctrl{1} & \qw & \qw & \qw & \gate{S} & \qw & \\
& \gate{H} & \qw & & \raisebox{1.8em}{=} & & & \gate{H} & \gate{S^\dagger} & \gate{X} & \gate{H} & \gate{T} & \gate{X} & \gate{T} & \gate{H} & \gate{S} & \gate{X} & \qw & \\
}
\]
\vspace{1em}
\[
\Qcircuit @C=0.6em @R=0.8em {
& \ctrl{1} & \qw & & & & & \qw & \ctrl{1} & \qw & \ctrl{1} & \qw & \\
& \gate{T} & \qw & & \raisebox{1.8em}{=} & & & \gate{S^\dagger} & \gate{\mathcal R_Z(\frac\pi{8})} & \gate{\mathcal R_Z(\frac\pi{8})} & \gate{X} & \qw & \\
}
\]
\caption{Construction of the controlled Hadamard (top) and controlled $\pi/8$ (bottom) gates, from the primitive gateset of $H$, $\mathcal R_Z(\theta)$, and $CX$. For brevity, an additional gate is defined as $S = T^2$, and we have used $X = HT^4H$.\label{fig:ch-cphase}}
\end{figure}

The Toffoli gate $CCX$ has been constructed above. Constructions of the controlled Hadamard and controlled $\frac \pi 8$ gates are shown in Figure~\ref{fig:ch-cphase}.

\subsubsection{Quantum Fourier Transform}\label{sec:quantum-fourier-transform}

The Fourier transform, suitably generalized to nonabelian groups is a change-of-basis operation between the regular representation of a group and the Fourier basis. This operation is unitary, and therefore can be implemented by quantum circuits. An efficient (polylogarithmic in the size of the group) implementation for arbitrary groups is not known; nevertheless, circuits are known for large classes of groups~\cite{puschel1999fast}, including abelian groups~\cite{coppersmith2002approximate}.

The quantum Fourier transform is central to major quantum algorithms, most notably Shor's factoring algorithm~\cite{shor1994algorithms} and Grover's search algorithm~\cite{grover1996fast}. For quantum simulations, the relevance of the quantum Fourier transform comes from the fact that it diagonalizes the quantum-mechanical kinetic term $\nabla^2$ of the Hamiltonian. We will see that the asymptotic scaling of the quantum Fourier transform with group size is not very relevant to the efficiency of lattice simulations.

\section{Quantum Simulations in General}\label{sec:quantum-simulation}

\subsection{Mapping Hilbert Spaces}

As discussed at the beginning of this chapter, the first step to setting up a simulation of a physical system on a quantum computer is to establish a mapping between the Hilbert space of the physical system and that of the quantum computer. The Hilbert space of the quantum computer is necessarily finite-dimensional, having $2^Q$ dimensions for a quantum computer with $Q$ qubits; therefore, we are constrained to consider a physical system with a similarly finite Hilbert space. In particular, we will be working at finite volume and lattice spacing.

Any two Hilbert spaces of equal finite dimension are isomorphic. However, it is helpful to have the mapping between the states of the quantum computer and the physical states be a natural one. In particular, in simulating a field theory, it is often helpful to preserve the notion of locality. Many lattice field theories have a Hilbert space which is naturally expressed as a tensor product of many local Hilbert spaces, each associated to some lattice site (or link, for gauge theories). By grouping qubits, and allowing each group to correspond to a single site, we can write the state space of the quantum computer similarly as a tensor product of simpler, local Hilbert spaces.

The nature of the mapping of local Hilbert spaces depends strongly on the system being simulated. The simplest case is when the local Hilbert space is two-dimensional, therefore mapping cleanly to a single qubit. This is true for a spin chain or the $\mathbb Z_2$ gauge theory, and we will use these systems as illustrative examples below.

\subsection{Suzuki-Trotter Decomposition} 

The basic gates discussed above can be viewed as time-evolution under simple Hamiltonians affecting one or two qubits at a time. The one-qubit gates, applied to site $i$, yield evolution under $\vec n \cdot \vec\sigma_i$, while the two qubit gates applied to sites $i$ and $j$ yield evolution under $\sigma^\mu_i \otimes \sigma^\nu_j$ for any $\mu,\nu\in\{x,y,z\}$. A general Hamiltonian of interest, however, is not so simple. The Heisenberg spin chain, for example, has the Hamiltonian
\begin{equation}
H = 
- J_x \sum_{\langle i j\rangle} \sigma^x_i \otimes \sigma^x_j
- J_y \sum_{\langle i j\rangle} \sigma^y_i \otimes \sigma^y_j
- J_z \sum_{\langle i j\rangle} \sigma^z_i \otimes \sigma^z_j
\text.
\end{equation}
Note that because the terms of the Hamiltonian do not commute, the unitary time-evolution operator does not factorize. More complicated systems of physical interest will also have Hamiltonians which can be expressed as sums of few-qubit Hermitian operators which fail to mutually commute.

The Suzuki-Trotter decomposition~\cite{suzuki1976generalized,trotter1959product} provides an approximate factorization of the time-evolution operator in the case where the terms of the Hamiltonian do not commute. With two terms in the Hamiltonian $H = A + B$, we have
\begin{equation}
e^{-i \delta(A+B)} \approx e^{-i\delta A} e^{-i\delta B} + O(\delta^2)
\text.
\end{equation}
If the decomposition is such that $A$ and $B$ have well-understood diagonal bases, the operators $e^{- i A t}$ and $e^{-i B t}$ are readily implemented, and therefore we have an easy implementation of approximate time-evolution, becoming exact in the limit $\delta \rightarrow 0$.

In general, this procedure can be generalized to any sparse efficiently computable sparse Hamiltonian~\cite{Lloyd1073}. In the case of field theories, the Hamiltonian can typically be split into two terms, one diagonal in field space and the other diagonal in conjugate momentum space; thus the general theorem is not needed.

\subsection{Aside: Disordered Potentials}

Here we describe the simulation~\cite{Alexandru:2019dmv} of a particular quantum-mechanical system: the Anderson tight-binding model~\cite{Anderson:1958vr} of a single particle living on $V$ lattice sites. This is a model of a particle in a random potential. The Hamiltonian is
\begin{equation}
\label{eq:site-disorder}
H = - \sum_{<ij>} \left(c^\dagger_i c_j + c^\dagger_j c_i\right)
+
W \sum_i u_i c^\dagger_i c_i
\text.
\end{equation}
Here the $u_i$ are random variables, taken to be independently and identically distributed on the interval $[0,1]$. The first sum is taken over all pairs of neighboring sites, and the parameter $W$ gives the strength of the disorder.

Depending on $W$ and the dimension of the lattice, this model may exhibit \emph{Anderson localization}. When $W = 0$, this is a model of a free particle in a box. A wavefunction that begins concentrated at one site, spreads out throughout the lattice over time. For sufficiently strong disorder, however, an initially concentrated wavefunction will remain concentated at all later times~\cite{Anderson:1958vr,doi:10.1142/7663,RevModPhys.80.1355}. This localization effect is not due to any potential well trapping, but rather interference effects between the different paths a particle could take to propagate.

In one or two dimensions, any amount of disorder yields Anderson localization. In three dimensions, Anderson localization only sets in above the critical disorder of $W_c \approx 16.5$~\cite{brandes2003anderson}, with a second-order transition at that point.

\subsubsection{Simulation}

The Hilbert space of this model is of dimension $V$. At large volumes, where the second-order phase transition is most visible, it becomes numerically difficult to simulate: naive algorithms run in time at least $O(V^2)$. When performing a simulation on a quantum computer, we expect to require only $O(\log V)$ qubits to represent the Hilbert space and similarly $O(\log V)$ operations per time step, indicating that far larger volumes can be obtained at relatively little cost.

We first discuss the simulation of a particle in a one-dimensional random potential~\cite{Alexandru:2019dmv}. Each of the $V$ sites is labelled by an integer $0\ldots V-1$, and the state $\left|i\right>$ is the position eigenstate of an electron located at site $i$. This will be the computational basis. The mapping of the computation basis of the quantum computer is achieved by representing the integer $i$ in binary; thus the physical state $\left|6\right>$ is mapped to the state $\left|110\right>$ on the computer.

The time-evolution is simulated according to the Suzuki-Trotter decomposition, splitting the Hamiltonian into three pieces $H = H_{K,e} + H_{K,o} + H_V$, where:
\begin{align}
H_{K,e} &= \sum_{i = 0,2,\ldots} \left(c^\dagger_i c_j + c^\dagger_j c_i\right)\\
H_{K,o} &=\sum_{i = 1,3,\ldots} \left(c^\dagger_i c_j + c^\dagger_j c_i\right)\\
H_{V} &= W \sum_i u_i c^\dagger _i c_i
\text.
\end{align}
Spliting the kinetic term into even and odd links in this fashion allows it to be simulated without a change of basis to momentum-space, and allows the algorithm to generalize to the case where the disorder in the potential lives on the links instead of the sites. When simulating in $d$ dimensions, this Trotterization scheme requires $2d+1$ steps.

The even links couple states that differ only in the last qubit
\begin{equation}
H_{K,e} = \left|0\right>\left<1\right| + \left|2\right>\left<3\right| + \cdots + \mathrm{h.c.} = I^{\otimes (V-1)} \otimes \sigma_x
\text,
\end{equation}
and $e^{-i H_{K,e} \Delta t}$ is therefore obtained by a rotation about the $X$ axis of the least-significant qubit. The evolution of the odd links can be put into a similar form by first shifting the whole lattice by $1$. This is a change of basis that maps $H_{K,o}$ to $H_{K,e}$. The shift corresponds to the addition of $1$ modulo $V$, for which a classical circuit (and therefore quantum circuit) is readily constructed.

The evolution under the disordered potential is, on its face, more difficult. This evolution requires the phase of the state to be changed by the same random number each time the electron finds itself as a particular site. In a classical simulation, this effect is accomplished by generating a list of $V$ random numbers at the beginning of the computation. This step already exponentially exceeds our $O(\log V)$ budget.

To avoid this, we note that the $u_i$ are typically not truly random variables, but instead are defined to be the output of a pseudo-random number generator (PRNG) with a seed chosen in advance.  A PRNG is a circuit sufficiently complicated that the $u_i$ look random to any practical statistical test. Given a PRNG $f(i)$ returning a $Q$-bit number, we may take $u_i = 2^{-Q} f(i)$. Critically, there exist \emph{seekable} PRNGs, from which the $i$th element $f(i)$ can be obtained in fixed time for any $i$. Two appropriate constructions of PRNGs are discuseed below.

Given a classical circuit for a suitable PRNG, we can construct a quantum circuit $U_f$ defined by $U_f \left|i\right>\left|0\right> = \left|i\right>\left|f(i)\right>$. The evolution under $H_V$ is then implemented by applying $U_f$ to compute the PRNG, and a diagonal phase rotation by the value specified in the anciliary register.

The resulting circuits corresponding to kinetic and potential evolution are shown in Figure~\ref{fig:anderson-circuits}.

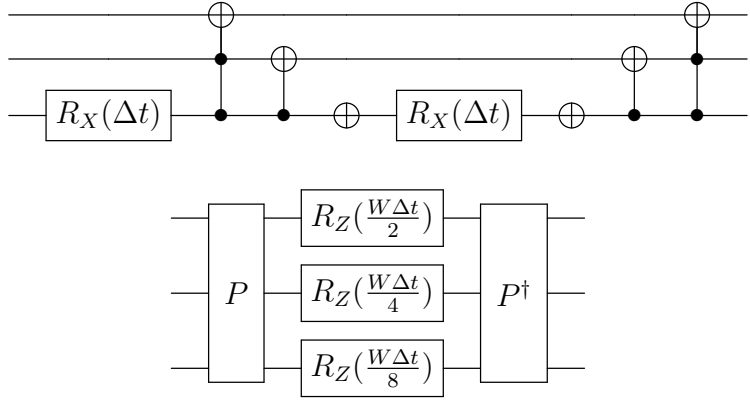
\begin{figure}
\centerline{
\Qcircuit @C=1.2em @R=0.6em {
& \qw & \targ{2} & \qw & \qw & \qw & \qw & \qw & \targ & \qw\\
& \qw & \ctrl{-1} & \targ & \qw & \qw & \qw & \targ & \ctrl{-1} & \qw\\
& \gate{R_X(\Delta t)} & \ctrl{-2} & \ctrl{-1} & \targ & \gate{R_X(\Delta t)} & \targ & \ctrl{-1} & \ctrl{-2} & \qw\\
}
}
\vspace{1.5em}
\centerline{
\Qcircuit @C=1.2em @R=0.6em {
& \multigate{2}{P} & \gate{R_Z(\frac{W \Delta t}{2})} & \multigate{2}{P^\dagger} & \qw \\
& \ghost{P} & \gate{R_Z(\frac{W \Delta t}{4})} & \ghost{P^\dagger} & \qw \\
& \ghost{P} & \gate{R_Z(\frac{W \Delta t}{8})} & \ghost{P^\dagger} & \qw \\
}
}
\caption{Quantum circuits for the simulation of the Anderson model. The time-evolution is trotterized: on top is the kinetic piece, and on bottom is the evolution under a random Hamiltonian. For brevity, $R_X(\theta)$ and $R_Z(\theta)$ denote $e^{i \theta \sigma_x}$ and $e^{i \theta \sigma_z}$, respectively. The construction of a pseudo-random permutation operator $P$ is discussed in the text.\label{fig:anderson-circuits}}
\end{figure}

Whether a system is localized or not can be detected by placing a particle on the lattice (at the origin, say), allowing it to diffuse for a long time, and then observing how close to the origin it remains on average. In the non-localized phase, the particle will diffuse out to infinity; in the localized phase, it will remain within some finite distance. When simulating the Anderson transition at finite volume on a periodic lattice, the degree of localization of a wavefunction can be measured by
\begin{equation}\label{eq:coconut}
D = \frac{L}{\pi\sqrt{2}}\sqrt{1 - \left<\cos \frac{2\pi \hat x}{L}\right>}
\text,
\end{equation}
which, in the large volume limit, yields the average distance. This serves as an order parameter for the Anderson transition. On a three-dimensional lattice, this order parameter as a function of disorder $W$ is seen in Figure~\ref{fig:anderson3d}, with the expected transition being visible even at such small volumes, near $W \sim 15$.

\subsubsection{A Pseudo-Random Number Generator}\label{sec:prng}

We now discuss the construction of a suitable (seekable) pseudo-random number generator. A particularly straightforward construction of a seekable PRNG uses a cryptographic hash function like {\tt SHA256}~\cite{sha256ref}. A sequence of $K$ PRNGs indexed by seeds $k$ are constructed via $f_k(i) = {\tt SHA256}(i*K+k)$. This construction is validated and known to perform well in practice~\cite{salmon:2011}. Unfortunately, near-term quantum computers do not have enough qubits available to compute modern cryptographic hash functions, which operate on fixed-size registers of hundreds of bits.

An alternative approach is to construct the function $f(i)$ from a random reversible classical circuit (which therefore implements a random permutation matrix).

A random permutation matrix $P$ defines a seekable PRNG via $P\left|i\right> = \left|f(i)\right>$. Although the reversibility of the operation implies weak correlations between the different values of $f(i)$, these correlations are unmeasurable in the large volume limit and can be neglected.

We construct a random circuit by appending a fixed-length sequence of not Toffoli gates, acting on random argumnts. With the aid of $Q-3$ ancillary qbits, any permutation matrix can be obtained in this way~\cite{Toffoli:1980}. In the limit of a large number of gates, this samples uniformly from the distribution of permutation matrices.

This construction of a PRNG was validated against the \texttt{dieharder}~\cite{diehard,dieharder} battery of statistical tests in~\cite{Alexandru:2019dmv}. For circuits acting on $30$ bits, a sequence of $600$ random gates was sufficient to consistently pass all statistical tests in the battery, and the number of gates required was found to scale polynomially with the number of bits (and therefore polylogarithmically with the volume).

This construction is further validated in Figure~\ref{fig:anderson3d}, where it is shown that the same physical results are obtained in a simulation using this construction as in a simulation using a conventional (sequential) PRNG.

\begin{figure}
\centering
\includegraphics{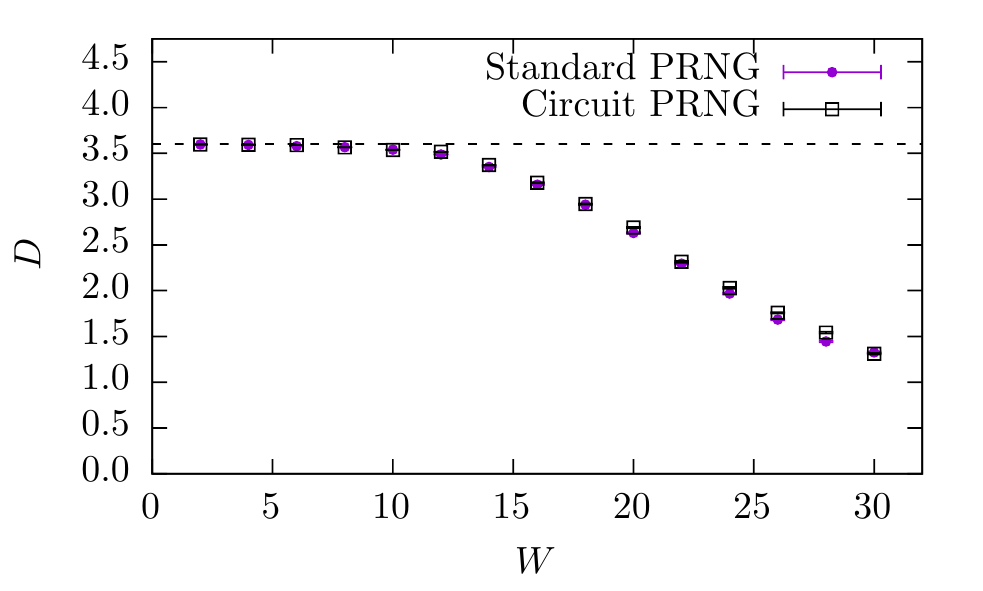}
\caption{Figure from~\cite{Alexandru:2019dmv}, showing average distance as defined by (\ref{eq:coconut}), as a function of the disorder parameter $W$, for a $16^3$ lattice, in the limit of long time evolution. The blue triangles show results obtained with a conventional PRNG, and the red squares give results obtained with the circuit-based PRNG described in Section~\ref{sec:prng}. The dashed line shows the delocalized limit for $D$.\label{fig:anderson3d}}
\end{figure}

\subsubsection{Demonstration}

The method described above has the nice property that even very small numbers of qubits can be used in a sensible simulation, albeit on a small lattice. The field theory simulations we will look at later don't have this property, as representing even one link can require many qubits ($11$, in the main example). Furthermore, because the size of the lattice is exponential in the number of qubits (in contrast to the scaling for simulating a field theory), even near-term quantum computers can simulate lattice sizes at which localization can be seen.

Figure~\ref{fig:anderson-demo} demonstrates the algorithm on two physical quantum processors, one provided by IBM (programmed with \texttt{qiskit}~\cite{aleksandrowicz2019qiskit}) and the other by Rigetti (programmed with \texttt{quil}~\cite{smith2016practical}). The simulation is done on two qubits, and therefore involves four lattice sites. There is no sensible notion of a random potential on four sites; we fix the potential to be $V(0,1,2,3) = 0,1,3,2$, which is computed with a single $CX$ gate. On each processor, for each point, we perform $300$ quantum measurements to estimate $D(t)$.

\begin{figure}
\centering
\includegraphics[width=4.2in]{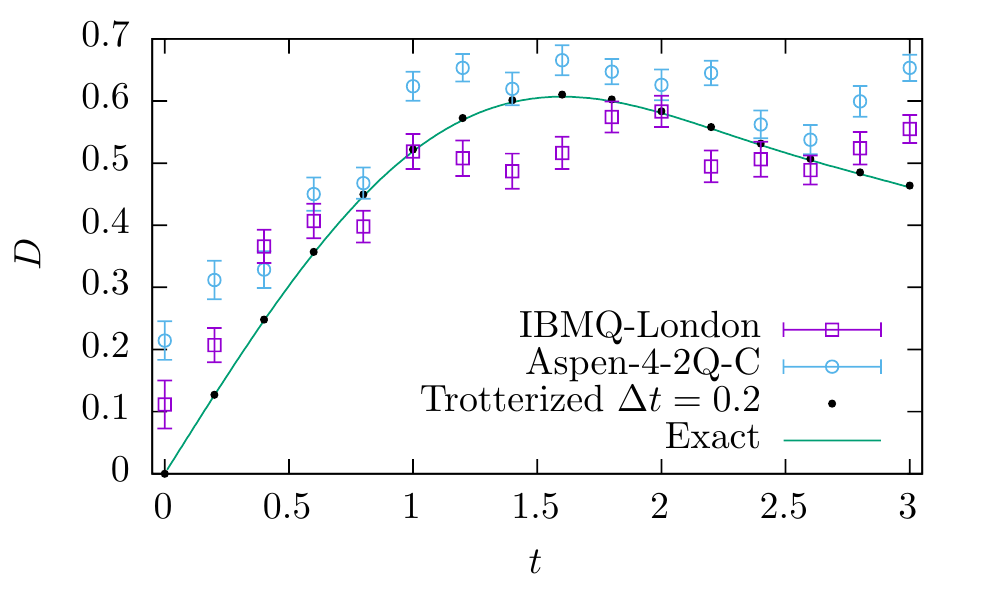}
\caption{Figure from~\cite{Alexandru:2019dmv}, showing average distance as measured by (\ref{eq:coconut}), as a function of evolution time for a $4$-site lattice with disorder parameter $W=5$ and Trotterization step size $\Delta t = 0.2$.\label{fig:anderson-demo}}
\end{figure}

\section{Simulating a Field Theory}

Now we turn to the problem of simulating a field theory on a quantum compute. For concreteness, let us focus on the Heisenberg spin chain, with Hamiltonian
\begin{equation}\label{eq:heisenberg-hamiltonian}
H =
-J
\sum_{\left<ij\right>} \sigma_z(i) \sigma_z(j)
- \mu
\sum_i \sigma_x(i)
\end{equation}
where the first sum runs over all pairs of adjacent sites on a one-dimensional lattice, J is the ferromagnetic coupling, and $\mu$ is a magnetic field. Because the local degrees of freedom posess a two-dimensional Hilbert space, this model is particularly amenable to qubit-based quantum simulation. Nevertheless, the features of the simulation of this model are essentially the same as those of other field theories; moreover, many models can be rewritten (at least approximately) as spin chains, such as $\sigma$ models~\cite{Alexandru:2019ozf}. The generalizations to scalar $\phi^4$ field theory and fermionic fields are discussed in Sections~\ref{sec:scalar-simulation} and \ref{sec:fermion-simulation}, respectively.

The first step of preparing a quantum simulation is mapping the physical Hilbert space to that of the quantum computer. For the spin chain, the mapping is trivial: each site of the spin chain corresponds to a single qubit.

The next step is to select a Suzuki-Trotter decomposition of the Hamiltonian. A natural choice for (\ref{eq:heisenberg-hamiltonian}) is
\begin{equation}
e^{-i H } \approx
\left(
e^{i (\Delta t) \mu \sum \sigma_x}
e^{i (\Delta t) J \sum \sigma_x \sigma_x}
\right)^{t / \Delta t}
\text.
\end{equation}
A more general Heisenberg Hamiltonian is possible, with arbitrary couplings and magnetic fields along all three axes:
\begin{equation}
H = - \sum_{\left<ij\right>}
\left[
J_x \sigma_x(i) \sigma_x(j) +
J_y \sigma_y(i) \sigma_y(j) +
J_z \sigma_z(i) \sigma_z(j)
\right]
-
\mu \cdot \sum_i \sigma(i)
\text.
\end{equation}
In this case a natural Trotterization has three factors, diagonal in the $x$, $y$, and $z$-bases individually.

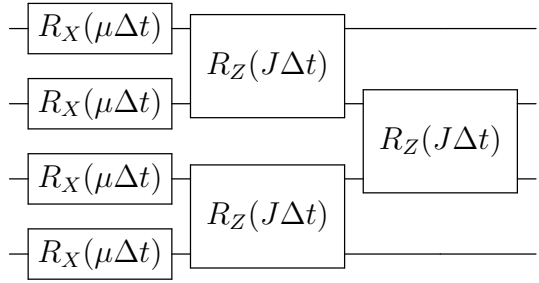
\begin{figure}
\[
\Qcircuit @C=0.6em @R=0.8em {
& \gate{R_X(\mu \Delta t)} & \multigate{1}{R_Z(J \Delta t)} & \qw & \qw\\
& \gate{R_X(\mu \Delta t)} & \ghost{R_Z(J\Delta t)} & \multigate{1}{R_Z(J \Delta t)} & \qw\\
& \gate{R_X(\mu \Delta t)} & \multigate{1}{R_Z(J\Delta t)} & \ghost{R_Z(J \Delta t)} & \qw\\
& \gate{R_X(\mu \Delta t)} & \ghost{R_Z(J \Delta t)} & \qw & \qw\\
}
\]
\caption{Quantum circuit for a single Trotter step of time-evolution for the Heisenberg spin chain with Hamiltonian (\ref{eq:heisenberg-hamiltonian})\label{fig:heisenberg-circuit}, on four lattice sites.}
\end{figure}

Because the mapping between the physical Hilbert space and that of the quantum processor is so clean, the resulting time-evolution circuits are particularly simple. Evolution for a single Trotterization step is shown in Figure~\ref{fig:heisenberg-circuit}. Note that the order in which the coupling terms are applied is irrelevant, as they all commute.

This completes the implementation of time-evolution on the quantum computer. Two important steps remain: the preparation of a physically relevant initial state, and measurement at the end of some time evolution. Typically the initial state of greatest interest is the ground state of the Hamiltonian, or the lowest-lying state constrained to have some quantum numbers. Preparation of such states is a major area of study, and some strategies are discussed in detail in Sections~\ref{sec:preparation} and \ref{sec:pim}.

Assuming that the ground state $\left|\Omega\right>$ has been prepared, there are a wide selection of physically interesting observables readily accessible. By allowing the magnetic field to be time-dependent (and even space-dependent, at no cost to circuit complexity), the response of the system to time-varying magnetic fields can be measured.

\subsection{Linear Response, Two Ways}\label{sec:response}
A frequently relevant subset of these observables are those obtained by considering the limit of a weak perturbation. In the case of a delta-function perturbation described by the Hamiltonian
\begin{equation}
H_\epsilon(t)
=
H_0 + \epsilon \delta(t) H'
\text,
\end{equation}
the expectation value of an operator $\mathcal O$ at time $T$ after the perturbation hits is given, to leading order, by
\begin{equation}
\left<\mathcal O(t)\right>
=
\left<\mathcal O(0)\right>
+
\epsilon \left<\left[H', \mathcal O(t)\right]\right>
+
O(\epsilon^2)
\text.
\end{equation}
Thus we see that \emph{linear} response is governed by time-dependent correlation functions evaluated in the original Hamiltonian.

This discussion can be re-cast as a method for evaluating time-dependent correlation functions. Evolution under a time-dependent Hamiltonian is easily achieved on a quantum computer. We can evaluate, therefore, the expectation value $\left<\mathcal O(t)\right>$ under no perturbation and under a small perturbation, and finite-differencing yields an approximation to the expectation value $\left<[H', \mathcal O(t)]\right>$.

An alternative procedure~\cite{PhysRevLett.113.020505}, not limited to the commuator of Hermitian operators, is available with the use of an ancillary qubit. To begin with, we describe a procedure for measuring the expectation value of an arbitrary unitary operator~\cite{Ortiz:2000gc}. Let $U$ be a unitary operator, for which we would like to measure the expectation value $\left<\Psi\right|U\left|\Psi\right>$ in some state $\left|\Psi\right>$. Define $U_C$ to be the controlled-$U$ unitary acting on the combination of the original system with the one ancillary qubit. Thus $U_C$ is defined as \begin{equation}
U_C\left|\Psi\right>\left|0\right>
= \left|\Psi\right>\left|0\right>\;\text{ and }\;
U_C\left|\Psi\right>\left|1\right>
= \left(U \left|\Psi\right>\right)\left|1\right>
\text.
\end{equation}
A circuit implementing this unitary can be obtained using the technique described in Section~\ref{sec:cfromc}. We now begin by applying a hadamard gate to the ancillary qubit, then apply $U_C$, and finally measure the controlled qubit. Depending on the basis chosen to measure the controlled qubit, obtain either the real or imaginary part of the desired expectation value.

An obvious application is to the Heisenberg spin chain. The time-separated correlator $\sigma_x(t) \sigma_x(0) = e^{i H t} \sigma_x e^{-i H t} \sigma_x$ is unitary, and therefore can be measured directly. However, the method is slightly more general: the ability to measure an arbitrary unitary allows us also to measure any operator which can be decomposed as a sum of unitaries.

\subsection{Measuring Masses}\label{sec:measuring-mass}
Measuring energies on a quantum computer may be accomplished via the algorithm of quantum phase estimation~\cite{cleve1998quantum} (QPE). Given a unitary operator $U$ (implemented via quantum circuits) and a prepared eigenstate $|\Psi\rangle$, QPE is a procedure for estimating the phase $\theta$ of the eigenvalue $e^{i\theta}$ of the prepared state. When the unitary operator is time-evolution, this phase is of course the energy of the state. The simplest form of quantum phase estimation proceeds by introducing an ancillary qubit in the state $|0\rangle + |1\rangle$, and performing controlled evolution under $U$. After this evolution, the state of the system is $(|0\rangle + e^{i\theta}|1\rangle)|\Psi\rangle$, and $\theta$ may be estimated modulo $pi$ via repeated measurements in the $Z$-basis. With a little more sophistication, the binary representation of $\theta$ can be determined with a single measurement, to precision $1/\epsilon$, with $\log (1/\epsilon)$ ancilla.

We would like to measure the mass of a hadron --- that is, the difference in energies between the vacuum and the lowest-lying state with quantum numbers of that hadron.Assume we have the ability to prepare both the ground state of the lattice theory and the ground state of the sector with quantum numbers of some hadron. The most straightforward procedure to obtain the mass of that hadron is to first prepare the ground state $|\Omega\rangle$, and then measure via QPE the energy $E_\Omega$ of this state. On a lattice of the same parameters, we may prepare the ground state $|P\rangle$ of the hadron, and similarly measure the energy $E_P$ of that state. The mass is then given by $E_P - E_\Omega$. This method is simple, but suffers from a significant flaw, related to the fact that $E_\Omega$ and $E_P$ are not sensible physical quantities. The vacuum energy is divergent in both the continuum and infinite-volume limits. Therefore, as these limits are approached, both energies must be measured with increasing precision to resolve the cancellation, before any information about the mass is obtained. This is another signal-to-noise problem.

This signal-to-noise problem can be done away with by preparing two lattices at once on the same quantum processor. These lattices are uncoupled, and we prepare in the first the ground state and in the second the hadron state, so that the quantum processor is in the state $|\Omega\rangle \otimes |P\rangle$. We now consider the unitary operator $U(t) = e^{-i H t} \otimes e^{i H t}$. The prepared state is an eigenstate of this operator; moreover, the divergent part of the energes cancel. Thus, QPE applied to $U$ directly yields the hadron mass, with no need to resolve fine cancellations.

\subsection{Scalar Fields}\label{sec:scalar-simulation}

Lattice scalar field theory was examined as a target for quantum simulation in~\cite{Jordan:2011ci}. For a lattice scalar field theory described by the Hamiltonian (\ref{eq:h-scalar-interacting}), each lattice site is associated to an anharmonic oscillator, and the oscillators are coupled by the term $(\phi_x - \phi_y)^2$. This system presents a new difficulty for quantum simulations, which is in fact characteristic of most bosonic field thoeries: the local Hilbert space is of infinite dimension, and there is therefore no isomorphic Hilbert space that can be created with a finite number of qubits.

The only solution is to truncate the physical Hilbert space. To maintain the locality of the theory, it is convenient to truncate the local Hilbert spaces independently, so that the full Hilbert space remains a tensor product, but now of finite-dimensional systems.

A reasonable truncation for this system becomes apparent once the Hamiltonian is rewritten in terms of creation and annihilation operators:
\begin{equation}
H = \sum_{\left<xy\right>}
\left(
a_x^\dagger a_y + \mathrm{h.c.}
\right)
+
\sum_x 
\left[
a_x^\dagger a_x
+ \lambda \left(a_x^\dagger + a_x\right)^4
\right]
\text.
\end{equation}
Truncating each local Hilbert space to the lowest $K$ eigenstates of the harmonic oscillator Hamiltonian $a^\dagger a$, the number of qubits required for the full simulation scales with $V \log K$. When performing computations, there are now three extrapolations that must be performed: removing the truncation ($K\rightarrow\infty$), removing the lattice cutoff ($a\rightarrow 0$), and the infinite volume limit ($V \rightarrow \infty$), in that order.

\subsection{Fermions}\label{sec:fermion-simulation}

Fermionic lattice theories are limited to a finite Hilbert space by the Pauli exclusion principle. Each fermionic degree of freedom (there may be many per lattice site, due to spin, flavor, and internal symmetries) is associated to a two-dimensional Hilbert space, preparing a convenient mapping between the physical space and that of the processor.

A difficulty arises, however, when attempting to map the operators used to define the physical Hamiltonian, to operators defined on the qubits of the quantum computer. The fundamental operators of a fermionic theory are the raising and lowering operators $a^\dagger$ and $a$, defined to anticommute:
\begin{equation}
\{a_i, a_j\} = 0\;\text{ and }\;
\{a^\dagger_i, a_j\} = \delta_{ij}
\text.
\end{equation}
The natural raising and lowering operators defined on qubits, given by $\sigma^{\pm}(i) = \sigma_x(i) \pm i \sigma_y(i)$, commute at different qubits. Anticommuting operators must be constructed on the quantum computer in order to map the physical Hamiltonian over to the qubits.

A suitable appropriate mapping of operators is provided by the Jordan-Wigner transormation~\cite{Ortiz:2000gc,Jordan:1928wi}, in which $a_i$ maps to
\begin{equation}
\left[\bigotimes_{k=1}^{i-1} \sigma_z\right]
\otimes
\sigma^-
\otimes
I
\text.
\end{equation}
Two such operators are readily seen to anticommute, and so a fermionic Hamiltonian can be rewritten in terms of them.

The Jordan-Wigner transformation does violence to locality\footnote{With the exception of one-dimensional lattices without periodic boundary conditions, where an Jordan-Wigner transformation can be constructed such that local fermion bilinears map to local spin operators.}. The string of $\sigma_z$ operators imposes an extra cost, typically polynomial in the volume ($V^{2/3}$ for a three-dimensional lattice). The transformation can be improved to alleviate or remove this asymptotic cost. The first such improvement was due to Bravyi and Kitaev, and reduced this to a logarithm of the volume of the lattice~\cite{bravyi2002fermionic}. More recent improvements result in a constant overhead~\cite{whitfield2016local,verstraete2005mapping}, removing all asymptotic penalty.

\section{Simulating a Gauge Theory}

The simplest gauge theory to simulate is the $\mathbb Z_2$ gauge theory of Section~\ref{sec:hamiltonian-gauge}, with Hamiltonian (\ref{eq:z2-hamiltonian}). This is only a small modification from the spin system considered above. Each degree of freedom is now associated to a link on the lattice, instead of a site, and the coupling term diagonal in the $Z$ basis couples four degrees of freedom (one plaquette) rather than just two.

\begin{figure}
\[
\Qcircuit @C=0.6em @R=0.8em {
& \gate{\mathcal R_X(\Delta t)} & \ctrl{1} & \qw & \qw & \qw & \ctrl{1} & \qw\\
& \gate{\mathcal R_X(\Delta t)} & \gate{X} & \ctrl{2} & \qw & \ctrl{2} & \gate{X} & \qw \\
& \gate{\mathcal R_X(\Delta t)} & \ctrl{1} & \qw & \qw & \qw & \ctrl{1} & \qw & \\
& \gate{\mathcal R_X(\Delta t)} & \gate{X} & \gate{X} & \gate{\mathcal R_Z(\Delta t)} & \gate{X} & \gate{X} & \qw & \\
}
\]
\caption{Quantum circuit for a single Trotter step of the time-evolution of $\mathbb Z_2$ gauge theory with one plaquette.\label{fig:z2-gauge-circuit}}
\end{figure}
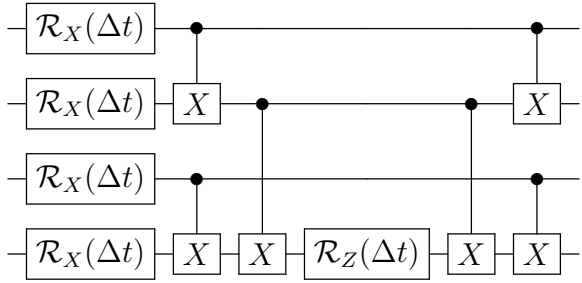

The procedure for simulating time-evolution of such a theory is consequentially simple. Figure~\ref{fig:z2-gauge-circuit} shows a single step of time evolution of a one-plaquette model. This major new feature of this procedure is the use of an ancillary qubit to compute the value of the plaquette. The qubit begins in state $\left|0\right>$, and each of the four links in the plaquette are combined via controlled not operations (corresponding to the $\mathbb Z_2$ group operation) until the ancillary qubit contains the value of the plaquette.

Generalizing to an arbitrary gauge group $G$, we introduce the concept of a $G$-register~\cite{Lamm:2019bik}: a collection of qubits whose Hilbert space is mapped to $\mathbb C G$. For continuous gauge groups, where the space $\mathbb C G$ is infinite-dimensional, this is necessarily some approximation, however we will ignore this difficulty for the time being. The Hilbert space of a $G$-register is spanned by states $\left|g\right>$, which we will take to be the computational basis. On this space a set of primitive operations are needed:
\begin{itemize}
\item An inversion gate, which takes a $G$-register and transforms, in the computational basis, by taking the inverse of the group element. The gate is defined by $\mathfrak U_{-1} \left|U \right> = \left|U^{-1}\right>$. This gate is self-adjoint.
\item A multiplication gate, which acts on two $G$-registers and transforms the second. This gate is defined by $\mathfrak U_\times \left|g\right>\left|h\right> = \left|g\right> \left|gh\right>$.
\item A trace gate, which gives each computational basis state a phase proportional to the trace of the stored group element. This gate is defined by $\mathfrak U_{\Tr}(\theta)\left|U\right> = e^{i\theta \Re \Tr U} \left|U\right>$. Note that the definition of this gate depends on the representation, and indeed, some lattice Hamiltonians may involve traces taken in multiple representations.
\item The (nonabelian) Fourier transform $\mathfrak U_F$, which transforms a $G$-register into Fourier space (a $\hat G$-register). It is defined by
\begin{equation}
\mathfrak U_F
\sum_{U \in G} f(U) \left|U\right>
=
\sum_{\rho \in \hat G}
\hat f(\rho)_{ij} \left|\rho,i,j\right>
\end{equation}
where the second sum is taken over all representations $\rho$ of $G$, and $\hat f$ denotes the Fourier transform of $f$. This gate is the operation described in Section~\ref{sec:quantum-fourier-transform}, and diagonalizes the kinetic part of the Hamiltonian.
\item The Laplace-Beltrami gate $\mathfrak U_{LB}$, which acts on a $\hat G$-register and gives each state a diagonal phase, which is a function of the representation alone (not the indices $i,j$).
\end{itemize}
A few note about the generality of these operations are in order. First, although the multiplication gate $\mathfrak U_\times$ is defined here to perform left multiplication, a gate for right multiplication is obtained from the combination of $\mathfrak U_\times$ and $\mathfrak U_{-1}$ via the identity $\mathfrak U_{\times,R}(1,2) = \mathfrak U_{-1}(1) \mathfrak U_{-1}(2) \mathfrak U_\times(2,1) \mathfrak U_{-1}(2) \mathfrak U_\times(1,2)$.

From these operations we can construct time-evolution for a pure gauge theory. In the presence of matter fields, a few additional ones will be needed, discussed in Section~\ref{sec:gauge-fermions} below.

\subsection{Time Evolution}

\begin{figure}
\newcommand{\gw}[1][-1]{\ar @{=} [0,#1]}
\newcommand{\gwx}[1][-1]{\ar @{=} [#1,0]}
\newcommand{\ggate}[1]{*+<.6em>{#1} \POS ="i","i"+UR;"i"+UL **\dir{-};"i"+DL **\dir{-};"i"+DR **\dir{-};"i"+UR **\dir{-},"i" \gw}
\newcommand{\gctrl}[1]{\control \gwx[#1] \gw}

\centerline{
\Qcircuit @C=0.6em @R=0.8em {
& \ggate{\mathfrak U_F^\dagger} & \ggate{\mathfrak U_\text{LB}} & \ggate{\mathfrak U_F} & \gw & \\
& \ggate{\mathfrak U_F^\dagger} & \ggate{\mathfrak U_\text{LB}} & \ggate{\mathfrak U_F} & \gw & \\
& \ggate{\mathfrak U_F^\dagger} & \ggate{\mathfrak U_\text{LB}} & \ggate{\mathfrak U_F} & \gw & \\
& \ggate{\mathfrak U_F^\dagger} & \ggate{\mathfrak U_\text{LB}} & \ggate{\mathfrak U_F} & \gw & \\
}
\hspace{0.7in}
\Qcircuit @C=0.6em @R=0.8em {
\lstick{\left|U_{12}\right>} & \gw & \ggate{\mathfrak U_\times} & \ggate{\mathfrak U_\times} & \ggate{\mathfrak U_\times} & \ggate{\mathfrak U_{\Tr}} & \ggate{\mathfrak U_\times} & \ggate{\mathfrak U_\times} & \ggate{\mathfrak U_\times} & \gw & \gw &\\
\lstick{\left|U_{13}\right>} & \ggate{\mathfrak U_{-1}} & \gw & \gw & \gctrl{-1} & \ggate{\mathfrak U_{-1}} & \gctrl{-1} & \gw & \gw & \gw & \gw & \\
\lstick{\left|U_{24}\right>} & \gw & \gctrl{-2} & \gw & \gw & \ggate{\mathfrak U_{-1}} & \gw & \gw & \gctrl{-2} & \ggate{\mathfrak U_{-1}} & \gw & \\
\lstick{\left|U_{34}\right>} & \ggate{\mathfrak U_{-1}} & \gw & \gctrl{-3} & \gw & \ggate{\mathfrak U_{-1}} & \gw & \gctrl{-3} & \gw & \gw & \gw &\\
}
}
\caption{Circuits implementing the time-evolution of a pure-gauge lattice field theory. The first circuit implements the quantum-mechanical kinetic erm, and the second the quantum-mechanical potential term associated to a single plaquette $\Re \Tr U^\dagger_{13}U^\dagger_{34} U_{24}U_{12}$. Note that in these circuits, the primitive object is a $G$-register, denoted with a doubled line.\label{fig:nonabelian-evolution}}
\end{figure}

Time-evolution of the pure gauge theory for a general group $G$ is performed, as usual, with a Trotter-Suzuki decomposition. The kinetic piece of the evolution is diagonalized by the Fourier transform. The potential piece of the evolution requires, as for the $\mathbb Z_2$ gauge theory, the accumulation of a plaquette in a single $G$-register. This is accomplished by picking one link in the plaquette and repeatedly multiplying with all other links in the plaquette.

The resulting circuits for the propagation of a single-plaquette, nonabelian gauge theory are shown in Figure~\ref{fig:nonabelian-evolution}.

\subsection{Gauge Invariance}

The time-evolution circuits presented above can only be said to represent a gauge theory when the initial state lies in the physical subspace; i.e., is gauge-invariant. It is critical, therefore, that we are able to prepare gauge-invariant states in general.

Two gauge-invariant states are particularly easy to prepare: the strong-coupling ground state and the weak-coupling ground state. In the strong-coupling limit, the kinetic term of the Hamiltonian dominates. This term does not couple links, and the resulting ground state is a product state, particularly easy to prepare:
\begin{equation}
\left|\Omega_{\mathrm{strong}}\right>
=
\bigotimes_L
\left(
\sum_{U \in G} \left|U\right>
\right)
\text.
\end{equation}
Here each link is in an equal superposition of all group elements. The state described here is already gauge-invariant, so no further symmetrization is needed.

In the weak-coupling limit, each plaquette is forced to be the identity in the ground state\footnote{Assuming that such a configuration is permitted by boundary conditions --- otherwise, we would have a frustrated system, and no general efficient algorithm for preparing the ground state.}. Naively, then, the ground state is the product state where each link is set to $\left|I\right>$; however, this state is not gauge-invariant.
We can prepare a gauge-symmetric version of this state with the aid of one ancillary $G$-register per lattice site. These registers represent a gauge transformation, and we will denote them $V$. Begin by initializing each link to $\left|I\right>$, and each ancillary register to an equal superposition of all group elements. We may now perform the gauge transformation: the link $U_{ji}$ from site $i$ to site $j$ is multiplied on the right by $V^\dagger_i$ and on the left by $V_j$. This yields the state
\begin{equation}
\sum_V \left|(V_2 V_1^\dagger)\cdots\right>\left|V_1 V_2 \cdots\right>
\text.
\end{equation}
At this point, the physical link registers are engtangled with the (unphysical) gauge registers. The entanglement is removed by again repeatedly applying $\mathfrak U_\times$, but now with the link registers as the control and the gauge registers as the target. A privileged site $i$ is selected (the choice will not affect the final state), and for each other site $j$, a particular path from $j$ to $i$ is selected. After one multiplication for each link in this path, the ancillary register associated to site $j$ is transformed to the state $|V_i\rangle$. Repeating for all ancilla, we obtain the state
\begin{equation}
\sum_V|(V_2 V_1^\dagger)\cdots\rangle |V_i V_i \cdots\rangle\text.
\end{equation}
That this state is in fact a product state may be seen by noting that a global gauge transformation by $V_i^\dagger$ leaves the physical registers invariant, while rotating each ancilla into $|I\rangle$. The disentangled ancilla may now be discarded, and the resulting state is the gauge-projected $P\left|I\right>$.

This suffices to show that the gauge-invariant sector is efficiently accessible. Further discussion of state preparation, in particular the preparation of physical ground states and thermal states, is in Sections~\ref{sec:preparation} and \ref{sec:pim} below.

\subsection{Adding Fermions}\label{sec:gauge-fermions}

The most interesting gauge theories have matter fields coupled to the gauge degrees of freedom. We will consider here the case of fermionic matter fields, as in QCD. In order to introduce matter fields, one must first pick a representation of the group $G$ under which the matter fields are to transform\footnote{The gauge fields themselves have no choice but to transform in the adjoint representation.}. For matrix groups (e.g.\ $SU(N)$), it is most common to select the fundamental representation.

Labeling the dimension of the selected representation by $N$, and ignoring for simplicity spin and flavor, we will have $N$ independent fermionic degrees of freedom at each lattice site, creating a local Hilbert space of dimension $2^N$. Denote the creation and annihilation operators $a^\dagger_i$ and $a_i$, where $i\in 1\ldots N$ is the `color' index of the chosen representation $\rho$. These operators transform into each other under the action of the group $G$:
\begin{equation}
U^\dagger a_i U
= \sum_j \rho(U)_{ij} a_j
\text.
\end{equation}
Gauge transformations, and the projection operator $P$, now affect the fermionic modes as well.

Crucially, the $VN$ fermionic degrees of freedom can be created via the Jordan-Wigner transformation (or any other method) before making reference to the fact that they transform into each other under various symmetries. In other words, the presence of gauge-invariance does not complicate the task of creating anticommuting operators from the fundamental commuting operators of a quantum computer.

\subsection{Demonstration: $D_4$ Gauge Theory}
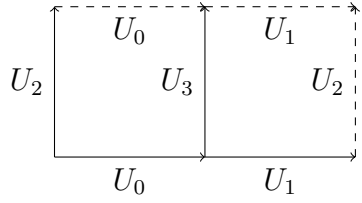
\begin{figure}
\centering
\begin{tikzpicture}
\draw[->] (0,0) -- node[below] {$U_0$} (2,0);
\draw[dashed,->] (0,2) -- node[below] {$U_0$} (2,2);
\draw[->] (2,0) -- node[below] {$U_1$} (4,0);
\draw[dashed,->] (2,2) -- node[below] {$U_1$} (4,2);
\draw[->] (0,0) -- node[left] {$U_2$} (0,2);
\draw[->] (2,0) -- node[left] {$U_3$} (2,2);
\draw[dashed,->] (4,0) -- node[left] {$U_2$} (4,2);
\end{tikzpicture}
\caption{The lattice geometry of the $D_4$ gauge theory simulated here. Dashed lines indicate repeated links due to the periodic boundary conditions.\label{fig:d4-geometry}}
\end{figure}

The methods of the previous sections can be demonstrated on a classically simulated quantum computer, with a small finite gauge group. The smallest two nonabelian groups are $D_3$ and $D_4$, defined as the group of isometries of the triangle and the square, respectively. Because $D_3$ has $6$ elements and $D_4$ has $8$, they each require three qubits per link to simulate, and therefore we may as well simulate $D_4$.

The first order of business is to construct a $G$-register; that is, to chose a particular isomorphism between the space of complex-valued functions on $D_4$, and the Hilbert space of three qubits. The group $D_4$ can be defined as the subgroup of $U(2)$ generated by the matrices
\begin{equation}
\left(\begin{matrix}
i & 0\\
0 & -i\\
\end{matrix}\right)
\text{ and }
\left(\begin{matrix}
0 & 1\\
1 & 0\\
\end{matrix}\right)
\text.
\end{equation}
The state $|abc\rangle$ is defined to correspond to the matrix
\begin{equation}
\left[\left(\begin{matrix}
0 & 1\\
1 & 0\\
\end{matrix}\right)\right]^a
\left[
\left(\begin{matrix}
i & 0\\
0 & -i\\
\end{matrix}\right)\right]^{2b + c}
\text.
\end{equation}
We next construct the inversion, multiplication, trace, and Fourier transform circuits. The inversion and multiplication circuits are classical circuits, easily constructed. As the only element of $D_4$ with a nonvanishing trace in this representation is is the identity, the trace circuit is a three-qubit-controlled phase gate. The circuits for these three operations and the Fourier transform are given in the appendix of~\cite{Lamm:2019bik}.

Each link requires three qubits, so a classical simulation can simulate a lattice containing four links without much trouble. (Substantially larger lattices --- up to $\sim 30$ qubits --- could be obtained with more sophisticated algorithms and reasonable computer time; however, this is pointless for the purposes of demonstration.) A two-plaquette geometry based on a four links is shown in Figure~\ref{fig:d4-geometry}. On this geometry, we initialize the gauge-invariant ground state in the weak-coupling limit, and measure the expectation value of a plaquette as a function of time. The result is shown in Figure~\ref{fig:d4-results}.

\begin{figure}
\centering
\includegraphics[width=4in]{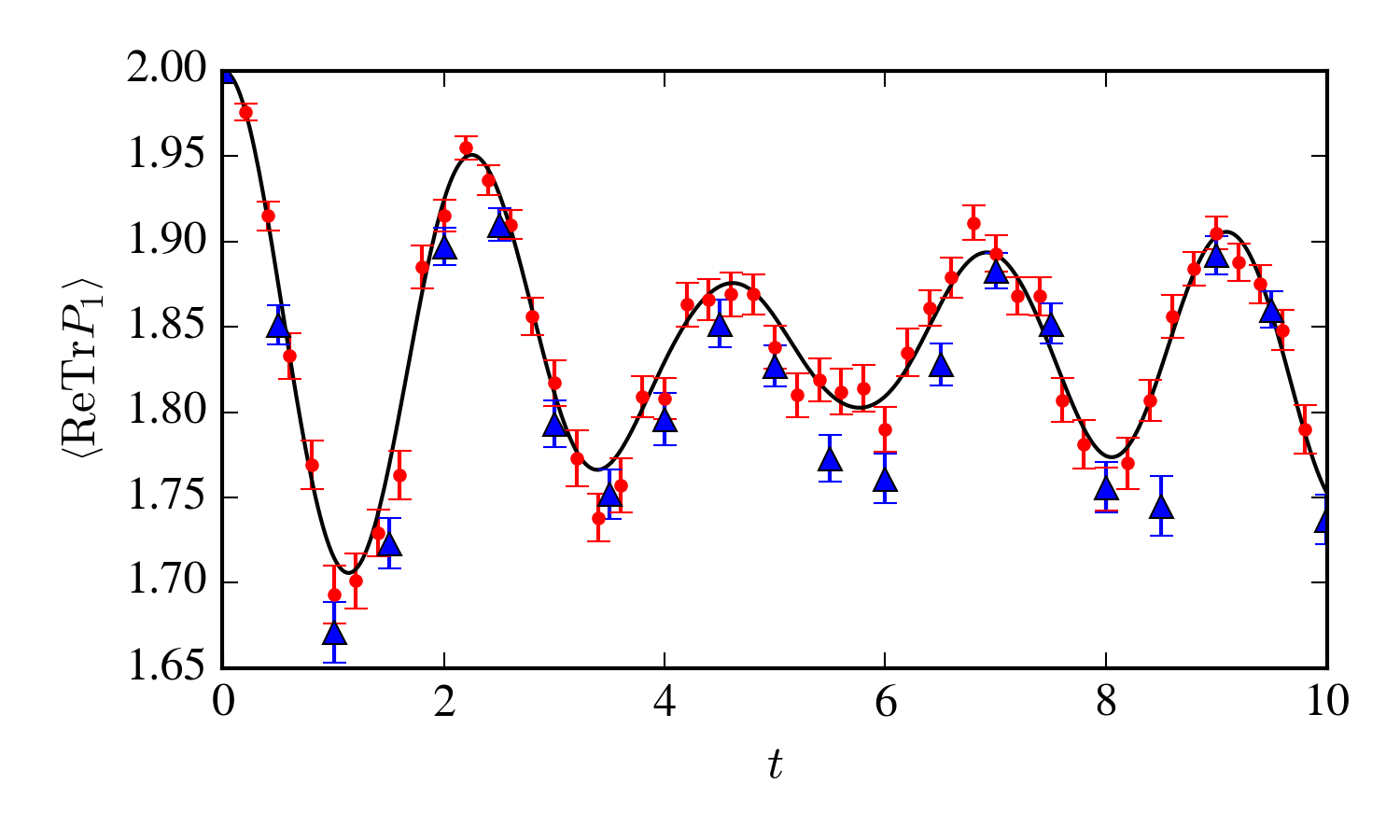}
\caption{Figure from~\cite{Lamm:2019bik}: simulation of the two-plaquette $D_4$ gauge theory. The expectation value of one plaquette as a function of time. The exact result is shown in black, with sampled data shown in red (blue) for a Trotterization time-tep of $\Delta t = 0.2$ ($\Delta t = 0.5$).\label{fig:d4-results}}
\end{figure}

\section{Simulations of QCD}

We now apply the general method of~\cite{Lamm:2019bik} to the simulation of QCD with a quantum computer. The largest difficulty encountered is the one mentioned previously: the dimension of the Hilbert space of $SU(3)$ gauge theory, even on a finite lattice, is infinite, and therefore any quantum system consisting of qubits must be only an approximation. A particular subgroup of $SU(3)$, the Valentiner group, is shown to provide an adequate approximation, and from there we can discuss how to prepare interesting states and extract partonic physics.

\subsection{The Valentiner Group}

\newcommand{\V}{\mathcal V}

The Valentiner group~\cite{valentiner1889endelige} $\V$, also referred to as $\tilde S(1080)$~\cite{flyvbjerg1985character}, is a finite subgroup of $SU(3)$ with $1080$ elements. It is not the largest finite subgroup: $\mathbb Z_n$, for instance, is a subgroup for any $n$, due to the presence of $U(1) < SU(3)$. It is, however, the largest \emph{exceptional} subgroup, that is, the largest subgroup that does not fall into one of a small number of infinite families.

The Valentiner group is particularly suitable as an approximation to $SU(3)$ because it tiles the surface of $SU(3)$ evenly. In particular, the Voronoi diagram of $\V \subset SU(3)$ has two special properties:
\begin{itemize}
\item Each region is isomorphic to every other region. This follows from the fact that $\V$, being a subgroup of $SU(3)$, is also a symmetry group of the Voronoi diagram itself.
\item Each face --- the boundary between two regions --- is isomorphic to every other face, and in fact every face is the same distance from the center of the neighboring regions.
\end{itemize}
The group $\V$ thus serves as a good approximation to $SU(3)$ in much the same sense as a dodecahedron might serve as an approximation to the sphere\footnote{In fact the platonic solids correspond directly to the nice approximations of $SU(2)$.}.

\begin{figure}
\centering
\includegraphics[width=4.2in]{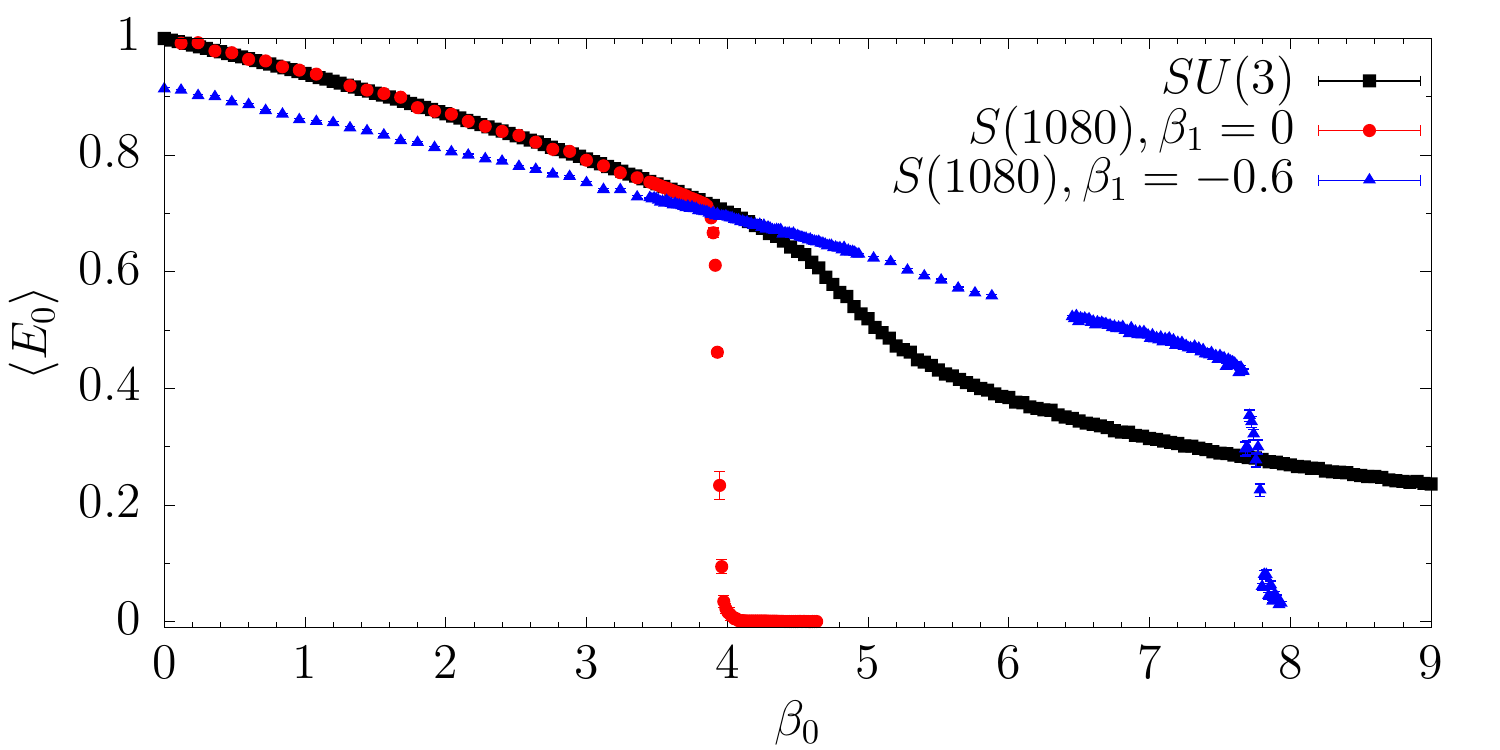}
\caption{Figure from~\cite{Alexandru:2019nsa} of the average energy per plaquette vs $\beta_0$ for $SU(3)$, $\V$ with $\beta_1 = 0$, and $\V$ with $\beta_1 = -0.6$.\label{fig:valentiner-energy}}
\end{figure}

The degree to which $\V$ is a good approximation for $SU(3)$ is a matter for empirical study, first investigated in~\cite{Bhanot:1981xp}. Critically, this question can be addressed by performing classical Monte Carlo calculations (in imaginary time), without the need for a quantum processor~\cite{Alexandru:2019nsa}. We will work with the lattice action
\begin{equation}
S = -\sum_P \left(\frac{\beta_0}{3} \Re \Tr P + \beta_1 \Re \Tr P^2\right)
\text,
\end{equation}
where the partition function is defined over all gauge configurations with links taking values in $\V$, rather than the $SU(3)$. With $\beta_1 = 0$, this is the standard Wilson action.

This lattice theory is compared to the Wilson gauge theory of $SU(3)$ in Figure~\ref{fig:valentiner-energy}, using the average energy per plaquette as a probe. The two are qualitatively different. The $SU(3)$ theory posesses a crossover from strong coupling (at small $\beta_0$) to weak coupling around $\beta_0 \sim 5$. For the $\beta_0 = 0$ version of the Valentiner gauge theory, a first-order transition is encountered before this crossover. Smaller average energy is a proxy for a closer approach to continuum physics; failing to reach the so-called scaling regime indicates that the Valentiner theory is far from the $SU(3)$ continuum theory.

The breakdown of the Valentiner gauge theory as we approach the continuum limit can be understood by noting that, in $SU(3)$ gauge theory, the fluctuations of the gauge fields about the identity become smaller as $\beta_0$ is increased. In the theory of $S(1080)$, there is limit past which the fluctuations cannot get any smaller; there are no group elements arbitrarily close to the identity. Beyond a critical value $\beta_0 \sim 4$, the fields become fixed.

The situation can be improved by noting that for the $SU(3)$ theory, the same continuum limit is expected to be reached as $\beta_0\rightarrow \infty$ regardless of the value of $\beta_1$. We see in Figure~\ref{fig:valentiner-energy} that, increasing $\beta_0$ along this trajectory, lower values of the average energy are achieved, suggesting a closer approach to the continuum physics of $SU(3)$.

So far this is just a heuristic argument that the gauge theory of $\V$ can approximate the $SU(3)$ theory, if an appropriate trajectory in $(\beta_0,\beta_1)$ is selected. We can make this more rigorous by performing, on each theory separately, a measurement at multiple points along the continuum trajectory. Then, we extrapolate (again in each theory separately) to the continuum, and compare. Note that we perform a continuum extrapolation on the $\V$ theory despite the fact that, due to the first-order transition, we know that the $\V$ theory cannot reach the continuum at all. The process of extrapolation, however, cannot see the (nonanalytic) first-order transition, and ends up extrapolating to the continuum $SU(3)$ result.

Both the $SU(3)$ and $S(1080)$ gauge theories posess a phase transition in the temperature (that is, the temporal extent of the Euclidean lattice). The chosen measurement is the dimensionless ratio $T_c\sqrt{t_0}$, where $T_c$ is the temperature of this phase transition, and $t_0$ is a scale set by the Wilson flow~\cite{Luscher:2010iy}. Figure~\ref{fig:valentiner-continuum} shows the continuum extrapolation of $T_c \sqrt{t_{0}}$ for the Valentiner theory with a trajectory defined by
\begin{equation}
\beta_1 = -0.1267\beta_0 + 0.253
\end{equation}
against the measurements of~\cite{Francis:2015lha,Kitazawa:2016dsl}. The Valentiner group serves as a good approximation to the $SU(3)$ theory for this low-energy observable.

\begin{figure}
\centering
\includegraphics[width=4.2in]{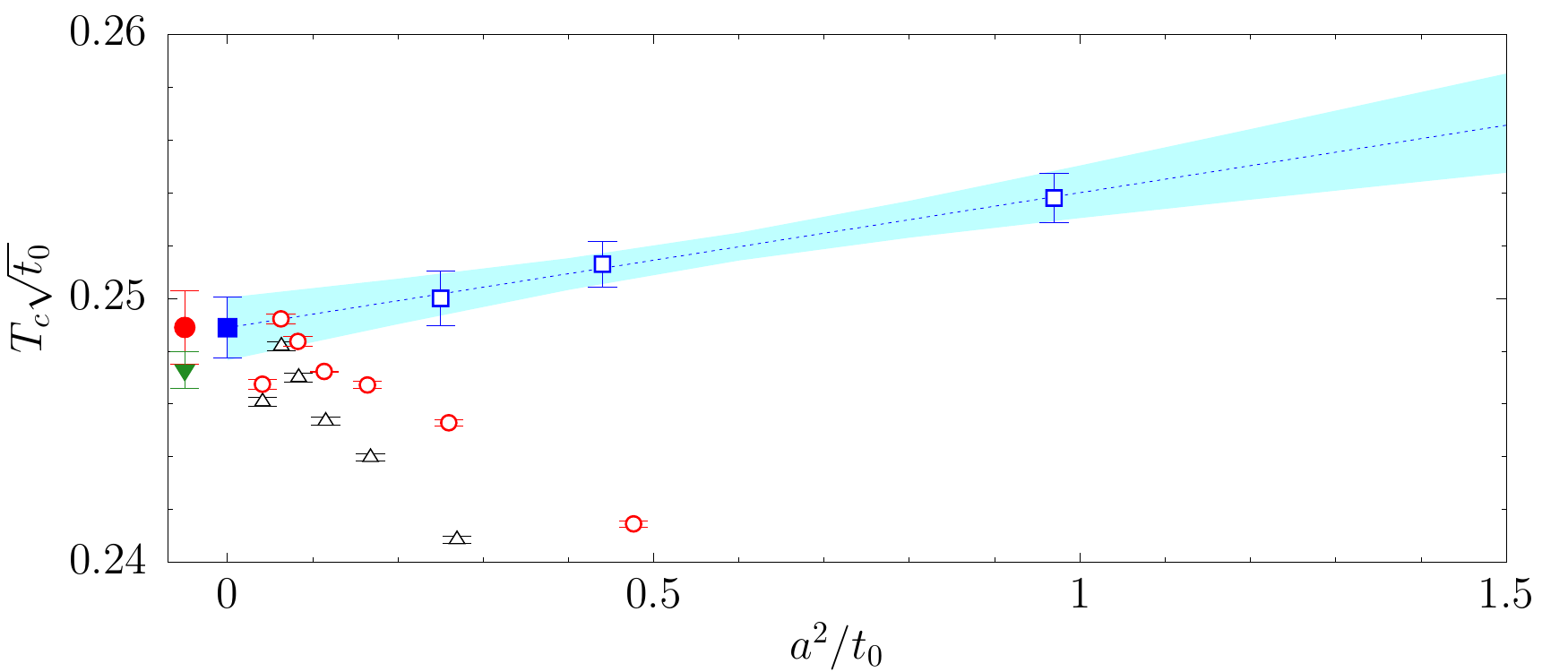}
\caption{Continuum extrapolations of $T_c \sqrt{t_0}$ for $SU(3)$ and $S(1080)$ gauge theories. The $S(1080)$ result is shown in blue, with the extrapolated value marked by the solid shape; the $SU(3)$ extrapolated results are also indicated by solid shapes.\label{fig:valentiner-continuum}}
\end{figure}

The task of comparing other low-energy observables, particularly spectroscopy, is not complete as this is written. It must be remembered that the use of $\V$ as an approximation to $SU(3)$ carries a serious drawback: because $\V$ is the largest `nice' subgroup, this is not a systematically improvable approximation. The only hope for improvement is to add more terms to the action, and seek a trajectory that provides a better approximation to continuum physics.

\subsection{State Preparation}\label{sec:preparation}

The easiest states to prepare on a quantum computer are the eigenstates of field operators. The infinite-temperature thermal state is also readily accessible. Unfortunately, these states are not of much physical interest. We would like to prepare the ground state of our Hamiltonian, perhaps restricted to some sector. For example, the ground state of the baryon-number-$1$ sector of QCD contains a single proton.

A large body of literature exists with many proposed methods for this task; see~\cite{Kaplan:2017ccd,Lamm:2018siq,Jordan:2011ci,Jordan:2014tma,jordan2018bqp} for a very incomplete sample. Quantum computers have not yet been constructed that are large enough to investigate the practical performance of these methods, and formal analysis of most preparation methods is not available. One exception is adiabatic state preparation~\cite{Jordan:2011ci,Jordan:2014tma,jordan2018bqp}. This method is backed by the adiabatic theorem~\cite{messiah1962quantum}, allowing us to make crude estimates of the costs of this method as applied to physical theories. Here we will discuss in detail the cost of preparing the ground state of the baryon-number-$1$ sector, i.e.\ the proton. This does not imply that adiabatic state preparation is the most efficient or most practical method, merely that it is the only one for which \emph{a priori} analysis is currently possible.

The adiabatic theorem~\cite{messiah1962quantum} constrains the behavior of a quantum system with a time-dependent Hamiltonian. Let $H(t)$ be our Hamiltonian, and let $\Omega$ be an eigenstate of $H(0)$, prepared at time $t=0$. When the time-dependence of the Hamiltonian is mild compared to the size $\Delta$ of the gap $\Delta(t)$ between $\left|\Omega\right>$ and the nearest eigenstate (when $\dot H / \Delta(t)^2 \ll 1$ for all $t$), the adiabatic theorem states that the system will never transition to a different eigenstate\footnote{This notion is well-defined as long as $\Delta > 0$, which is of course necessary for the theorem to make any statement at all.}. In particular, if $\left|\Omega(0)\right>$ is the ground state of $H(0)$, then $\Omega(t)$ will be the ground state of $H(t)$ for all $t$.

Adiabatic state preparation exploits the adiabatic theorem by beginning with a well-understood Hamiltonian $H(0)$, and slowly deforming along some trajectory to the desired Hamiltonian $H(T)$. As long as the evolution is performed slowly enough, the ground state is prepared with high probability. The computational cost of this method is dominated by the smallest gap along the trajectory: we must spend time proportional to $\Delta^{-2}$ in preparing the state.

It is useful to note that, although phrased as a method for finding the ground state of a Hamiltonian, the adiabatic state preparation method allows access to ground states of restricted sectors of the Hamiltonian as well. As an example, as long as all Hamiltonians along the adiabatic trajectory are translation-invariant, the total momentum will commute with each time-evolution operator $e^{-i H t}$. If the initially-prepared state has some non-vanishing momentum, then the adiabatically prepared state will preserve that momentum. The same holds for any conserved quantum numbers, as long as the conservation law holds at each point on the trajectory.

We now consider the application of adiabatic state preparation~\cite{Lamm:2019uyc} to the proton state $\left|P\right>$. We chose the adiabatic trajectory to begin with the zero-coupling theory, and the coupling is slowly increased until the desired lattice spacing is reached. The ground state of the baryon-number-$1$ sector of the free theory is thre zero-momentum fermions in a box, with the gauge fields in a Gaussian configuration (for the $SU(3)$ theory) or pegged to the identity (for the Valentiner gauge theory); these states are easily prepared. At the end of the trajectory, in the physical regime, the gap is equal to the pion mass $m_\pi$ and is relatively large. The other end is more problematic. At vanishing coupling, the `proton' fills the lattice, and the lowest-lying excited states are the back-to-back low-momentum states of any two fermions. (Massless glue excitations, which do not exist at all in the Valentiner theory, can be removed with appropriate boundary conditions). The gap in this regime is proportional to $L^{-1}$, implying that $L^2$ time steps are needed. Each evolution step requires $O(V)$ operations, so the effort required to prepare the proton is $O(L^5)$.

\subsection{Hadronic Tensor}

The hadronic tensor characterizes the response of a hadron to a perturbative probe. In the context of an electromagnetic probe, it is defined by
\begin{equation}
W^{\mu\nu}(q)
=
\Re
\int
\d^3 x
e^{i q x}
\left<P\right|
T\left\{
J^\mu(x) J^\nu(0)
\right\}
\left|P\right>
\text,
\end{equation}
where $J^\mu = \bar\psi \gamma^\mu \psi$ is the current associated to electromagnetic charge. As an example of the use of this object, the inclusive cross section of an electron scattering off of a hadron is given at leading order by~\cite{PhysRevD.98.030001}
\begin{equation}
\frac{\d^2\sigma}{\d x \d y} = \frac{\alpha^2 y}{Q^4}L_{\mu\nu}W^{\mu\nu}
\end{equation}
where the leptonic tensor $L_{\mu\nu}$, calculable in QED perturbation theory, is:
\begin{equation}
2 ( k_\mu k'_\nu + k_\nu k'_\mu - g_{\mu\nu} k \cdot k' )
\text.
\end{equation}
In both equations above, $Q^2 = -q^2$, $x = Q^2/2P\cdot q$, $y = P\cdot q / P\cdot k$, and $k' = k - q$.
Note that the hadronic tensor itself captures the nonperturbative physics of the proton; the perturbative expension in the QED coupling appears only because the interaction between the proton and electron is treated perturbatively.

The hadronic tensor (and its various limits, particularly the parton distribution function discussed below) has been the target of extensive Euclidean lattice calculations~\cite{Liu:1993cv,Aglietti:1998mz,Liu:1999ak,Ji:2013dva,Chambers:2017dov,Bali:2018spj,Radyushkin:2017cyf,Orginos:2017kos,Karpie:2018zaz,Karpie:2019eiq,Cichy:2019ebf,Alexandrou:2019lfo,Sufian:2019bol,Ma:2014jla,Ma:2014jga,Ma:2017pxb,Liang:2019frk}. Ultimately, all approaches suffer from difficulties stemming from the need to pass from a calculation performed in imaginary time to a quantity defined in real Minkowski time. However, the fact that the hadronic tensor is defined in Minkowski time makes it particularly amenable to computation with a quantum computer~\cite{Lamm:2019uyc}. Indeed, the method of Section~\ref{sec:response} may be directly applied to compute the Hadronic tensor, using the perturbed Hamiltonian
\begin{equation}
H_x(t)
=
H_0
+
\epsilon_x J^\mu(x)
\end{equation}
and measuring the observable $J_\mu(x)$ after some time evolution.

\subsubsection{Parton Distribution Function}

The response of a hadron to a probe with large momentum transfer is characterized by an apparently simpler object, the parton distribution function (PDF):
\begin{equation}
f(x) = \int \d y;
e^{i x P^+ y}
\left<P\right|\bar\psi(y) \gamma^+ W \psi(0) \left|P\right>
\end{equation}
where $u^+ = \frac{1}{\sqrt{2}} (u^0 + u^1)$ denotes the lightcone component of a vector $u$, and $W$ is a lightlike Wilson line between the origin and $y$. The PDF can be extracted from the Hadronic tensor, but one may consider calculating it directly. A procedure for directly computing the PDF is given in~\cite{Lamm:2019uyc}. Here we will only discuss why this is a bad idea.

Momentarily ignoring gauge fields, note that one of the tools of Section~\ref{sec:response} is inapplicable to the correlation function in the integrand. Because $\psi$ is not Hermitian, the desired correlation function cannot be expressed in terms of linear response. We must instead decompose the operator $\bar\psi(y) \gamma^+ \psi(0)$ as a linear combination of unitaries, each to be measured individually. This is a minor inconvenience and a major inelegance, but not fatal.

For a gauge theory, the situation is dire. The operator $\psi(0)$ is not gauge-invariant in isolation. The lightlike Wilson like involves operators at many points in space and many points in time. The resulting algorithm would require one order of finite differencing for every lattice link included in the Wilson line. This is not a practical routine even in the absence of statistical noise from the quantum computer. The hadronic tensor is protected from these issues precisely because it is a correlator of physical operators --- operators which can be coupled to external sources in the Hamiltonian.

\section{Avoiding State Preparation}\label{sec:pim}
In terms of number of quantum gates required (that is, the algorithmic time complexity), the study of QCD with a quantum computer is dominated by the process of preparing a suitable ground state. It may be practical, particularly for near-term quantum computers, to avoid doing so by coupling a classical Euclidean lattice calculation to a quantum computer~\cite{Lamm:2018siq,Harmalkar:2020mpd}.

We will consider a physical system that begins in the thermal state $\rho = e^{-\beta H_0}$ of an initial Hamiltonian $H_0$, and evolves for some time $t$ under a different Hamiltonian $H$, at which points we measure an observable $\mathcal O$. Note that this is not limited to linear response. The expectation value desired is given by
\begin{equation}
\left<\mathcal O(t)\right>
=\frac
{\sum_{i,j}
\rho_{ji}
\mathcal O(t)_{ij}
}{\sum_{i}\rho_{ii}}
=\left(
\frac
{\sum_{i,j}
\rho_{ji}
\mathcal O(t)_{ij}
}
{\sum_{i,j}\rho_{ij}}
\right)
\left(
\frac
{\sum_i \rho_{ii}}
{\sum_{i,j} \rho_{ij}}
\right)^{-1}
\equiv\frac{\langle \mathcal O(t)\rangle_{\rho}}{\langle \delta_{ij}\rangle_{\rho}}
\end{equation}
where $\rho_{ij}$ and $\mathcal O(t)_{ij}$ denote matrix elements of the density matrix and time-evolved operator, respectively. It is important that the basis states be cheap to prepare on the quantum processor. Eigenstates of field operators are a good choice. The notation $\langle\cdot\rangle_\rho$ denotes expectation values sampled from the distribution $\rho_{ij}$.

The normalization $\langle\delta_{ij}\rangle_\rho$ can be disregarded if we restrict ourselves to looking at ratios of expectation values. (It also happens that it can be efficiently computed; see the appendix of~\cite{Harmalkar:2020mpd}.) The distribution $\rho_{ij}$ may be efficiently sampled by the standard methods of Euclidean lattice field theory, with one important difference. Ordinarily, we are evaluating $\langle \mathcal O\rangle$ for some observable $\mathcal O$ which is diagonal in the fiducial basis used for the lattice calculation. This means that we can disregard all off-diagonal elements of the density matrix and sample only along the diagonal; this is how the periodic boundary conditions of the Euclidean lattice come into being. The operator $\mathcal O(t)$, however, does not vanish off the diagonal, and so we must sample the full density matrix.

By treating the operator $\mathcal O(t)$ as an indivisible entity, we have avoided introducing the sign problem associated with the lattice Schwinger-Keldysh method. However, the classical computer has no way of accessing the matrix elements of $\mathcal O(t)$. Fortunately, this is precisely the task for which a quantum computer is best suited. We have assumed that the basis states $\left|\Psi_i\right>$ are easily prepared on the quantum processor; this implies that we can also prepare the states
\begin{equation}
\left|+_{ij}\right> = \frac 1 {\sqrt{2}}\left(\left|\Psi_i\right>+\left|\Psi_j\right>\right)
\;\text{ and }\;
\left|-_{ij}\right> = \frac 1 {\sqrt{2}}\left(\left|\Psi_i\right>-\left|\Psi_j\right>\right)
\end{equation}
with the aid of an ancillary qubit. The expectation value of $\mathcal O(t)$ in each of these states is readily measured, and the desired matrix element is given by
\begin{equation}
\frac 1 2 \left[\mathcal O(t)_{ij} + \mathcal O(t)_{ji}\right]
=
\mathcal O(t)_{++} - \mathcal O(t)_{--}
\text.
\end{equation}
Because $\rho$ is Hermitian, this is the only linear combination needed.

This method encounters a signal-to-noise problem, which may be alleviated as detailed in~\cite{Harmalkar:2020mpd}. Whether these techniques are sufficient to make this method practical is a matter for further empirical study, which awaits the creation of intermediate-scale quantum computers on which they  can be tested.

\titleformat{\chapter}{\normalfont\large}{Appendix \thechapter:}{1em}{}
\appendix
%\include{Fourier}
%auto-ignore
\chapter{Grassmann Numbers}\label{ap:grassmann}
Grassmann numbers are anticommuting objects: two Grassmann numbers $\eta_1$ and $\eta_2$ obey $\eta_1 \eta_2 = \eta_2 \eta_1$. The square of a Grassmann number vanishes: $\eta_1^2 = 0$. The Grassmann algebra on $N$ Grassmann numbers is constructed by considering complex linear combinations of Grassmann numbers. An object in the Grassmann algebra of $N$ Grassmann numbers has $2^N$ complex coefficients, one for each possible combination of Grassmann numbers. Addition is performed as it would be for a vector in $\mathbb C^{2^N}$, and the multiplication rule is fixed by the fact that it distributes over addition, the anticommutativity of Grassmann numbers, and the fact that Grassmann numbers commute with complex numbers.

\section{Integration}
The Berezin integral~\cite{Berezin:1966nc} is defined by the rule
\begin{equation}
\int \d\eta (a + \eta b) = b
\text,
\end{equation}
where $a$ and $b$ are elements of the Grassmann algebra that do not contain $\eta$ itself.
Note that this integral is only a formal manipulation --- there is no sense in which it can be approximated by a limit of finite sums, as in Riemann integration. Integrations over Grassmann numbers, like the numbers themselves, anticommute.

The only form of the Berezin integral that will be relevant for physical applications discussed here is when all Grassmann variables are integrated over together. This is a linear map from the Grassmann algebra to the complex numbers, defined by taking the coefficient of the term containing all $N$ Grassmann numbers

\section{Coherent States}

The fermionic path integral is derived through the use of coherent states, defined as
\begin{equation}
\left|\psi\right> = e^{\psi^\dagger c^\dagger}\left|0\right>
\;\text{ and }\;
\left<\psi\right| = \left<0\right| e^{c \psi}
\text,
\end{equation}
where $c$ ($c^\dagger$) is the annihilation (creation) operator acting on a fermionic mode, and the state $\left|0\right>$ is the state in which that mode is unoccupied. This coherent state can be used to construct the idenitty operator on the Hilbert space of the theory:
\begin{equation}\label{eq:grassmann-identity}
I = \int \d \psi^\dagger \d \psi e^{\psi^\dagger \psi} \left|\psi\right>\left<\psi\right|
\text.
\end{equation}
The derivation of the fermionic path integral proceeds, at this point, in the usual way, with the insertion of many copies of this identity operator into the expression $\Tr e^{-\beta H}$ for the partition function.

Note that this so-called ``coherent state'' $\left|\psi\right>$ does not sit in the Hilbert space $\mathbb C^2$ of the theory at all: it is not equal to any complex linear combination of $\left|0\right>$ and $\left|1\right>$. Formally, we have allowed coefficients to be Grassmann-valued, thus extending the Hilbert space to a larger module. Crucially, the expression (\ref{eq:grassmann-identity}) is only the identity operator when acting on the original Hilbert space; it annihilates all objects in the module not part of that original space.

\renewcommand{\baselinestretch}{1}
\small\normalsize

\newpage
\bibliographystyle{unsrt}
\bibliography{References,Self,PDF}

\begin{thebibliography}{100}

\bibitem{Alexandru:2017czx}
Andrei Alexandru, Paulo~F. Bedaque, Henry Lamm, and Scott Lawrence.
\newblock {Deep Learning Beyond Lefschetz Thimbles}.
\newblock {\em Phys. Rev.}, D96(9):094505, 2017.

\bibitem{Alexandru:2018ddf}
Andrei Alexandru, Paulo~F. Bedaque, Henry Lamm, Scott Lawrence, and Neill~C.
  Warrington.
\newblock {Fermions at Finite Density in 2+1 Dimensions with Sign-Optimized
  Manifolds}.
\newblock {\em Phys. Rev. Lett.}, 121(19):191602, 2018.

\bibitem{Alexandru:2019dmv}
Andrei Alexandru, Paulo~F. Bedaque, and Scott Lawrence.
\newblock {Quantum Algorithms For Disordered Physics}.
\newblock 2019.

\bibitem{Lamm:2019bik}
Henry Lamm, Scott Lawrence, and Yukari Yamauchi.
\newblock {General Methods for Digital Quantum Simulation of Gauge Theories}.
\newblock {\em Phys. Rev.}, D100(3):034518, 2019.

\bibitem{Alexandru:2019nsa}
Andrei Alexandru, Paulo~F. Bedaque, Siddhartha Harmalkar, Henry Lamm, Scott
  Lawrence, and Neill~C. Warrington.
\newblock {Gluon Field Digitization for Quantum Computers}.
\newblock {\em Phys. Rev.}, D100(11):114501, 2019.

\bibitem{degrand2018composite}
Thomas DeGrand and Ethan Neil.
\newblock Composite phenomenology as a target for lattice qcd.
\newblock {\em arXiv preprint arXiv:1809.03473}, 2018.

\bibitem{douglas2004report}
Michael~R Douglas.
\newblock Report on the status of the yang-mills millenium prize problem.
\newblock 2004.

\bibitem{gattringer2009quantum}
Christof Gattringer and Christian Lang.
\newblock {\em Quantum chromodynamics on the lattice: an introductory
  presentation}, volume 788.
\newblock Springer Science \& Business Media, 2009.

\bibitem{montvay1997quantum}
Istv{\'a}n Montvay and Gernot M{\"u}nster.
\newblock {\em Quantum fields on a lattice}.
\newblock Cambridge University Press, 1997.

\bibitem{monroe2014large}
C~Monroe, R~Raussendorf, A~Ruthven, KR~Brown, P~Maunz, L-M Duan, and J~Kim.
\newblock Large-scale modular quantum-computer architecture with atomic memory
  and photonic interconnects.
\newblock {\em Physical Review A}, 89(2):022317, 2014.

\bibitem{preskill2018quantum}
John Preskill.
\newblock Quantum computing in the nisq era and beyond.
\newblock {\em Quantum}, 2:79, 2018.

\bibitem{lloyd1996universal}
Seth Lloyd.
\newblock Universal quantum simulators.
\newblock {\em Science}, pages 1073--1078, 1996.

\bibitem{feynman1999simulating}
Richard~P Feynman.
\newblock Simulating physics with computers.
\newblock {\em Int. J. Theor. Phys}, 21(6/7), 1999.

\bibitem{Jordan:2011ci}
Stephen~P. Jordan, Keith S.~M. Lee, and John Preskill.
\newblock {Quantum Computation of Scattering in Scalar Quantum Field Theories}.
\newblock 2011.
\newblock [Quant. Inf. Comput.14,1014(2014)].

\bibitem{Peskin:1995ev}
Michael~E. Peskin and Daniel~V. Schroeder.
\newblock {\em {An Introduction to quantum field theory}}.
\newblock Addison-Wesley, Reading, USA, 1995.

\bibitem{Wolff:2009ke}
Ulli Wolff.
\newblock {Precision check on triviality of phi**4 theory by a new simulation
  method}.
\newblock {\em Phys. Rev.}, D79:105002, 2009.

\bibitem{hubbard1963electron}
John Hubbard.
\newblock Electron correlations in narrow energy bands.
\newblock {\em Proceedings of the Royal Society of London. Series A.
  Mathematical and Physical Sciences}, 276(1365):238--257, 1963.

\bibitem{scalapino2007numerical}
DJ~Scalapino.
\newblock Numerical studies of the 2d hubbard model.
\newblock In {\em Handbook of High-Temperature Superconductivity}, pages
  495--526. Springer, 2007.

\bibitem{Ulybyshev:2019hfm}
Maksim Ulybyshev, Christopher Winterowd, and Savvas Zafeiropoulos.
\newblock {Taming the sign problem of the finite density Hubbard model via
  Lefschetz thimbles}.
\newblock 6 2019.

\bibitem{Ulybyshev:2019fte}
Maksim Ulybyshev, Christopher Winterowd, and Savvas Zafeiropoulos.
\newblock {Lefschetz thimbles decomposition for the Hubbard model on the
  hexagonal lattice}.
\newblock {\em Phys. Rev. D}, 101(1):014508, 2020.

\bibitem{nielsen1981no}
Holger~Bech Nielsen and Masao Ninomiya.
\newblock No-go theorum for regularizing chiral fermions.
\newblock Technical report, Science Research Council, 1981.

\bibitem{suzuki1976generalized}
Masuo Suzuki.
\newblock Generalized trotter's formula and systematic approximants of
  exponential operators and inner derivations with applications to many-body
  problems.
\newblock {\em Communications in Mathematical Physics}, 51(2):183--190, 1976.

\bibitem{trotter1959product}
Hale~F Trotter.
\newblock On the product of semi-groups of operators.
\newblock {\em Proceedings of the American Mathematical Society},
  10(4):545--551, 1959.

\bibitem{Wegner:1984qt}
F.~J. Wegner.
\newblock {Duality in Generalized Ising Models and Phase Transitions Without
  Local Order Parameters}.
\newblock {\em J. Math. Phys.}, 12:2259--2272, 1971.

\bibitem{kogut1975hamiltonian}
John Kogut and Leonard Susskind.
\newblock Hamiltonian formulation of wilson's lattice gauge theories.
\newblock {\em Physical Review D}, 11(2):395, 1975.

\bibitem{creutz1977gauge}
Michael Creutz.
\newblock Gauge fixing, the transfer matrix, and confinement on a lattice.
\newblock {\em Physical Review D}, 15(4):1128, 1977.

\bibitem{Luscher:1976ms}
M.~Luscher.
\newblock {Construction of a Selfadjoint, Strictly Positive Transfer Matrix for
  Euclidean Lattice Gauge Theories}.
\newblock {\em Commun. Math. Phys.}, 54:283, 1977.

\bibitem{Mitrjushkin:2002nk}
V.~K. Mitrjushkin.
\newblock {Transfer matrix and nonperturbative renormalization of fermionic
  currents in lattice QCD}.
\newblock 2002.

\bibitem{Creutz:1999zy}
Michael Creutz.
\newblock {Transfer matrices and lattice fermions at finite density}.
\newblock {\em Found. Phys.}, 30:487--492, 2000.

\bibitem{Caracciolo:2012qw}
Sergio Caracciolo and Fabrizio Palumbo.
\newblock {Transfer matrix for Kogut-Susskind fermions in the spin basis}.
\newblock {\em Phys. Rev.}, D87(1):014507, 2013.

\bibitem{Creutz:1976ch}
Michael Creutz.
\newblock {Gauge Fixing, the Transfer Matrix, and Confinement on a Lattice}.
\newblock {\em Phys. Rev.}, D15:1128, 1977.
\newblock [,132(1976)].

\bibitem{Metropolis:1953am}
N.~Metropolis, A.~W. Rosenbluth, M.~N. Rosenbluth, A.~H. Teller, and E.~Teller.
\newblock {Equation of state calculations by fast computing machines}.
\newblock {\em J. Chem. Phys.}, 21:1087--1092, 1953.

\bibitem{lattimer2012nuclear}
James~M Lattimer.
\newblock The nuclear equation of state and neutron star masses.
\newblock {\em Annual Review of Nuclear and Particle Science}, 62:485--515,
  2012.

\bibitem{miller2019psr}
MC~Miller, Frederick~K Lamb, AJ~Dittmann, Slavko Bogdanov, Zaven Arzoumanian,
  Keith~C Gendreau, S~Guillot, AK~Harding, WCG Ho, JM~Lattimer, et~al.
\newblock Psr j0030+ 0451 mass and radius from nicer data and implications for
  the properties of neutron star matter.
\newblock {\em The Astrophysical Journal Letters}, 887(1):L24, 2019.

\bibitem{Troyer:2004ge}
Matthias Troyer and Uwe-Jens Wiese.
\newblock {Computational complexity and fundamental limitations to fermionic
  quantum Monte Carlo simulations}.
\newblock {\em Phys. Rev. Lett.}, 94:170201, 2005.

\bibitem{arora2009computational}
Sanjeev Arora and Boaz Barak.
\newblock {\em Computational complexity: a modern approach}.
\newblock Cambridge University Press, 2009.

\bibitem{keldysh1965diagram}
Leonid~V Keldysh et~al.
\newblock Diagram technique for nonequilibrium processes.
\newblock {\em Sov. Phys. JETP}, 20(4):1018--1026, 1965.

\bibitem{Alexandru:2016gsd}
Andrei Alexandru, Gokce Basar, Paulo~F. Bedaque, Sohan Vartak, and Neill~C.
  Warrington.
\newblock {Monte Carlo Study of Real Time Dynamics on the Lattice}.
\newblock {\em Phys.\ Rev.\ Lett.}, 117(8):081602, 2016.

\bibitem{Alexandru:2017lqr}
Andrei Alexandru, Gokce Basar, Paulo~F. Bedaque, and Gregory~W. Ridgway.
\newblock {Schwinger-Keldysh formalism on the lattice: A faster algorithm and
  its application to field theory}.
\newblock {\em Phys.\ Rev.\ D}, 95(11):114501, 2017.

\bibitem{jordan2018bqp}
Stephen~P Jordan, Hari Krovi, Keith~SM Lee, and John Preskill.
\newblock Bqp-completeness of scattering in scalar quantum field theory.
\newblock {\em Quantum}, 2:44, 2018.

\bibitem{bremner2011classical}
Michael~J Bremner, Richard Jozsa, and Dan~J Shepherd.
\newblock Classical simulation of commuting quantum computations implies
  collapse of the polynomial hierarchy.
\newblock {\em Proceedings of the Royal Society A: Mathematical, Physical and
  Engineering Sciences}, 467(2126):459--472, 2011.

\bibitem{aaronson2017computational}
Scott Aaronson, Adam Bouland, Greg Kuperberg, and Saeed Mehraban.
\newblock The computational complexity of ball permutations.
\newblock In {\em Proceedings of the 49th Annual ACM SIGACT Symposium on Theory
  of Computing}, pages 317--327, 2017.

\bibitem{astrakhantsev2017temperature}
N~Yu Astrakhantsev, VV~Braguta, and A~Yu Kotov.
\newblock Temperature dependence of shear viscosity of su (3)-gluodynamics
  within lattice simulation.
\newblock {\em Journal of High Energy Physics}, 2017(4):101, 2017.

\bibitem{Parisi:1983ae}
G.~Parisi.
\newblock {The Strategy for Computing the Hadronic Mass Spectrum}.
\newblock {\em Phys. Rept.}, 103:203--211, 1984.

\bibitem{Lepage:1989hd}
G.~Peter Lepage.
\newblock {The Analysis of Algorithms for Lattice Field Theory}.
\newblock In {\em {Boulder ASI 1989:97-120}}, pages 97--120, 1989.

\bibitem{majumder2010hadron}
Abhijit Majumder and Berndt M{\"u}ller.
\newblock Hadron mass spectrum from lattice qcd.
\newblock {\em Physical review letters}, 105(25):252002, 2010.

\bibitem{Detmold:2020ncp}
William Detmold, Gurtej Kanwar, Michael~L. Wagman, and Neill~C. Warrington.
\newblock {Path integral contour deformations for noisy observables}.
\newblock 3 2020.

\bibitem{wirtinger1927formalen}
Wilhelm Wirtinger.
\newblock Zur formalen theorie der funktionen von mehr komplexen
  ver{\"a}nderlichen.
\newblock {\em Mathematische Annalen}, 97(1):357--375, 1927.

\bibitem{Alexandru:2018ngw}
Andrei Alexandru, Gökçe Başar, Paulo~F. Bedaque, Henry Lamm, and Scott
  Lawrence.
\newblock {Finite Density $QED_{1+1}$ Near Lefschetz Thimbles}.
\newblock {\em Phys. Rev.}, D98(3):034506, 2018.

\bibitem{witten2011analytic}
Edward Witten.
\newblock Analytic continuation of chern-simons theory.
\newblock 2011.

\bibitem{Cristoforetti:2012su}
Marco Cristoforetti, Francesco Di~Renzo, and Luigi Scorzato.
\newblock {New approach to the sign problem in quantum field theories: High
  density QCD on a Lefschetz thimble}.
\newblock {\em Phys. Rev.}, D86:074506, 2012.

\bibitem{cristoforetti2013monte}
Marco Cristoforetti, Francesco Di~Renzo, Abhishek Mukherjee, and Luigi
  Scorzato.
\newblock Monte carlo simulations on the lefschetz thimble: Taming the sign
  problem.
\newblock {\em Physical Review D}, 88(5):051501, 2013.

\bibitem{tanizaki2016lefschetz}
Yuya Tanizaki, Yoshimasa Hidaka, and Tomoya Hayata.
\newblock Lefschetz-thimble analysis of the sign problem in one-site fermion
  model.
\newblock {\em New Journal of Physics}, 18(3):033002, 2016.

\bibitem{Lawrence:2018mve}
Scott Lawrence.
\newblock {Beyond Thimbles: Sign-Optimized Manifolds for Finite Density}.
\newblock {\em PoS}, LATTICE2018:149, 2018.

\bibitem{Alexandru:2015xva}
Andrei Alexandru, Gökçe Basar, and Paulo Bedaque.
\newblock {Monte Carlo algorithm for simulating fermions on Lefschetz
  thimbles}.
\newblock {\em Phys.\ Rev.\ D}, 93(1):014504, 2016.

\bibitem{Alexandru:2015sua}
Andrei Alexandru, Gokce Basar, Paulo~F. Bedaque, Gregory~W. Ridgway, and
  Neill~C. Warrington.
\newblock {Sign problem and Monte Carlo calculations beyond Lefschetz
  thimbles}.
\newblock {\em JHEP}, 05:053, 2016.

\bibitem{Alexandru:2016lsn}
Andrei Alexandru, Gokce Basar, Paulo~F. Bedaque, Gregory~W. Ridgway, and
  Neill~C. Warrington.
\newblock {Fast estimator of Jacobians in the Monte Carlo integration on
  Lefschetz thimbles}.
\newblock {\em Phys.\ Rev.\ D}, 93(9):094514, 2016.

\bibitem{2016arXiv160904747R}
S.~{Ruder}.
\newblock {An overview of gradient descent optimization algorithms}.
\newblock {\em ArXiv e-prints}, September 2016.

\bibitem{2014arXiv1412.6980K}
D.~P. {Kingma} and J.~{Ba}.
\newblock {Adam: A Method for Stochastic Optimization}.
\newblock {\em ArXiv e-prints}, December 2014.

\bibitem{Alexandru:2018fqp}
Andrei Alexandru, Paulo~F. Bedaque, Henry Lamm, and Scott Lawrence.
\newblock {Finite-Density Monte Carlo Calculations on Sign-Optimized
  Manifolds}.
\newblock {\em Phys. Rev.}, D97(9):094510, 2018.

\bibitem{mori2018application}
Yuto Mori, Kouji Kashiwa, and Akira Ohnishi.
\newblock Application of a neural network to the sign problem via the path
  optimization method.
\newblock {\em Progress of Theoretical and Experimental Physics},
  2018(2):023B04, 2018.

\bibitem{thirring1958soluble}
Walter~E Thirring.
\newblock A soluble relativistic field theory.
\newblock {\em Annals of Physics}, 3(1):91--112, 1958.

\bibitem{Wilson:1974sk}
Kenneth~G. Wilson.
\newblock {Confinement of Quarks}.
\newblock pages 45--59, 2 1974.

\bibitem{Alexandru:2017oyw}
Andrei Alexandru, Gokce Basar, Paulo~F. Bedaque, and Neill~C. Warrington.
\newblock {Tempered transitions between thimbles}.
\newblock {\em Phys.\ Rev.\ D}, 96(3):034513, 2017.

\bibitem{Alexandru:2016ejd}
Andrei Alexandru, Gokce Basar, Paulo~F. Bedaque, Gregory~W. Ridgway, and
  Neill~C. Warrington.
\newblock {Monte Carlo calculations of the finite density Thirring model}.
\newblock {\em Phys.\ Rev.\ D}, 95(1):014502, 2017.

\bibitem{Fujii:2015vha}
Hirotsugu Fujii, Syo Kamata, and Yoshio Kikukawa.
\newblock {Monte Carlo study of Lefschetz thimble structure in one-dimensional
  Thirring model at finite density}.
\newblock {\em JHEP}, 12:125, 2015.
\newblock [Erratum: JHEP 09, 172 (2016)].

\bibitem{Fujii:2015bua}
Hirotsugu Fujii, Syo Kamata, and Yoshio Kikukawa.
\newblock {Lefschetz thimble structure in one-dimensional lattice Thirring
  model at finite density}.
\newblock {\em JHEP}, 11:078, 2015.
\newblock [Erratum: JHEP 02, 036 (2016)].

\bibitem{DiRenzo:2020vou}
Francesco Di~Renzo, Simran Singh, and Kevin Zambello.
\newblock {One-thimble regularisation of lattice field theories: is it only a
  dream?}
\newblock In {\em {37th International Symposium on Lattice Field Theory}},
  2019.

\bibitem{Lloyd1073}
Seth Lloyd.
\newblock Universal quantum simulators.
\newblock {\em Science}, 273(5278):1073--1078, 1996.

\bibitem{1997RuMaS..52.1191K}
A.~Yu {Kitaev}.
\newblock {Quantum computations: algorithms and error correction}.
\newblock {\em Russian Mathematical Surveys}, 52(6):1191--1249, Dec 1997.

\bibitem{solovay1999lie}
R~Solovay.
\newblock Lie groups and quantum circuits.
\newblock {\em preprint}, 1999.

\bibitem{dawson2005solovaykitaev}
Christopher~M. Dawson and Michael~A. Nielsen.
\newblock The solovay-kitaev algorithm, 2005.

\bibitem{deutsch1995universality}
David~Elieser Deutsch, Adriano Barenco, and Artur Ekert.
\newblock Universality in quantum computation.
\newblock {\em Proceedings of the Royal Society of London. Series A:
  Mathematical and Physical Sciences}, 449(1937):669--677, 1995.

\bibitem{nielsen2001quantum}
Michael~A Nielsen and Isaac~L Chuang.
\newblock Quantum computation and quantum information.

\bibitem{Toffoli:1980}
Tommaso Toffoli.
\newblock Reversible computing.
\newblock In {\em Proceedings of the 7th Colloquium on Automata, Languages and
  Programming}, pages 632--644, Berlin, Heidelberg, 1980. Springer-Verlag.

\bibitem{saeedi2013linear}
Mehdi Saeedi and Massoud Pedram.
\newblock Linear-depth quantum circuits for n-qubit toffoli gates with no
  ancilla.
\newblock {\em Physical Review A}, 87(6):062318, 2013.

\bibitem{barenco1995elementary}
Adriano Barenco, Charles~H Bennett, Richard Cleve, David~P DiVincenzo, Norman
  Margolus, Peter Shor, Tycho Sleator, John~A Smolin, and Harald Weinfurter.
\newblock Elementary gates for quantum computation.
\newblock {\em Physical review A}, 52(5):3457, 1995.

\bibitem{shende2008cnot}
Vivek~V Shende and Igor~L Markov.
\newblock On the cnot-cost of toffoli gates.
\newblock {\em arXiv preprint arXiv:0803.2316}, 2008.

\bibitem{puschel1999fast}
Markus P{\"u}schel, Martin R{\"o}tteler, and Thomas Beth.
\newblock Fast quantum fourier transforms for a class of non-abelian groups.
\newblock In {\em International Symposium on Applied Algebra, Algebraic
  Algorithms, and Error-Correcting Codes}, pages 148--159. Springer, 1999.

\bibitem{coppersmith2002approximate}
Don Coppersmith.
\newblock An approximate fourier transform useful in quantum factoring.
\newblock {\em arXiv preprint quant-ph/0201067}, 2002.

\bibitem{shor1994algorithms}
Peter~W Shor.
\newblock Algorithms for quantum computation: discrete logarithms and
  factoring.
\newblock In {\em Proceedings 35th annual symposium on foundations of computer
  science}, pages 124--134. Ieee, 1994.

\bibitem{grover1996fast}
Lov~K Grover.
\newblock A fast quantum mechanical algorithm for database search.
\newblock In {\em Proceedings of the twenty-eighth annual ACM symposium on
  Theory of computing}, pages 212--219, 1996.

\bibitem{Anderson:1958vr}
P.~W. Anderson.
\newblock {Absence of Diffusion in Certain Random Lattices}.
\newblock {\em Phys. Rev.}, 109:1492--1505, 1958.

\bibitem{doi:10.1142/7663}
Elihu Abrahams.
\newblock {\em 50 Years of Anderson Localization}.
\newblock WORLD SCIENTIFIC, 2010.

\bibitem{RevModPhys.80.1355}
Ferdinand Evers and Alexander~D. Mirlin.
\newblock Anderson transitions.
\newblock {\em Rev. Mod. Phys.}, 80:1355--1417, Oct 2008.

\bibitem{brandes2003anderson}
Tobias Brandes and Stefan Kettemann.
\newblock {\em Anderson localization and its ramifications: Disorder, phase
  coherence, and electron correlations}, volume 630.
\newblock Springer Science \& Business Media, 2003.

\bibitem{sha256ref}
{U.S. Department of Commerce and National Institute of Standards and
  Technology}.
\newblock {\em Secure Hash Standard - SHS: Federal Information Processing
  Standards Publication 180-4}.
\newblock CreateSpace Independent Publishing Platform, USA, 2012.

\bibitem{salmon:2011}
John~K Salmon, Mark~A Moraes, Ron~O Dror, and David~E Shaw.
\newblock Parallel random numbers: as easy as 1, 2, 3.
\newblock In {\em Proceedings of 2011 International Conference for High
  Performance Computing, Networking, Storage and Analysis}, page~16. ACM, 2011.

\bibitem{diehard}
George Marsaglia.
\newblock The marsaglia random number cdrom including the diehard battery of
  tests of randomness.
\newblock {\em http://www.stat.fsu.edu/pub/diehard/}, 2008.

\bibitem{dieharder}
Robert~G Brown.
\newblock https://webhome.phy.duke.edu/\allowbreak\~{}rgb/\allowbreak
  General/dieharder.php, 2017.

\bibitem{aleksandrowicz2019qiskit}
Gadi Aleksandrowicz, Thomas Alexander, Panagiotis Barkoutsos, Luciano Bello,
  Yael Ben-Haim, D~Bucher, FJ~Cabrera-Hern{\'a}ndez, J~Carballo-Franquis,
  A~Chen, CF~Chen, et~al.
\newblock Qiskit: An open-source framework for quantum computing.
\newblock {\em Accessed on: Mar}, 16, 2019.

\bibitem{smith2016practical}
Robert~S Smith, Michael~J Curtis, and William~J Zeng.
\newblock A practical quantum instruction set architecture.
\newblock {\em arXiv preprint arXiv:1608.03355}, 2016.

\bibitem{Alexandru:2019ozf}
Andrei Alexandru, Paulo~F. Bedaque, Henry Lamm, and Scott Lawrence.
\newblock {σ Models on Quantum Computers}.
\newblock {\em Phys. Rev. Lett.}, 123(9):090501, 2019.

\bibitem{PhysRevLett.113.020505}
J.~S. Pedernales, R.~Di~Candia, I.~L. Egusquiza, J.~Casanova, and E.~Solano.
\newblock Efficient quantum algorithm for computing $n$-time correlation
  functions.
\newblock {\em Phys. Rev. Lett.}, 113:020505, Jul 2014.

\bibitem{Ortiz:2000gc}
G.~Ortiz, J.~E. Gubernatis, E.~Knill, and R.~Laflamme.
\newblock {Quantum algorithms for fermionic simulations}.
\newblock {\em Phys. Rev.}, A64:022319, 2001.

\bibitem{cleve1998quantum}
Richard Cleve, Artur Ekert, Chiara Macchiavello, and Michele Mosca.
\newblock Quantum algorithms revisited.
\newblock {\em Proceedings of the Royal Society of London. Series A:
  Mathematical, Physical and Engineering Sciences}, 454(1969):339--354, 1998.

\bibitem{Jordan:1928wi}
Pascual Jordan and Eugene~P. Wigner.
\newblock {About the Pauli exclusion principle}.
\newblock {\em Z. Phys.}, 47:631--651, 1928.

\bibitem{bravyi2002fermionic}
Sergey~B Bravyi and Alexei~Yu Kitaev.
\newblock Fermionic quantum computation.
\newblock {\em Annals of Physics}, 298(1):210--226, 2002.

\bibitem{whitfield2016local}
James~D Whitfield, Vojt{\v{e}}ch Havl{\'\i}{\v{c}}ek, and Matthias Troyer.
\newblock Local spin operators for fermion simulations.
\newblock {\em Physical Review A}, 94(3):030301, 2016.

\bibitem{verstraete2005mapping}
Frank Verstraete and J~Ignacio Cirac.
\newblock Mapping local hamiltonians of fermions to local hamiltonians of
  spins.
\newblock {\em Journal of Statistical Mechanics: Theory and Experiment},
  2005(09):P09012, 2005.

\bibitem{valentiner1889endelige}
Hermann Valentiner.
\newblock {\em De endelige transformations-gruppers theori: avec un
  r{\'e}sum{\'e} en fran{\c{c}}ais}, volume~2.
\newblock Bianco Lunos, 1889.

\bibitem{flyvbjerg1985character}
Henrik Flyvbjerg.
\newblock Character table for the 1080-element point-group-like subgroup of su
  (3).
\newblock {\em Journal of mathematical physics}, 26(11):2985--2989, 1985.

\bibitem{Bhanot:1981xp}
G.~Bhanot and C.~Rebbi.
\newblock {Monte Carlo Simulations of Lattice Models With Finite Subgroups of
  SU(3) as Gauge Groups}.
\newblock {\em Phys. Rev.}, D24:3319, 1981.

\bibitem{Luscher:2010iy}
Martin Luscher.
\newblock {Properties and uses of the Wilson flow in lattice QCD}.
\newblock {\em JHEP}, 08:071, 2010.
\newblock [Erratum: JHEP03,092(2014)].

\bibitem{Francis:2015lha}
A.~Francis, O.~Kaczmarek, M.~Laine, T.~Neuhaus, and H.~Ohno.
\newblock {Critical point and scale setting in SU(3) plasma: An update}.
\newblock {\em Phys. Rev.}, D91(9):096002, 2015.

\bibitem{Kitazawa:2016dsl}
Masakiyo Kitazawa, Takumi Iritani, Masayuki Asakawa, Tetsuo Hatsuda, and
  Hiroshi Suzuki.
\newblock {Equation of State for SU(3) Gauge Theory via the Energy-Momentum
  Tensor under Gradient Flow}.
\newblock {\em Phys. Rev.}, D94(11):114512, 2016.

\bibitem{Kaplan:2017ccd}
David~B. Kaplan, Natalie Klco, and Alessandro Roggero.
\newblock {Ground States via Spectral Combing on a Quantum Computer}.
\newblock 2017.

\bibitem{Lamm:2018siq}
Henry Lamm and Scott Lawrence.
\newblock {Simulation of Nonequilibrium Dynamics on a Quantum Computer}.
\newblock {\em Phys. Rev. Lett.}, 121(17):170501, 2018.

\bibitem{Jordan:2014tma}
Stephen~P. Jordan, Keith S.~M. Lee, and John Preskill.
\newblock {Quantum Algorithms for Fermionic Quantum Field Theories}.
\newblock 2014.

\bibitem{messiah1962quantum}
Albert Messiah.
\newblock {\em Quantum mechanics, volume II}.
\newblock North-Holland Publishing Company, 1969.

\bibitem{Lamm:2019uyc}
Henry Lamm, Scott Lawrence, and Yukari Yamauchi.
\newblock {Parton Physics on a Quantum Computer}.
\newblock 2019.

\bibitem{PhysRevD.98.030001}
M.~Tanabashi et~al.
\newblock Review of particle physics.
\newblock {\em Phys. Rev. D}, 98:030001, Aug 2018.

\bibitem{Liu:1993cv}
Keh-Fei Liu and Shao-Jing Dong.
\newblock {Origin of difference between anti-d and anti-u partons in the
  nucleon}.
\newblock {\em Phys. Rev. Lett.}, 72:1790--1793, 1994.

\bibitem{Aglietti:1998mz}
U.~Aglietti, Marco Ciuchini, G.~Corbo, E.~Franco, G.~Martinelli, and
  L.~Silvestrini.
\newblock {Model independent determination of the shape function for inclusive
  B decays and of the structure functions in DIS}.
\newblock {\em Phys. Lett.}, B432:411--420, 1998.

\bibitem{Liu:1999ak}
Keh-Fei Liu.
\newblock {Parton degrees of freedom from the path integral formalism}.
\newblock {\em Phys. Rev.}, D62:074501, 2000.

\bibitem{Ji:2013dva}
Xiangdong Ji.
\newblock {Parton Physics on a Euclidean Lattice}.
\newblock {\em Phys. Rev. Lett.}, 110:262002, 2013.

\bibitem{Chambers:2017dov}
A.~J. Chambers, R.~Horsley, Y.~Nakamura, H.~Perlt, P.~E.~L. Rakow,
  G.~Schierholz, A.~Schiller, K.~Somfleth, R.~D. Young, and J.~M. Zanotti.
\newblock {Nucleon Structure Functions from Operator Product Expansion on the
  Lattice}.
\newblock {\em Phys. Rev. Lett.}, 118(24):242001, 2017.

\bibitem{Bali:2018spj}
Gunnar~S. Bali, Vladimir~M. Braun, Benjamin Gl{\"a\ss}le, Meinulf G{\"o}ckeler,
  Michael Gruber, Fabian Hutzler, Piotr Korcyl, Andreas Sch{\"af}er, Philipp
  Wein, and Jian-Hui Zhang.
\newblock {Pion distribution amplitude from Euclidean correlation functions:
  Exploring universality and higher-twist effects}.
\newblock {\em Phys. Rev.}, D98(9):094507, 2018.

\bibitem{Radyushkin:2017cyf}
A.~V. Radyushkin.
\newblock {Quasi-parton distribution functions, momentum distributions, and
  pseudo-parton distribution functions}.
\newblock {\em Phys. Rev.}, D96(3):034025, 2017.

\bibitem{Orginos:2017kos}
Kostas Orginos, Anatoly Radyushkin, Joseph Karpie, and Savvas Zafeiropoulos.
\newblock {Lattice QCD exploration of parton pseudo-distribution functions}.
\newblock {\em Phys. Rev.}, D96(9):094503, 2017.

\bibitem{Karpie:2018zaz}
Joseph Karpie, Kostas Orginos, and Savvas Zafeiropoulos.
\newblock {Moments of Ioffe time parton distribution functions from non-local
  matrix elements}.
\newblock {\em JHEP}, 11:178, 2018.

\bibitem{Karpie:2019eiq}
Joseph Karpie, Kostas Orginos, Alexander Rothkopf, and Savvas Zafeiropoulos.
\newblock {Reconstructing parton distribution functions from Ioffe time data:
  from Bayesian methods to Neural Networks}.
\newblock {\em JHEP}, 04:057, 2019.

\bibitem{Cichy:2019ebf}
Krzysztof Cichy, Luigi Del~Debbio, and Tommaso Giani.
\newblock {Parton distributions from lattice data: the nonsinglet case}.
\newblock 2019.

\bibitem{Alexandrou:2019lfo}
Constantia Alexandrou, Krzysztof Cichy, Martha Constantinou, Kyriakos
  Hadjiyiannakou, Karl Jansen, Aurora Scapellato, and Fernanda Steffens.
\newblock {Systematic uncertainties in parton distribution functions from
  lattice QCD simulations at the physical point}.
\newblock {\em Phys. Rev.}, D99(11):114504, 2019.

\bibitem{Sufian:2019bol}
Raza~Sabbir Sufian, Joseph Karpie, Colin Egerer, Kostas Orginos, Jian-Wei Qiu,
  and David~G. Richards.
\newblock {Pion Valence Quark Distribution from Matrix Element Calculated in
  Lattice QCD}.
\newblock {\em Phys. Rev.}, D99(7):074507, 2019.

\bibitem{Ma:2014jla}
Yan-Qing Ma and Jian-Wei Qiu.
\newblock {Extracting Parton Distribution Functions from Lattice QCD
  Calculations}.
\newblock {\em Phys. Rev.}, D98(7):074021, 2018.

\bibitem{Ma:2014jga}
Yan-Qing Ma and Jian-Wei Qiu.
\newblock {QCD Factorization and PDFs from Lattice QCD Calculation}.
\newblock {\em Int. J. Mod. Phys. Conf. Ser.}, 37:1560041, 2015.

\bibitem{Ma:2017pxb}
Yan-Qing Ma and Jian-Wei Qiu.
\newblock {Exploring Partonic Structure of Hadrons Using ab initio Lattice QCD
  Calculations}.
\newblock {\em Phys. Rev. Lett.}, 120(2):022003, 2018.

\bibitem{Liang:2019frk}
Jian Liang, Terrence Draper, Keh-Fei Liu, Alexander Rothkopf, and Yi-Bo Yang.
\newblock {Towards the nucleon hadronic tensor from lattice QCD}.
\newblock 2019.

\bibitem{Harmalkar:2020mpd}
Siddhartha Harmalkar, Henry Lamm, and Scott Lawrence.
\newblock Quantum simulation of field theories without state preparation.

\bibitem{Berezin:1966nc}
F.~A. Berezin.
\newblock {The method of second quantization}.
\newblock {\em Pure Appl. Phys.}, 24:1--228, 1966.

\end{thebibliography}

\end{document}